\newcommand{\magn}[1]{\left|#1\right|}
\newcommand{\bra}[1]{\left\langle#1\right|}
\newcommand{\ket}[1]{\left|#1\right\rangle}
\newcommand{\bracket}[2]{\left\langle#1|#2\right\rangle}
\newcommand{\heis}[2]{\vec{\sigma}_{#1} \cdot \vec{\sigma}_{#2}}
\newcommand{\XEB}{\textsf{XEB}}
\newcommand{\Exp}{\mathop{\mathbb{E}}}
\newcommand{\Var}{\mathop{\text{Var}}}
\theoremstyle{definition}
\newtheorem{definition}{Definition}
\theoremstyle{plain}
\newtheorem{theorem}{Theorem}
\newtheorem{conjecture}{Conjecture}
\newtheorem{fact}{Fact}
\begin{document}

\title{On the complexity of sampling from shallow Brownian circuits}

\author{Gregory Bentsen$^1$}
\author{Bill Fefferman$^2$}
\author{Soumik Ghosh$^2$}
\author{Michael J. Gullans$^3$}
\author{Yinchen Liu$^{4,5}$}

\affiliation{$^1$Department of Physics, The College of William \& Mary}
\affiliation{$^2$Department of Computer Science, The University of Chicago}
\affiliation{$^3$Joint Center for Quantum Information and Computer Science,
University of Maryland and NIST}
\affiliation{$^4$Department of Combinatorics and Optimization, University of Waterloo}
\affiliation{$^5$ Institute for Quantum Computing, University of Waterloo}

\begin{abstract}
While many statistical properties of deep random quantum circuits can be deduced, often rigorously and other times heuristically, by an approximation to global Haar-random unitaries, the statistics of \textit{constant-depth} random quantum circuits are generally less well-understood due to a lack of amenable tools and techniques. We circumvent this barrier by considering a related constant-time Brownian circuit model which shares many similarities with constant-depth random quantum circuits but crucially allows for direct calculations of higher order moments of its output distribution. Using mean-field (large-$n$) techniques, we fully characterize the output distributions of Brownian circuits at shallow depths and show that they follow a Porter-Thomas distribution, just like in the case of deep circuits, but with a \emph{truncated} Hilbert space. The access to higher order moments allows for studying the expected and \textit{typical} Linear Cross-entropy (\textsf{XEB}) benchmark scores achieved by an ideal quantum computer versus the state-of-the-art classical spoofers for shallow Brownian circuits. We discover that for these circuits, while the quantum computer typically scores within a constant factor of the expected value, the classical spoofer suffers from an exponentially larger variance. Numerical evidence suggests that the \textit{same} phenomenon also occurs in constant-depth discrete random quantum circuits, like those defined over the all-to-all architecture. We conjecture that the same phenomenon is also true for random brickwork circuits in high enough spatial dimension.
\end{abstract}

\maketitle

\section{Introduction}
\label{sec:intro}

\noindent A major goal in the near-term quantum era is to achieve quantum
advantage — the first experimental demonstration of a quantum
computational speedup.  All current quantum advantage experiments
involve implementations of “random quantum circuits” in which several
layers of independently random quantum gates are applied to a simple
initial state followed by measurement of all qubits in the
computational basis \cite{Arute2019-zu, zhu2021quantumcomputationaladvantage60qubit, morvan2023phasetransitionrandomcircuit}.

Why are such random quantum circuits hard to simulate classically?
There are two perspectives on this question.  The traditional
perspective is that the hard problem is sampling from the output
distribution of the random quantum circuit.  There is rigorous
complexity-theoretic evidence that such sampling problems cannot be
solved by any efficient classical algorithm (see e.g.,
\cite{aaronson2016complexitytheoreticfoundationsquantumsupremacy, Bouland_2018, movassagh2020quantumsupremacyrandomcircuits, Bouland_2022}).  On the other hand, verifying
that the experimental output distribution is close to the ideal
quantum output distribution is a daunting computational task.  Current
experiments use tests such as the linear cross-entropy benchmark (or
\XEB), which assigns a numeric value to experimentally observed output
samples.  But such benchmarks only correspond to the fidelity of the
experimental output state under certain reasonable, but still
restrictive, noise assumptions
\cite{morvan2023phasetransitionrandomcircuit,ware2023sharpphasetransitionlinear}.

A second perspective is that the hard problem solved by random quantum
circuit sampling experiments is not necessarily to faithfully sample
from a hard distribution but merely to score sufficiently well on the
\XEB~benchmark \cite{aaronson2016complexitytheoreticfoundationsquantumsupremacy, aaronson2020classicalhardnessspoofinglinear}.  In a similar spirit
to a Bell inequality violation, the hope would be that scoring above
some threshold is a hard problem for any efficient classical
algorithm, and so can be used to demonstrate quantum advantage
directly.  If such a hope could be proven true, it would have the
advantage of not needing to rely on noise assumptions (i.e, since the
\XEB~score is something we can calculate directly with few experimental
samples).  On the other hand, we have comparatively less rigorous
evidence backing the classical hardness of scoring well on \XEB.
Indeed, recently developed classical algorithms have, in the case of
shallow depth random quantum circuits, been able to score well on \XEB~in expectation over the choice of circuits without sampling from the
output distribution of the ideal circuit \cite{barak2020spoofing, gao2024limitations}.
\subsection{Our contributions}
In this paper we revisit these recent algorithms for spoofing \XEB~with
a classical computer.  Our goal is to study the \emph{variance} of
these algorithms, to understand if the \XEB~scores of such algorithms
are typically as high as that scored by the ideal quantum computer, or
if this high expected score is merely because of the algorithm
achieving anomalously good scores on sporadic outlier circuits. \\ This
is a problem which had been left open due to the notorious difficulty
of computing higher order moments of output distributions of random
quantum circuits (i.e., computing the variance requires computing the
third and fourth order moments of such distributions which is challenging for known techniques). This is especially true for shallow circuits; even though there have been recent works studying random shallow circuits, e.g. see \cite{watts2024quantumadvantagemeasurementinducedentanglement, mcginley2024measurementinducedentanglementcomplexityrandom, NappShallow2022}, most properties of shallow random circuits are ill-understood.

In our work, we circumvent this barrier by studying a related model of
random quantum computation called Brownian circuits.  This model,
while not formally equivalent to the standard model of random quantum
circuits with Haar-random gates, has often been used as a close proxy
in the physics literature (e.g., \cite{Lashkari2013towards, bentsen2021measurement,
jian2023linear}). In the confines of this related model we prove that the
state-of-the-art \XEB~spoofing algorithm \cite{gao2024limitations} does indeed suffer
from a high variance, which gives evidence that these algorithms
perform similarly on random quantum circuits.  The same techniques
also allow us to conclude that this high variance does not occur in ideal Brownian circuits and to establish a lower bound on the typical
\XEB~scores obtained by these circuits. Numerical evidence suggests that the same phenomenon occurs in constant-depth discrete all-to-all random quantum circuits. We conjecture that it is also true for other architectures, like brickwork random circuits in high enough spatial dimension.

\section{Our method}
\noindent\textbf{Brownian circuits:} Computing variance requires access to the 3rd and 4th order moments of the random circuit distribution. To the best of our knowledge, there are no known techniques for bounding such moments directly in Haar-random unitary circuits. We circumvent this barrier by considering a related Brownian circuit model. An $n$-qubit Brownian circuit is defined by the unitary
\begin{equation}
    U = \prod_t U_t = \prod_t \exp{\left[-i \sum_{j < k}\sum_{\alpha,\beta\in\{x,y,z\}} J_{jk}^{\alpha \beta}(t) \sigma_j^{\alpha} \sigma_k^{\beta} dt \right]},
\end{equation}
where $\vec{\sigma}_j = (\sigma_j^{x},\sigma_j^{y},\sigma_j^{z})$ are the Pauli matrices acting on qubit $j$ and $dt$ is an infinitesimal timestep that we take to $dt \rightarrow 0$ at the end of the calculation. Here the couplings $J_{jk}^{\alpha \beta}(t)$ are Gaussian white-noise random variables with zero mean and variance
\begin{equation}
    \mathbb{E} \left[ J_{jk}^{\alpha \beta}(t) J_{j'k'}^{\alpha' \beta'}(t') \right] = \frac{J}{n dt} \delta_{jj'} \delta_{kk'} \delta^{\alpha \alpha'} \delta^{\beta \beta'} \delta_{tt'}
\end{equation}
where $J$ is a constant that sets the energy scale. Brownian circuits are continuous analogues of all-to-all random quantum circuit composed of Haar-random 2-qubit gates, and they have been used to model quantum information scrambling \cite{Lashkari2013towards}, entanglement transitions \cite{bentsen2021measurement}, and the convergence to $k$-designs \cite{jian2023linear}.
% In contrast, we are the first to study Brownian circuits in the constant-depth regime. \textcolor{red}{is this true?} 
The reason to study Brownian circuits, in place of other ensembles, is because it shares many similarities with discrete random quantum circuits, while also allowing for direct calculation of higher moments. More specifically, Brownian circuits are technically appealing due to the following reasons:
\begin{itemize}
\item For every $n$-qubit unitary $U$, we use $q_U(x)=|\bra{x}U\ket{0^n}|^2$ to denote its output distribution. For an algorithm $A$, quantum or classical, that on input a circuit $U$ outputs an $n$-bit string $x\in\{0,1\}^n$ drawn from some distribution $A_U$, the \textsf{XEB} score attained by the algorithm $A$ on the input $U$ is defined by
\begin{equation}
\textsf{XEB}(U,A)=2^n\sum_{x\in\{0,1\}^n}A_U(x)q_U(x)=2^n\Exp_{x\sim A_U}[q_U(x)].
\end{equation}

Whereas approaches are only known for bounding the first and second order moments of the output distribution of random quantum circuit, see for e.g. \cite{hunterjones2019unitarydesignsstatisticalmechanics, Dalzell_2022}, the Brownian circuit technology allows us to readily compute the $k$-th order moments of the distribution up to $k < D$ where $D=2^n$ is the dimension of the Hilbert space. 

%In principle this allows us to compute replica limits that can be used for calculating, for example, trace norms \cite{x}.

\item These circuits lead to an effective Hamiltonian that operates in continuous time, which is distinct from the discrete nature of random circuits. This continuous-time dynamics allows us to make direct contact with path integral methods that greatly simplify the analysis and to bring powerful tools from quantum field theory to bear on the problem. 

\item The all-to-all interactions in our model allow us to make use of large-$n$ (mean-field) methods in which the effective dynamics becomes semi-classical and readily solvable. These large-$n$ `saddle-point' methods are standard treatments for mean-field-theory models in the fields of classical spin glasses \cite{edwards1975theory,sherrington1975solvable,thouless1977solution,almeida1978stability,parisi1979toward,gross1984simplest,mezard1986spin,young1997spin}, quantum spin glasses \cite{bray1980replica,sachdev1993gapless,miller1993zerotemperature,read1995landau,kopec1995continuous,georges2000mean,georges2001quantum}, condensed matter physics \cite{parcollet1999nonfermi,fitzpatrick2014nonfermi,sachdev2015bekenstein,werman2017nonquasiparticle,davison2017thermoelectric,gu2020notes,chowdhury2022sachdev}, high-energy physics \cite{brezin1978planar,witten1980expansion,coleman1985aspects,thooft1993planar,moshe2003quantum,kitaev2015simple,maldacena2016remarks,fu2016numerical,gu2017local,berkooz2017higher,fu2017supersymmetric,kitaev2018soft,hartnoll2018holographic,sarosi2018ads2,saad2019semiclassicalrampsykgravity,rosenhaus2019introduction,berkooz2021complex}, and, more recently, quantum information science \cite{bentsen2021measurement,sahu2022entanglement,jian2023linear}.
\end{itemize} 
In the course of our calculations, we introduce new connections between tools from condensed matter and high-energy physics, like large-$n$ (mean-field) methods, and quantum information theory, which could be of independent interest. 
\vspace{6pt}

\noindent\textbf{Agreement with numerical simulations:} We perform numerical simulations to verify that the behaviors predicted by Brownian circuits are exhibited by discrete random unitary circuits in the all-to-all architecture of sufficiently high constant depth. This is consistent with the fact that we must assume $JT \gg 1$ in our Brownian circuit calculations (see Appendices for details). It indicates that just as in the case of deep circuits, the Brownian model is also a good approximation to the dynamics of shallow all-to-all random quantum circuits. \\

\noindent\textbf{On the complexity of shallow depth brickwork circuits:} Our work poses an interesting open question regarding whether the behavior of shallow depth random brickwork circuits, in a high enough spatial dimension, is asymptotically similar to random Brownian circuits. We conjecture that a similar behavior indeed holds; the rigorous statement can be found in \Cref{conj:D-dimensional}. If a connection can indeed be justified, then the toolkit we developed for Brownian circuits can be used to argue about the complexity of sampling from shallow depth random brickwork circuits.

\section{Our results in context}
\noindent We now describe our results in more detail.
\vspace{6pt}

\noindent\textbf{Typical \textsf{XEB} scores of the quantum computer vs. \cite{gao2024limitations} spoofer:} For every $n$-qubit unitary $U$, we use $q_U(x)=|\bra{x}U\ket{0^n}|^2$ to denote its output distribution. For an algorithm $A$, quantum or classical, that on input a circuit $U$ outputs an $n$-bit string $x\in\{0,1\}^n$ drawn from some distribution $A_U$, the \textsf{XEB} score attained by the algorithm $A$ on the input $U$ is defined by
\begin{equation}
\textsf{XEB}(U,A)=2^n\sum_{x\in\{0,1\}^n}A_U(x)q_U(x)=2^n\Exp_{x\sim A_U}[q_U(x)].
\end{equation}
We use $\textsf{XEB}(U,U)$ to denote the score obtained by a perfect quantum computer performing random circuit sampling. Let $\mathcal{B}$ denote the distribution of $n$-qubit time-$T$ Brownian circuits and let $\mathcal{D}$ denote the distribution of $n$-qubit depth-$d$ all-to-all random quantum circuits composed of Haar-random $2$-qubit gates. Our first calculations confirm that the expected \textsf{XEB} score achieved by the quantum computer for Brownian circuits is
\begin{equation}
\Exp_{U\sim\mathcal{B}}\left[\textsf{XEB}(U,U)\right]=2(1+e^{-24JT})^n,
\end{equation}
which has the same form of scaling w.r.t to $n$ and $T$ as a provable lower bound 
\begin{equation}
\Exp_{U\sim\mathcal{D}}\left[\textsf{XEB}(U,U)\right]\geq\left(1+\frac{1}{3}\left(\frac{2}{5}\right)^d\right)^n
\end{equation}
for $\mathcal{D}$ \cite{Dalzell_2022} w.r.t $n$ and $d$. Next, working out the 4th moment expression
\begin{equation}
\Exp_{U\sim\mathcal{B}}\left[\left(\Exp_{x\sim q_U}[q_U(x)]\right)^2\right]=\frac{4!}{2^{2n}} \left( 1 + e^{-24 J T} \right)^{2n}
\label{eq:quantum_4th_moment}
\end{equation}
allows us to conclude via the Paley-Zygmund inequality that with probability at least $\frac{1}{24}$ over $U\sim\mathcal{B}$, $\XEB(U,U)\geq \frac{1}{2}\Exp_{V\sim\mathcal{B}}[\XEB(V,V)]$, showing that the quantum computer typically achieves an \textsf{XEB} score within a constant factor of the expected value for shallow Brownian circuits. By computing another 3rd moment expression, we are also able to upper bound the number of string samples required to estimate $\textsf{XEB}(U,U)$ by an empirical average.

Now that we have understood the typical behavior of the quantum computer, we move on to analyze the spoofing algorithm of \cite{gao2024limitations}. The algorithm works by ``cutting up" the circuit horizontally into many small slices, deleting any gate that connects different slices, and then brute-force simulate each disjoint slice separately to sample from the resultant product distribution. While the slicing strategy originally stated in \cite{gao2024limitations} is tailored to geometrically-local circuits, we give a slight generalization and show that the general principle of \cite{gao2024limitations} does indeed translate to all-to-all circuits. Using $A$ to denote the spoofing algorithm, we prove, perhaps surprisingly given the simplicity of the algorithm, that
\begin{equation}
\Exp_{U\sim\mathcal{D}}[\textsf{XEB}(U,A)]\geq \left(1+\left(\frac{1}{15}\right)^d\right)^{\frac{n}{d^2+1}},
\end{equation}
which implies its scaling in $n$ is of the same form as that of the quantum computer for constant $d$. We show that (strictly speaking) another variant of the algorithm natural for Brownian circuits, again denoted by $A$, also achieves a similar scaling of
\begin{equation}
\Exp_{U\sim\mathcal{B}}[\textsf{XEB}(U,A)]=(1+e^{-24JT})^n.
\end{equation}
What about the variance of the spoofing algorithm for Brownian circuits? It turns out that
\begin{equation}\Exp_{U\sim\mathcal{B}}\left[\left(\Exp_{x\sim A_U}[q_U(x)]\right)^2\right]=\frac{2^{K+1}}{2^{2n}}(1+e^{-24JT})^{2n}
\label{eq:harvard_4th_moment}
\end{equation}
where $K$ is the number of disjoint slices created by the algorithm. Since $K$ needs to be at least $\Omega(n/\log n)$, by comparing \Cref{eq:quantum_4th_moment} and \Cref{eq:harvard_4th_moment}, we see that the spoofing algorithm suffers from an exponentially larger variance compared to the quantum computer for shallow Brownian circuits. By numerically estimating these 4th moment quantities w.r.t $\mathcal{D}$, we observe the same effect in shallow all-to-all random quantum circuits, which is evidence in favor of a similar conclusion.
\vspace{6pt}

\noindent\textbf{Local Porter-Thomas behavior:} It is known that at \emph{large depths}, both the Brownian circuit ensemble, as well as the all-to-all connected random circuit ensemble, converge to the Porter-Thomas distribution. However, it is less clear what happens at \emph{shallow} depths. As our second result, we fully characterize the output distribution of Brownian circuits at shallow depths and show that the return probability of observing all zeros in the output, with the randomness over the choice of the circuit, follows a Porter-Thomas distribution, just like in the case of deep circuits, but with a \emph{truncated} Hilbert space. The extent of the truncation depends on the depth of the circuit---shallower the depth, more is the truncation. More concretely, we show that for the Brownian circuit ensemble $\mathcal{B}$, the probability density function of the return probability $p=p(0^n)$ is
\begin{equation}
    Q(p) = b e^{-b p},
\end{equation}
where $b$, the effective Hilbert space dimension, depends on the time as
\begin{equation}
    b = 2^n (1+e^{-12 J T})^{-n}
\end{equation}
for Brownian circuit evolution time $T$, where the evolution time $T$ plays the role of circuit depth.

Note that the output distribution of a deep random quantum circuit exhibits a Porter-Thomas distribution, as verified in \cite{Arute2019-zu}. There are signatures of anti-concentration---meaning, the support encompasses almost all strings and they each have a substantial probability mass---at logarithmic depth itself \cite{Dalzell_2022, Deshpande_2022}. This is also borne out in the Brownian case by our analysis, which indicates deep similarities between the two models and indicates that the output distribution of shallow random quantum circuits could also follow a truncated Porter-Thomas distribution.

%Our results posit two interesting perspectives on the complexity of random circuit sampling at shallow depths. If the connections between Brownian circuits and discrete unitary circuits can be made more formal, these results strongly suggest verifiable hardness of sampling at shallow depths for discrete unitary circuits. If there is a discrepancy between the two behaviors, it raises interesting questions about the nature of differences between these two models and raises the need to find better ways to compute, or reason about, higher moments of shallow random circuits. \\

\section{Background}
\label{sec:setup}
% At a bare minimum, quantum advantage seeks to identify a computational problem that is strictly ``easier'' to solve using a (perfect) quantum computer than classical computers. For this express purpose, it is therefore natural to consider a task which is by definition quantumly ``easy'', seemingly difficult to design classical algorithms for, and try to prove its classical hardness. Random circuit sampling (RCS) is a leading candidate born out of this approach. 

A random quantum circuit is a circuit drawn randomly according to some random circuit distribution. A random circuit distribution is defined by a circuit architecture, which itself could involve randomness, specifying the placement of each $2$-qubit gate in the circuit and a probability distribution over $\mathbb{U}(4)$ to draw each $2$-qubit gate independently at random from. In this paper, we will only consider random quantum circuits composed of gates drawn from the Haar measure, which is the continuous uniform distribution over $\mathbb{U}(4)$. As such, it is unambiguous to refer to a random circuit distribution by the circuit architecture that induces it. This paper mainly focuses on the all-to-all circuit architecture which we define as follows.

\begin{definition}[All-to-all Random Quantum Circuit Architecture]
For every even integer $n\geq 2$ and integer $d\geq 1$, the $n$-qubit depth-$d$ all-to-all random quantum circuit architecture is denoted by $\mathcal{D}_{n,d}$. In this architecture, for every layer of gates $i\in\{1,\ldots,d\}$, a uniformly random permutation $\sigma_i:\{1,\ldots,n\}\rightarrow\{1,\ldots,n\}$ is chosen, and for ever $j\in\{1,\ldots,\frac{n}{2}\}$, there is a random $2$-qubit gate acting on qubits $\sigma_i(2j-1)$ and $\sigma_i(2j)$.
\end{definition}

\noindent Our definition ensures that every qubit will be acted on by at least one gate even at $d=1$ and that each layer consists of exactly $\frac{n}{2}$ $2$-qubit gates.

We are now ready to define the task of random circuit sampling.

\begin{definition}[Random Circuit Sampling (\textsf{RCS})]
On input an integer $c=O(1)$ and an $n$-qubit random quantum circuit $U$ drawn from some $n$-qubit random circuit distribution, the task is to output a bit string sampled from some distribution $P$ over $\{0,1\}^n$ such that
\begin{equation}
\sum_{x\in\{0,1\}^n}\left|P(x)-\left|\bra{x}U\ket{0^n}\right|^2\right|\leq\frac{1}{n^c}.
\label{eq:RCS}
\end{equation}
\end{definition}

\noindent Note that besides the need to truncate each entry of a gate to finite precision, \textsf{RCS} constitutes a proper classical-input-classical-output computational task. The $\frac{1}{n^c}$ error tolerance is not meant for tolerating noise in a quantum computer but rather to ensure fairness for classical randomized algorithms because for example, in a model of computation where a classical computer can only flip fair coins, it is impossible for the computer to flip a $(\frac{1}{3},\frac{2}{3})$-biased coin exactly. Consequently, it may be impossible for a classical computer to sample exactly from the output distribution of $U$, hence some approximation errors are inevitable. Nevertheless, \Cref{eq:RCS} does correspond to the accuracy achievable by fault-tolerant quantum computers.

It is evident that \textsf{RCS} can be solved efficiently using a (perfect, all-to-all connected) quantum computer by just running the circuit.

\begin{definition}[The Quantum Simulation Algorithm]
\label{def:QS}
On input an $n$-qubit random quantum circuit $U$, prepare the state $U\ket{0^n}$ by running the circuit $U$ on input $\ket{0^n}$. Then measure all qubits in the computational basis to obtain a bit string $x\in\{0,1\}^n$ and output $x$.
\end{definition}

Clearly, the above algorithm samples from the output distribution of $U$ given by $q_U(x)=|\bra{x}U\ket{0^n}|^2$ for every $x\in\{0,1\}^n$ and satisfies \Cref{eq:RCS}. It may even seem more intuitive to take \Cref{def:QS} as the definition of \textsf{RCS}, but the perspective of viewing ``doing \textsf{RCS} on a quantum computer'' (i.e. \Cref{def:QS}) as an algorithm will be important to our discussions.

There is a long line of work \cite{aaronson2016complexitytheoreticfoundationsquantumsupremacy, Bouland_2018, movassagh2020quantumsupremacyrandomcircuits, Bouland_2022} towards proving the classical hardness of \textsf{RCS} for which the ultimate goal can be encapsulated in the following conjecture.
\begin{conjecture}[\textsf{RCS} Supremacy Conjecture]
\label{conj:RCS}
There exists an integer $a\geq 1$ and (a uniform family of) random circuit distributions $\{\mathcal{C}_n\}_{n\geq 2}$ such that if there exists a randomized classical algorithm that can solve \textsf{RCS} with success probability at least $1-\frac{1}{n^a}$ over the choice of the circuit $U\sim\mathcal{C}_n$ in time polynomial in $n$, then the polynomial hierarchy collapses. 
\end{conjecture}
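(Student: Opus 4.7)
The plan is to follow the Aaronson--Arkhipov template adapted for \textsf{RCS} as in \cite{aaronson2016complexitytheoreticfoundationsquantumsupremacy, Bouland_2018, movassagh2020quantumsupremacyrandomcircuits, Bouland_2022}, and argue by contradiction: assume a randomized classical algorithm $\mathcal{A}$ running in time $\mathrm{poly}(n)$ solves \textsf{RCS} with success probability at least $1-\frac{1}{n^a}$ over $U\sim\mathcal{C}_n$. The goal is to leverage $\mathcal{A}$ into a $\mathsf{BPP}^{\mathsf{NP}}$ machine that approximates $\#P$-hard quantities accurately enough to collapse $\mathsf{PH}$ via Toda's theorem.

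The first step is Stockmeyer's approximate counting theorem: given $\mathcal{A}$ as an oracle, a $\mathsf{BPP}^{\mathsf{NP}}$ machine can estimate $\Pr[\mathcal{A}(U)=x]$ to multiplicative precision $1\pm\frac{1}{\mathrm{poly}(n)}$ for any fixed $(U,x)$. Combined with the $\ell_1$ bound in \Cref{eq:RCS}, this yields an estimator for the output probability $q_U(x)=|\bra{x}U\ket{0^n}|^2$ that is correct to additive precision $\frac{1}{2^n\mathrm{poly}(n)}$ on at least a $1-\frac{1}{\mathrm{poly}(n)}$ fraction of pairs $(U,x)$ with $U\sim\mathcal{C}_n$ and $x\sim q_U$. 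Under an appropriate anticoncentration property of $\mathcal{C}_n$---which our truncated Porter--Thomas analysis confirms for the Brownian ensemble $\mathcal{B}$---this additive approximation is strong enough to be a multiplicative approximation on a large fraction of such pairs.

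The second step, the crux of the argument, is a worst-to-average-case reduction. I would connect a worst-case instance $U^\star$ to a typical $U\sim\mathcal{C}_n$ along a low-degree polynomial path in circuit space (for example the Cayley path or a truncated Taylor series, as in \cite{Bouland_2018, Bouland_2022}), apply the average-case estimator at several random points along the path to obtain noisy evaluations of $q_{U(s)}(x)$, and then reconstruct $q_{U^\star}(x)$ via a robust polynomial reconstruction procedure such as Berlekamp--Welch. One then shows that worst-case multiplicative approximation of $q_{U^\star}(x)$ is $\#P$-hard by a gadget reduction from, say, the permanent, closing the loop.

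The main obstacle, and the reason \Cref{conj:RCS} remains open despite a decade of effort, lies entirely in the robustness of this second step. Polynomial extrapolation amplifies additive errors exponentially in the degree of the path, so current reductions only rule out classical algorithms that achieve approximations to $q_U(x)$ exponentially sharper than plain \textsf{RCS} requires. Closing this ``average-case robustness gap''---ideally by exploiting the refined distributional information that the Brownian mean-field techniques of this paper grant access to, such as higher-moment control of $q_U$ over $\mathcal{C}_n$---is the decisive and currently missing ingredient in any proof of \Cref{conj:RCS}.
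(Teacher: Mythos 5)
This is a \emph{conjecture}, not a theorem, and the paper offers no proof of it --- indeed, immediately after the statement the authors write ``As stated, \Cref{conj:RCS} is yet to be proven; see \cite{Bouland_2022, krovi2022average} for the current frontier.'' So there is nothing in the paper for your argument to be compared against, and your proposal quite correctly does not claim to close the gap: you reproduce the standard Stockmeyer-plus-worst-to-average-case template from the cited literature and then honestly identify the average-case robustness of the polynomial extrapolation step as the open problem. That diagnosis is accurate and matches the paper's own framing. Two small calibration notes: first, the conjecture as stated quantifies over the \emph{existence} of some uniform family $\{\mathcal{C}_n\}$, so a proof would only need the anticoncentration and worst-to-average machinery to work for one well-chosen ensemble, not for all of them; your sketch implicitly treats $\mathcal{C}_n$ as given, which is fine but worth flagging. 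Second, the paper's Brownian higher-moment technology controls moments of the output distribution and the \textsf{XEB} spoofing variance, but it does not obviously bear on the error-robustness of Cayley-path or Berlekamp--Welch extrapolation across the \emph{circuit} ensemble, which lives in a different parameter; the optimistic closing sentence suggesting these tools might close the gap is speculative and should be presented as such rather than as part of a proof plan.
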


As stated, \Cref{conj:RCS} is yet to be proven; see \cite{Bouland_2022, krovi2022average} for the current frontier. As far as we know, it is consistent with all known results that \Cref{conj:RCS} holds for the all-to-all architecture $\mathcal{D}_{n,d}$ if the circuit depth $d$ is at least some sufficiently large constant \cite{NappShallow2022}. While efforts toward proving \Cref{conj:RCS} have faced obstacles, it could help our understanding by identifying conjectures stronger than \Cref{conj:RCS} that are not obviously false \cite{Aaronson2017supremacy}. A few of these conjectures are based on a quantity known as the linear cross-entropy benchmark.

\begin{definition}[Linear Cross-entropy Benchmark (\XEB)]
For an algorithm $A$ that on input a quantum circuit $U$ outputs an $n$-bit string $x\in\{0,1\}^n$ drawn from some distribution $A_U$, the \textsf{XEB} score attained by the algorithm $A$ on the input $U$ is defined by
\begin{equation}
\textsf{XEB}(U,A)=2^n\sum_{x\in\{0,1\}^n}A_U(x)q_U(x)=2^n\Exp_{x\sim A_U}[q_U(x)].
\end{equation}
We use $\textsf{XEB}(U,U)$ to denote the score obtained using \Cref{def:QS}. The expected \textsf{XEB} score attained by the algorithm $A$ w.r.t an $n$-qubit random circuit distribution $\mathcal{C}$ is
\begin{equation}
\Exp_{U\sim\mathcal{C}}[\textsf{XEB}(U,A)].
\end{equation}
\end{definition}

One of the main tasks concerning \textsf{XEB} is to design algorithms, quantum or classical, that achieve high expected \textsf{XEB} scores. Obviously, the largest possible score is achieved by the algorithm that finds and outputs a string $x^*$ that maximizes $\left|\bra{x}U\ket{0^n}\right|^2$ over all $x\in\{0,1\}^n$. However, for our purpose, it is only meaningful to consider expected \textsf{XEB} w.r.t random circuit distributions for which \Cref{conj:RCS} is believed to hold, so it is unlikely that this ``optimal'' strategy can be carried out efficiently for those distributions. On the opposite extreme, the trivial algorithm that outputs a uniformly random $n$-bit string receives a score of $1$ regardless of what the random circuit distribution is. For the quantum computer, \textsf{RCS} in the sense of \Cref{def:QS} lies in between the two extremes. Starting from the somewhat deep circuit depth regime of $d\geq\Omega(\log n)$, it is known that \Cref{def:QS} achieves a score approaching $2$ from above as $d$ increases for the all-to-all architecture \cite{Dalzell_2022}. Furthermore, it is conjectured that there is a gap between the performance of \Cref{def:QS} and the best possible classical algorithm.

\begin{conjecture}[Classical Hardness of \textsf{XEB} Conjecture \cite{Arute2019-zu}]
\label{conj:google}
No efficient classical algorithm can achieve an expected \textsf{XEB} score of $1+\varepsilon$ for any constant $\varepsilon$ for deep random quantum circuits.
\end{conjecture}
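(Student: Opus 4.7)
The plan is to attack \Cref{conj:google} by reduction to a complexity-theoretic hardness assumption, since an unconditional proof would almost certainly contain a proof of $\textsf{BQP} \neq \textsf{BPP}$. I would first pin down what ``deep'' means quantitatively---natural thresholds are $d = \Omega(\log n)$, after which the all-to-all ensemble is known to anti-concentrate \cite{Dalzell_2022}, or $d = \operatorname{poly}(n)$, after which the ensemble forms an approximate $k$-design for large $k$. With a suitable depth regime fixed, the first and second moments of $q_U$ coincide with those of a Haar-random state up to exponentially small error, which is the platform on which everything else will be built.

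The core of my attempt would follow an \textsf{XQUATH}-style reduction. Suppose, toward contradiction, that an efficient classical algorithm $A$ achieves $\Exp_{U\sim\mathcal{C}}[\XEB(U,A)] \geq 1 + \varepsilon$. The key idea is to convert $A$ into a nontrivial additive estimator for $q_U(x)$: on input $(U,x)$ with $x \sim A_U$, output a quantity calibrated against the observed average score so that the resulting estimate beats the trivial baseline of $2^{-n}$ by a margin polynomially related to $\varepsilon$. Since $\Exp_{x \sim A_U}[q_U(x)] \geq (1+\varepsilon)\cdot 2^{-n}$, any such estimator directly contradicts the conjectured classical intractability of average-case additive estimation of output probabilities of deep random circuits. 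Making this reduction quantitatively rigorous requires fourth-moment control on both $A_U$ and $q_U$, which is exactly the type of calculation the present paper develops new tools for; indeed \Cref{eq:quantum_4th_moment} and \Cref{eq:harvard_4th_moment} already illustrate the relevant comparisons in the Brownian setting.

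The principal obstacle, and the reason the conjecture remains open, is the ``anomalous outlier'' phenomenon highlighted throughout this paper. Both \cite{barak2020spoofing} and \cite{gao2024limitations} achieve $1+\varepsilon$ in the shallow regime by concentrating $A_U$ on sporadic heavy outputs without ever producing a pointwise estimate of $q_U$, and the estimator-extraction argument above breaks down against such algorithms because their variance blows up. To rule them out at large $d$, one must prove a \emph{two-sided} statement: any efficient $A$ with $\Exp[\XEB(U,A)] \geq 1+\varepsilon$ on deep circuits must also have bounded $\Exp_U[(\Exp_{x\sim A_U}[q_U(x)])^2]$. Heuristically, one expects this to hold because the deep-circuit output is sufficiently ``Porter-Thomas flat'' that there are no exploitable outliers---an intuition that the truncated Porter-Thomas result of this paper makes precise at shallow Brownian depth, and that one would want to extend upward in $d$.

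I expect the hard step to be this variance bound, not the final reduction. If it can be proved that efficient high-\XEB{} algorithms on deep circuits necessarily have variance within a constant factor of the quantum value (mirroring the phenomenon this paper establishes for Brownian circuits on the quantum side), then the \textsf{XQUATH}-style reduction goes through cleanly and \Cref{conj:google} follows conditionally on \textsf{XQUATH}. Promoting that to an unconditional statement, or basing it on non-collapse of the polynomial hierarchy via a worst-case to average-case reduction in the style of \cite{Bouland_2022, krovi2022average}, would be a further major step; my proposal should thus be read as a reduction of \Cref{conj:google} to the cleaner conjecture that no efficient classical spoofer can simultaneously achieve both high expected \XEB{} and bounded variance on deep random circuits.
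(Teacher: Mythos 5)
This statement is a \emph{conjecture}, not a theorem; the paper does not prove it and does not claim to. It is quoted from the literature (\cite{Arute2019-zu}) precisely because it remains open, and the paper's contribution is orthogonal: it studies the variance of spoofing algorithms in a tractable Brownian model to gather evidence about \emph{shallow} circuits, not to resolve the deep-circuit hardness conjecture. So there is no paper proof to compare your proposal against.

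That said, your proposal is correctly calibrated as a proposal: you recognize that an unconditional proof is out of reach (it would imply $\textsf{BQP}\neq\textsf{BPP}$), you correctly point to \textsf{XQUATH}-style reductions as the natural conditional route, and you correctly identify the central obstruction --- that spoofers like \cite{barak2020spoofing,gao2024limitations} achieve high expected \textsf{XEB} without ever pointwise-estimating $q_U$, so the naive ``extract an estimator'' argument fails. Your suggested two-sided lemma (any efficient high-\textsf{XEB} spoofer on deep circuits must have bounded second moment of $\Exp_{x\sim A_U}[q_U(x)]$) is a plausible missing ingredient, and it dovetails nicely with the variance-separation phenomenon this paper establishes in the Brownian setting. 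But be aware of two caveats: (i) even granting that lemma, you would still need to argue that a low-variance, high-\textsf{XEB} sampler yields an \textsf{XQUATH}-violating estimator, and that step is not automatic --- a sampler gives you samples, not probability values, and converting the former into an additive estimate of $q_U(x)$ for a \emph{fixed} $x$ is exactly the kind of thing that is hard in general; and (ii) the reduction would establish \Cref{conj:google} only under \textsf{XQUATH}, whereas the conjecture as stated is unconditional. Your proposal is best understood as reducing \Cref{conj:google} to a conjunction of \textsf{XQUATH} and a variance-rigidity conjecture, which is a sensible research direction but does not settle the statement.
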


\noindent In other words, \Cref{conj:google} asserts that in the deep circuit depth regime, the trivial algorithm that samples uniformly at random from $\{0,1\}^n$ is essentially the optimal efficient classical algorithm. Note that \Cref{conj:google} is stronger than \Cref{conj:RCS} since if \Cref{conj:RCS} does not hold for deep random quantum circuits, then \Cref{conj:google} is falsified by the classical algorithm that simulates \textsf{RCS}.

In this work, we study the utility of \textsf{XEB} in the shallow depth regime where $d$ is some sufficiently large constant. For constant $d$, it is known that \Cref{def:QS} achieves a score that is exponentially increasing in $n$.

\begin{theorem}[Expected \textsf{XEB} Score of \Cref{def:QS} for the all-to-all architecture \cite{Dalzell_2022}]
For every even $n\geq 2$ and every $d\geq 1$, it holds that
\begin{equation}
\Exp_{U\sim\mathcal{D}_{n,d}}[\textsf{XEB}(U,U)]\geq 2^{\frac{1}{3}\left(\frac{1}{5}\right)^dn}.
\end{equation}
\label{thm:QS_xeb}
\end{theorem}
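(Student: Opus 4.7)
The plan is to reduce $\Exp_{U \sim \mathcal{D}_{n,d}}[\XEB(U,U)]$ to a partition function of a classical two-state Ising-type model on $n$ spins, via the standard two-copy second moment formalism. First I would write
\begin{equation*}
\XEB(U,U) \;=\; 2^n \sum_{x \in \{0,1\}^n} |\bra{x} U \ket{0^n}|^4 \;=\; 2^n \sum_{x} \tr{(\ket{x}\bra{x})^{\otimes 2}\; U^{\otimes 2} (\ket{0^n}\bra{0^n})^{\otimes 2} (U^{\dagger})^{\otimes 2}},
\end{equation*}
so that $\Exp_U[\XEB(U,U)]$ becomes the trace of $\sum_x (\ket{x}\bra{x})^{\otimes 2}$ against $\Exp_U[U^{\otimes 2}(\ket{0^n}\bra{0^n})^{\otimes 2}(U^{\dagger})^{\otimes 2}]$. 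Because the $2$-qubit gates are independent and Haar-random, this expectation factorizes layer by layer into a composition of single-gate twirls, each of which projects onto the two-dimensional operator subspace $\mathrm{span}\{I^{\otimes 2}, \mathrm{SWAP}\}$ on the two qubits it acts on. Introducing a spin $s_i \in \{+,-\}$ on qubit $i$ for the two basis elements, the computation collapses to a classical stochastic evolution on $\{+,-\}^n$ with boundary weights obtained by expanding $(\ket{0^n}\bra{0^n})^{\otimes 2}$ and $\sum_x (\ket{x}\bra{x})^{\otimes 2}$ in this basis.

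Next I would analyze the single-layer transfer operator. A direct computation of the $2$-qubit Haar twirl in the $\{I, \mathrm{SWAP}\}$ basis yields a stochastic matrix on pairs $(s_i, s_j) \in \{+,-\}^2$ whose aligned configurations $(+,+), (-,-)$ are fixed and whose mismatched configurations $(+,-), (-,+)$ collapse into the aligned ones with a surviving off-diagonal eigenvalue of $\tfrac{1}{5}$; this eigenvalue is the arithmetic origin of the $(\tfrac{1}{5})^d$ factor in the bound. Since $\mathcal{D}_{n,d}$ applies a uniformly random perfect matching at each of the $d$ layers, the per-layer transition is $\Exp_\sigma[T_\sigma]$ for a random matching $\sigma$. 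Writing the resulting partition function as $\sum_{k=0}^n \binom{n}{k} w_k$ graded by the number $k$ of minus-spins, one shows $w_k \geq c_k \cdot (\tfrac{1}{5})^{kd}$ for explicit $c_k$, so that $\Exp_U[\XEB(U,U)] \geq \bigl(1 + c\,(\tfrac{1}{5})^d\bigr)^n$ for some $c \geq \tfrac{1}{3}$. The constant $\tfrac{1}{3}$ itself comes from the boundary-weight factors produced by expanding $\ket{0}\bra{0}^{\otimes 2}$ and $\sum_b \ket{b}\bra{b}^{\otimes 2}$ in the $\{I, \mathrm{SWAP}\}$ basis.

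Finally I would convert the bound $\bigl(1 + \tfrac{1}{3}(\tfrac{1}{5})^d\bigr)^n$ into the stated form $2^{\tfrac{1}{3}(\tfrac{1}{5})^d n}$ using the elementary inequality $\log_2(1+x) \geq x$ for $0 \leq x \leq 1$, valid by concavity of $\log_2(1+\cdot)$ on $[0,1]$. The main obstacle I anticipate is the second step: controlling the survival of \emph{mixed} (non-monochromatic) spin configurations under the uniformly random matching at each layer. The single-gate eigenvalue is easy to extract, but one must show that the $\tfrac{1}{5}$ rate governs the decay \emph{in expectation} over the random matching, not merely in a worst case --- and doing so without losing additional $d$-dependent factors appears to require a careful exchangeability or Jensen-type convexity argument applied to the matching distribution. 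The remaining ingredients --- the two-copy reduction, the single-gate twirl computation, and the final convexity step --- are all standard.
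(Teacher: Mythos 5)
The theorem you are proving is stated in the paper's Background section as an imported result from Dalzell et al.\ and is \emph{not} proved in the paper (it carries the citation tag but no accompanying argument), so there is no internal proof to compare against. That said, your overall scaffolding is the standard one for such bounds: the two-copy reduction to a trace against $\Exp_U\bigl[U^{\otimes 2}(\ket{0^n}\bra{0^n})^{\otimes 2}(U^\dagger)^{\otimes 2}\bigr]$ followed by the per-qubit projection onto $\mathrm{span}\{I,\mathrm{SWAP}\}$ is exactly the stat-mech mapping used in the cited reference, so you are on the right track structurally.

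There is, however, a concrete error in the central quantitative step. The Haar second-moment twirl of a single two-qubit gate sends a mismatched pair $(+,-)$ or $(-,+)$ to each of the aligned configurations with weight $\frac{q}{q^2+1}=\frac{2}{5}$ for local dimension $q=2$, \emph{not} $\frac{1}{5}$ as you claim. This is a one-line Weingarten computation: the relevant sum is $\mathrm{Wg}(e,4)+\mathrm{Wg}((12),4)=\frac{1}{15}-\frac{1}{60}=\frac{1}{20}$, and the two overlap factors $\langle\langle I_{12}\,|\,I_1\otimes S_2\rangle\rangle=\langle\langle S_{12}\,|\,I_1\otimes S_2\rangle\rangle = 4\cdot 2 = 8$ give $8\cdot\frac{1}{20}=\frac{2}{5}$. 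The paper itself earlier quotes the Dalzell bound as $\bigl(1+\frac{1}{3}(\frac{2}{5})^d\bigr)^n$; the theorem's form $2^{\frac{1}{3}(\frac{1}{5})^d n}$ is a further, lossy restatement (it follows from that stronger bound by $\log_2(1+x)\geq x$ together with $\frac{2}{5}\geq\frac{1}{5}$, but the single-gate twirl itself does not produce $\frac{1}{5}$). You appear to have back-fit $\frac{1}{5}$ to the theorem statement rather than derived it. Also, note that $\frac{2}{5}$ is a transition \emph{weight} and not an eigenvalue of any $2\times 2$ block: the post-gate pair is always aligned, so the transfer matrix on $\{++,+-,-+,--\}$ has rank $2$ with spectrum $\{1,1,0,0\}$; calling this ``a surviving off-diagonal eigenvalue'' is a misdescription. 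Finally, the step you yourself flag as the main obstacle---showing that the $\frac{2}{5}$ per-gate rate controls the decay in expectation over the uniformly random matching, without incurring extra $d$-dependent loss---is exactly the crux of the Dalzell-style analysis and remains unresolved in your sketch; it is not a peripheral technicality but the heart of the argument.
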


Non-trivial classical spoofing algorithms are also known in the low-depth regime. These are efficient classical algorithms that provably achieve expected \textsf{XEB} scores higher than the trivial algorithm. In this paper, we focus on the algorithm proposed by \cite{gao2024limitations} whose general strategy proceeds as follows.

\begin{definition}[The Spoofing Algorithm of \cite{gao2024limitations}]
On input an $n$-qubit random quantum circuit $U$, partition the qubits $\{1,\ldots,n\}$ into $K$ disjoint subsets, each having size equal to some $m_i\leq O(\log n)$. Modify the input $U$ to obtain a circuit $\tilde{U}=\tilde{U}_1\otimes\cdots\otimes \tilde{U}_K$ as follows. For each gate $V$ acting on qubits $j$ and $k$ in $U$, if $j$ and $k$ belong to different subsets constructed earlier, then replace the gate $V$ by the $2$-qubit identity gate. For each $i\in\{1,\ldots,K\}$, compute the state vector $\tilde{U}_i\ket{0^{m_i}}$ and use it to sample $z_i\in\{0,1\}^{m_i}$ with probability $|\bra{z_i}\tilde{U}_i\ket{0^{m_i}}|^2$. Output $x=z_1\cdots z_K$.
\label{def:HA}
\end{definition}

Generally speaking, the partition of the qubits needs to depend on the circuit layout specified by the architecture, so for deterministic (e.g. geometrically-local) architectures, the same partition can be used regardless of the input circuit instance. However, since the circuit layout could be different for each random circuit instance drawn from the all-to-all architecture, the partition needs to depend on the input circuit.

In \Cref{app:harvard_discrete}, we describe a simple greedy partition strategy with which we obtain a fully specified algorithm. While the efficiency of \Cref{def:HA} is evident, it is perhaps surprising that similar to \textsf{RCS}, the expected \textsf{XEB} score achieved by \Cref{def:HA} also increases exponentially in $n$ in the constant-depth regime.

\begin{theorem}
Let $A$ denote the spoofing algorithm of \cite{gao2024limitations} using the greedy partition strategy described in \Cref{alg:greedy_partition}. For every even $n\geq 2$ and $d\geq 1$, it holds that
\begin{equation}
\Exp_{U\sim\mathcal{D}_{n,d}}[\textsf{XEB}(U,A)]\geq \left(1+\left(\frac{1}{15}\right)^d\right)^{\frac{n}{d^2+1}}.
\end{equation}
\label{thm:harvard_expected_lower_bound}
\end{theorem}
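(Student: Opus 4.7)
The first step is to express the expected \XEB~score as a four-replica Haar integral. The useful identity is
\begin{equation}
\XEB(U,A) = 2^n\sum_{x\in\{0,1\}^n} q_U(x)\,q_{\tilde U}(x) = 2^n\sum_x |\bra{x}U\ket{0^n}|^2 |\bra{x}\tilde U\ket{0^n}|^2,
\end{equation}
where $\tilde U=\tilde U_1\otimes\cdots\otimes\tilde U_K$ is the sliced circuit produced by the spoofer. Averaging over $U\sim\mathcal{D}_{n,d}$ turns this into a tensor network on four copies of the Hilbert space, two carrying $U,U^*$ and two carrying $\tilde U,\tilde U^*$. Because $\tilde U$ is obtained from $U$ by \emph{deleting} (not re-randomising) inter-slice gates, the Haar average of a single gate location splits into two cases: at an intra-slice location, the same $V$ appears in all four replicas and averaging gives the standard second-moment Weingarten projector onto the $S_2$ permutation subspace; at an inter-slice location, $V$ appears only in the $U$-replicas and averaging gives a rank-one projector pinning them to the maximally entangled state, tensored with the identity on the $\tilde U$-replicas.

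Next I would recast this as a statistical-mechanics model with permutation-valued ``spins'' on the qubit wires, in the style of the replica analyses underlying \Cref{thm:QS_xeb} and \cite{hunterjones2019unitarydesignsstatisticalmechanics, Dalzell_2022}. Intra-slice gates then yield Weingarten weights between neighbouring spins, while inter-slice gates act as boundary conditions that force the $U$-side spins crossing them to be the identity permutation. A lower bound on the averaged \XEB~is extracted by restricting the permutation sum to a single configuration -- identity spins everywhere outside the backward light cones of certain ``good'' qubits, and swap spins inside them. Each good-qubit light cone contributes a multiplicative factor $1+(1/15)^d$, with the constant $1/15$ falling out of the 2-qubit second-moment Weingarten algebra once the boundary of the light cone is surrounded by the inter-slice decorrelating projectors.

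It then remains to show that the greedy partition of \Cref{alg:greedy_partition} produces at least $n/(d^2+1)$ such good qubits inside blocks of size $O(\log n)$; combined with the per-good-qubit factor above and the trivial lower bound $1$ from the remaining qubits, this yields $(1+(1/15)^d)^{n/(d^2+1)}$. I expect this combinatorial step to be the main obstacle -- certifying that greedy block-building, in the worst case over the random layer permutations that define $\mathcal{D}_{n,d}$, deterministically delivers the $n/(d^2+1)$ count, with the $d^2+1$ denominator capturing the worst-case overlap of $d$-step backward light cones in the gate interaction graph. A secondary technical check is that restricting the stat-mech sum to a single configuration really is a lower bound, i.e., that the discarded configurations all contribute nonnegatively after the inter-slice projectors have been inserted. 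The Haar algebra itself is routine once the intra- vs.\ inter-slice dictionary is fixed, and the numerical constant $1/15$ is determined by the 2-qubit Weingarten formulas.
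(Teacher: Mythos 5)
Your proposal and the paper's proof share the same high-level strategy---lower-bound the Haar average by keeping a nonnegative subset of terms, and count $\geq n/(d^2+1)$ contributing blocks from the greedy partition---but the formalisms differ, and the difference matters. The paper works entirely in the Pauli basis: it rewrites $2^n\sum_z \ket{z}\bra{z}\otimes\ket{z}\bra{z}=\sum_w Z(w)\otimes Z(w)$ and then propagates the Pauli string $Z(w)$ through the circuit gate by gate, using only two moment identities (\Cref{L:first_moment_formula} and \Cref{L:second_moment_formula}). The second-moment identity gives the factor $\tfrac{1}{15}$ whenever a kept gate sees a non-identity Pauli input, and the first-moment identity kills any trajectory whose Pauli touches an inter-slice (deleted) gate. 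Each trajectory weight is a product of $\tfrac{1}{15}$'s and a boundary overlap $\bra{0^{2n}}(\gamma_s\otimes\gamma_s)\ket{0^{2n}}\in\{0,1\}$, so every trajectory contributes nonnegatively and restricting to any subset is automatically a valid lower bound. The paper then keeps only the constant trajectories $Z(w)\to Z(w)\to\cdots$ with $w$ supported on the block representatives $R$, each surviving with weight exactly $(1/15)^{|w|d}$, and summing over the $2^{|\mathcal{P}|}$ such $w$ gives $(1+(1/15)^d)^{|\mathcal{P}|}$. Notice there is no backward-light-cone geometry in the ansatz at all---the Pauli simply sits still.

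The gap in your plan is that you treat positivity of the discarded configurations as a "secondary technical check." In the permutation stat-mech picture it is the central difficulty. Weingarten coefficients are not sign-definite (for $q=4$, $\mathrm{Wg}((12);4)=-\tfrac{1}{60}$), so without an explicit change of variables you cannot drop terms and claim a lower bound. For a pure second-moment quantity (all four replicas Haar-averaged together) the stat-mech model can be orthogonalised into a nonnegative transfer matrix---this is how results like \Cref{thm:QS_xeb} are proved---but here the inter-slice gates give 1-design rank-one contractions on only two of the four replicas, and that mixed 1/2-design structure breaks the usual orthogonalisation; you would have to re-establish nonnegativity from scratch. Relatedly, the number $\tfrac{1}{15}$ you want to "fall out of the Weingarten algebra" is really the coefficient in the Pauli second-moment identity; in the orthogonalised permutation model one gets different per-gate constants (compare the $\tfrac{2}{5}$ per layer in \Cref{thm:QS_xeb}), so it is not automatic that your swap-inside/identity-outside ansatz reproduces $1+(1/15)^d$ per representative. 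Finally, the greedy-partition count you flag as the main obstacle is in fact the easy part: the paper bounds $|D_i|\le d^2+1$ by elementary light-cone counting ($|P_i|\le d+1$ plus a one-step forward neighbourhood) and concludes $|\mathcal{P}|\ge n/(d^2+1)$ directly.
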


\noindent Our proof of \Cref{thm:harvard_expected_lower_bound} can be found in \Cref{app:harvard_discrete}. Potentially, a tighter analysis or a better partitioning strategy may lead to a better scaling w.r.t $d$. Already, the scaling w.r.t $n$ is of the same form as that of the quantum computer established by \Cref{thm:QS_xeb}, both are exponentially increasing $n$ for constant $d$. Therefore, this leads us to conclude that expected \textsf{XEB} is asymptotically easy to spoof classically for $\mathcal{D}_{n,d}$ for constant $d$. 

Using the close connection between deep random quantum circuits and Haar-random unitaries, some typicality results are known in the deep circuit depth regime \cite{aaronson2016complexitytheoreticfoundationsquantumsupremacy,barak2020spoofing, kretschmer2021quantum}. For example, it is well-known that a perfect quantum computer achieves a score of $2$ with high probability over the choice of the deep circuit. On the contrary, all known results related to \textsf{XEB} for shallow random quantum circuits are limited to expected \textsf{XEB}. For instance, it is unknown whether the quantum computer gets a score close to the expectation with high probability over the choice of $U\sim\mathcal{D}_{n,d}$ for constant $d$. This also leaves open the possibility that, for example, the algorithm of \cite{gao2024limitations} could achieve a very low score, potentially worse than the score of $1$ guaranteed by the trivial algorithm, on most random input instances.

\section{Our results for Brownian circuits}

Here we present our primary results for the moments of the output distribution for Brownian circuits in the large-$n$ limit. We also present results for the moments of output distribution obtained from the spoofing algorithm of \cite{gao2024limitations}. We refer the reader to the Appendices for complete derivations of these results.

First, the moments of the output distribution $q_U(x)=|\bra{x}U\ket{0^n}|^2$ for Brownian circuits in the large-$n$ limit are:
\begin{equation}
    \Exp_{U\sim\mathcal{B}}[q_U(x)^k] = \frac{k!}{2^{kn}} \left(1 - e^{-12 J T} \right)^{k \magn{\mathbf{x}}} \left(1 + e^{-12 J T} \right)^{k (n-\magn{\mathbf{x}})} 
    \label{eq:brownianmoments}
\end{equation}
for any fixed output bitstring $x$ and for $k < 2^n$, where $\magn{x}$ is the Hamming weight. Performing a sum over the bitstrings $x$, we obtain
\begin{equation}
    \sum_{x\in\{0,1\}^n} \Exp_{U\sim\mathcal{B}}[q_U(x)^k] = \frac{k!}{2^{kn}} \left[ \left(1 - e^{-12 J T} \right)^{k} +  \left(1 + e^{-12 J T} \right)^{k} \right]^n
\end{equation}
We can also compute moments of the form
\begin{equation}
\Exp_{U\sim\mathcal{B}}\left[q_U(x)^{c}q_U(y)^{c}\right]= \frac{(2c)!}{2^{2cn}} \left( 1- e^{-12 J T} \right)^{(\magn{x}+\magn{y})c} \left( 1 + e^{-12 J T} \right)^{(2n - \magn{x} -\magn{y})c}
\end{equation}
for $2c = k$ even, which generalizes Eq. \eqref{eq:brownianmoments}. Summing over the bitstrings $x,y$ we obtain
\begin{equation}
\sum_{x,y\in\{0,1\}^n} \Exp_{U\sim\mathcal{B}}\left[q_U(x)^{c}q_U(y)^{c}\right] = \frac{(2c)!}{2^{2cn}} \left[ \left( 1 - e^{-12 J T} \right)^{c} + \left( 1 + e^{12 J T} \right)^{c} \right]^{2n}
\end{equation}

Second, we can compute moments of the spoofing algorithm of \cite{gao2024limitations}
\begin{equation}
\Exp_{U\sim\mathcal{B}}\left[q_U(x)^{c}A_U(x)^{c}\right] = \frac{(c!)^{1+K}}{2^{2cn}} \left[ \left(1 - e^{-12 J T} \right)^{2c\magn{x}} \left(1 + e^{-12 J T} \right)^{2c(n-\magn{x})} \right]
\label{eq:harvardbrownian}
\end{equation}
for $2c = k$ even. Summing over bitstrings $x$ we obtain:
\begin{equation}
    \sum_{x\in\{0,1\}^n} \Exp_{U\sim\mathcal{B}} \left[q_U(x)^{c}A_U(x)^{c}\right] = \frac{(c!)^{1+K}}{2^{2cn}} \left[ \left(1 - e^{-12 J T} \right)^{2c} + \left(1 + e^{-12 J T} \right)^{2c} \right]^n
\end{equation}
These expressions constitute the main technical results that we can achieve using Brownian circuit technology. See the Appendices for complete derivations and explanation of how to apply the algorithm of \cite{gao2024limitations} to all-to-all Brownian circuits.

\section{Techniques}
Our results rely on mean-field (large-$n$) techniques that facilitate the calculation of higher moments of the random circuit distribution. Two simplifications occur in our calculations that render the resulting dynamics analytically tractable. First, we are able to map our Brownian circuit dynamics onto imaginary time evolution under an effective all-to-all Hamiltonian. Second, because this Hamiltonian is all-to-all, we may analyze its spectrum using mean field theory. This is most easily accomplished by transforming the imaginary-time dynamics into a path integral, where large-$n$ methods are readily applied. Together, these two simplifications allow us to calculate the moments of the circuit distribution by saddle point, which is effectively a semiclassical calculation where $1/n$ plays the role of $\hbar$.

To see these techniques in action, let us consider the first moment of the return probability
\begin{equation}
    \mathbb{E}[p(0)] = \mathbb{E}_{U \sim \mathcal{E}} \magn{\bra{0} U \ket{0}}^2
\end{equation}
where we may write the return probability as an amplitude on a doubled Hilbert space:
\begin{equation*}
    p(0) = \magn{\bra{0} U \ket{0}}^2 = \bra{0} U \ket{0} \bra{0} U^* \ket{0} = \bra{00} U \otimes U^* \ket{00}.
\end{equation*}
After some relatively simple manipulation, we are able to write this return probability as a thermal expectation value
\begin{equation*}
    \mathbb{E}[p(0)] = \mathrm{Tr} \left[ e^{- H_{\mathrm{eff}} T} \mathcal{O} \right]
\end{equation*}
where the operator $\mathcal{O}$ is related to the initial $\ket{00}$ and final states $\bra{00}$ (see Appendices for details). This thermal expectation value is governed by the effective Hamiltonian:
\begin{equation}
    H_{\mathrm{eff}} = - \frac{J}{2 n} \left( \sum_i \vec{\sigma}_{iL} \cdot \vec{\sigma}_{iR} \right)^2 + \frac{9 J}{2} n + \mathcal{O}(1)
\end{equation}
where $L,R$ denote the `forward' $U$ and `backward' $U^*$ Hilbert spaces, and the $\mathcal{O}(1)$ refers to terms that are subleading in $n$.
Here the mean-field nature of the Hamiltonian is evident, as each operator $\vec{\sigma}_{jL} \cdot \vec{\sigma}_{jR}$ on site $j$ couples only to the mean field
\begin{equation}
    G_{LR} = \frac{1}{n} \sum_i \vec{\sigma}_{iL} \cdot \vec{\sigma}_{iR}.
    \label{eq:constraint}
\end{equation}
Because $G_{LR}$ consists of a large number $n$ of terms, the fluctuations of this operator are suppressed by powers of $1/n$. This is similar to the suppression of fluctuations that occurs in the central limit theorem. As a result, the field $G_{LR}$ may be treated semiclassically, and the dynamics, in the large-$n$ limit, is governed by the effective Hamiltonian
\begin{equation}
    H_{\mathrm{eff}} = -J \sum_i \left( \heis{iL}{iR} \right) G_{LR} + \frac{J}{2} G_{LR}^2 n + \frac{9 J}{2} n
\end{equation}
subject to the self-consistency constraint Eq.~\eqref{eq:constraint}. This Hamiltonian is separable, and therefore readily solved. Similar techniques generalize to higher moments, as well as moments of the output distribution for the spoofing algorithm of \cite{gao2024limitations}. We leave the details of these calculations to the Appendices.

\section{Typical \textsf{XEB} score obtained by a quantum computer for Brownian circuits}
In this section, we give a bound on the typical \textsf{XEB} scores achieved by a quantum computer for Brownian circuits with an upper bound on the number of string samples required for estimating the \textsf{XEB} score of a typical circuit instance. Because of our direct command of the 3rd and 4th order moments of Brownian circuits, the rest of the calculations boil down to standard applications of concentration inequalities. Recall that for every $n$-qubit unitary $U$, we use $q_U(x)=|\bra{x}U\ket{0^n}|^2$ denote the output distribution of $U$ for $x\in\{0,1\}^n$, and we use $\mathcal{B}$ to denote the distribution of $n$-qubit time-$T$ Brownian circuits.

To derive our bounds, we recall that we have previously computed the expected \textsf{XEB} score itself
\begin{equation}
\Exp_{U\sim\mathcal{B}}\left[\Exp_{x\sim q_U}[q_U(x)]\right]={\sum_{x\in\{0,1\}^n}\Exp_{U\sim\mathcal{B}}[q_U(x)^2]}=\frac{2}{2^{n}} \left( 1 + e^{-24 J T} \right)^n,
\label{eq:2nd_moment}
\end{equation}
the 3rd moment expression 
\begin{equation}
\Exp_{U\sim\mathcal{B}}\left[\Exp_{x\sim q_U}[q_U(x)^2]\right]=\sum_{x\in\{0,1\}^n}\Exp_{U\sim\mathcal{B}}[q_U(x)^3]= \frac{3!}{2^{2n}} \left( 1 + 3 e^{-24 JT} \right)^n,
\label{eq:3rd_moment}
\end{equation}
and the 4th moment expression
\begin{equation}
\Exp_{U\sim\mathcal{B}}\left[\left(\Exp_{x\sim q_U}[q_U(x)]\right)^2\right]={\sum_{x,y\in\{0,1\}^n}\Exp_{U\sim\mathcal{B}}\left[q_U(x)^2q_U(y)^2\right]}= \frac{4!}{2^{2n}} \left( 1 + e^{-24 J T} \right)^{2n}.
\label{eq:4th_moment}
\end{equation}

Using \Cref{eq:2nd_moment}, \Cref{eq:4th_moment}, and the Paley-Zygmund inequality, we get
\begin{equation}
\Pr_{U\sim\mathcal{B}}\left(\Exp_{x\sim q_U}[q_U(x)]\geq\frac{1}{2}\Exp_{V\sim\mathcal{B}}\left[\Exp_{x\sim q_V}[q_V(x)]\right]\right)\geq\frac{\left(\Exp_U[\Exp_{x\sim q_U}[q_U(x)]]\right)^2}{4\Exp_{U}\left[(\Exp_{x\sim q_U}[q_U(x)])^2\right]}=\frac{1}{24}.
\end{equation}
By Markov's inequality,
\begin{equation}
\Pr_{U\sim\mathcal{B}}\left(\Exp_{x\sim q_U}[q_U(x)^2]\geq 600\Exp_V\left[\Exp_{x\sim q_V}[q_V(x)^2]\right]\right)\leq\frac{1}{600}.
\end{equation}
Therefore, by the union bound, with probability at least $\frac{1}{25}$, we draw a Brownian circuit instance $U\sim\mathcal{B}$ satisfying
\begin{equation}
\Exp_{x\sim q_U}[q_U(x)]\geq\frac{1}{2}\Exp_{V\sim\mathcal{B}}\left[\Exp_{x\sim q_V}[q_V(x)]\right]
\end{equation}
and
\begin{equation}
\Exp_{x\sim q_U}[q_U(x)^2]\leq 600\Exp_{V\sim\mathcal{B}}\left[\Exp_{x\sim q_V}[q_V(x)^2]\right].
\end{equation}
Now suppose we have drawn such a $U$. Let $m\geq 1$ and suppose we draw $m$ independent samples $x_1,\ldots,x_m$ from the output distribution $q_U$. Consider the sample mean
\begin{equation}
\tilde{\mu}=\frac{1}{m}\sum_{j=1}^m q_U(x_j).
\end{equation}
By \Cref{eq:2nd_moment}, \Cref{eq:3rd_moment}, and Chebyshev's inequality,
\begin{align}
&\Pr_{x_1,\ldots,x_m\sim q_U}\left(\tilde{\mu}\leq \frac{1}{4}\Exp_{V\sim\mathcal{B}}\left[\Exp_{x\sim q_V}[q_V(x)]\right]\right)\\
\leq& \Pr_{x_1,\ldots,x_m\sim q_U}\left(\tilde{\mu}\leq \frac{1}{2}\Exp_{x\sim q_U}[q_U(x)]\right)\\
\leq& \Pr_{x_1,\ldots,x_m\sim q_U}\left(\left|\tilde{\mu}-\Exp_{x\sim q_U}[q_U(x)]\right|\geq\frac{1}{2}\Exp_{x\sim q_U}[q_U(x)]\right)\\
\leq& \frac{4\Var_{x\sim q_U}\left[q_U(x)\right]}{m\left(\Exp_{x\sim q_U}[q_U(x)]\right)^2}\\
\leq& \frac{2400\Exp_{U}\left[\Exp_{x\sim q_U}\left[q_U(x)^2\right]\right]}{m\left(\Exp_{U}\left[\Exp_{x\sim q_U}[q_U(x)]\right]\right)^2}\\
\leq&\frac{3600(1+3e^{-24JT})^n}{m(1+e^{-24JT})^{2n}}.
\end{align}
Therefore, by taking
\begin{equation}
m=\frac{180000(1+3e^{-24JT})^n}{(1+e^{-24JT})^{2n}}
\end{equation}
string samples, we can guarantee that with probability at least $\frac{1}{50}$ over the choice of $U$ and samples $x_1,\ldots,x_m\sim q_U$, we get
\begin{equation}
\tilde{\mu}=\frac{1}{m}\sum_{j=1}^m q_U(x_j)\geq \frac{1}{4}\Exp_V\left[\Exp_{x\sim q_V}\left[q_V(x)\right]\right]=\frac{1}{2\cdot 2^n}(1+e^{-24JT})^n.
\end{equation}
While the sample complexity for the number of strings scales exponentially in the number of qubits $n$ for constant time $T$, it can also be viewed as scaling linearly with the expected \textsf{XEB} score itself. Hence, in a goldilock regime where the expected \textsf{XEB} score is modest, so is the sample complexity.

\section{Numerics on all-to-all random quantum circuits}

\begin{figure}[ht]
\includegraphics[width=\linewidth]{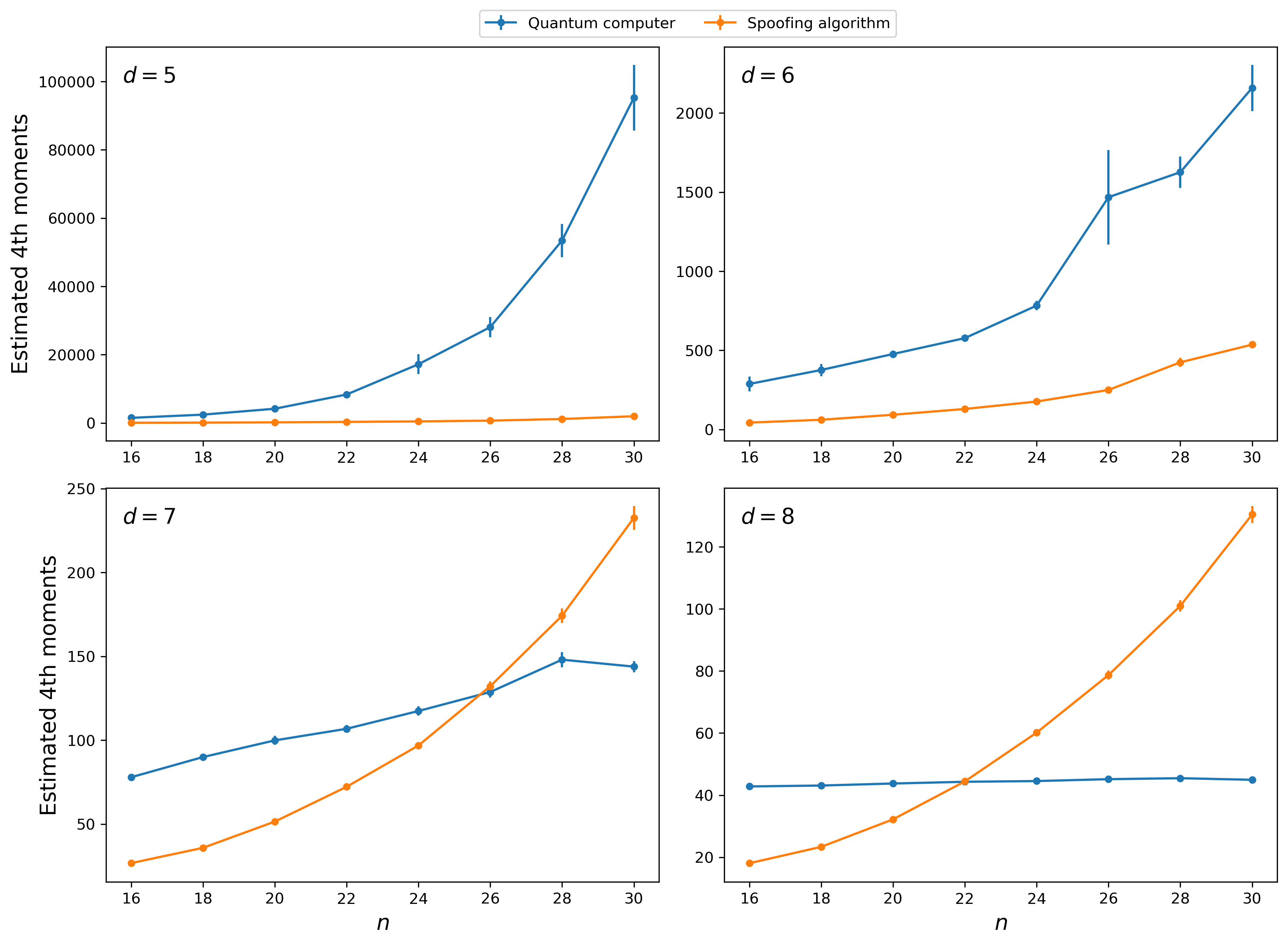}
\caption{Results for numerically estimating \Cref{eq:quantum_numerics} and \Cref{eq:harvard_numerics} for $n$ ranging from $16$ to $30$ and $d$ ranging from $5$ to $8$. Each data point is calculated by an average of $2000$ circuit instances. The error bars show one standard error. The simulations are performed using Cirq \cite{cirq} and qsim \cite{qsim}.}
\label{fig:numerics_all_to_all}
\end{figure}

We have established that for sufficiently large constant-time Brownian circuits, the quantum computer achieves typical \textsf{XEB} scores within a constant factor of the expectation whereas the spoofing algorithm of \cite{gao2024limitations}'s typical performance is troubled by an exponentially larger variance. More specifically, we have calculated the dominant 4th moment quantities
% \begin{equation}
% \Exp_{U\sim\mathcal{B}}\left[\left(\Exp_{x\sim q_U}[q_U(x)]\right)^2\right]=\sum_{x,y\in\{0,1\}^n}\Exp_{U\sim\mathcal{B}}\left[q_U(x)^2q_U(y)^2\right]= \frac{4!}{2^{2n}} \left( 1 + e^{-24 J T} \right)^{2n}
% \end{equation}
\begin{equation}
\sum_{x\in\{0,1\}^n}\Exp_{U\sim\mathcal{B}}\left[q_U(x)^4\right]= \frac{4!}{2^{3n}} \left( 1 + 6e^{-24 J T} +e^{-48JT}\right)^{n}
\end{equation}
for the quantum computer and
% \begin{equation}
% \Exp_{U\sim\mathcal{B}}\left[\left(\Exp_{x\sim A_U}[q_U(x)]\right)^2\right]\geq \sum_{x\in\{0,1\}^n}\Exp_{U\sim\mathcal{B}}\left[A_U(x)^2q_U(x)^2\right]=\frac{2^{K+1}}{2^{2n}}(1+2e^{-24JT})^n
% \end{equation}
\begin{equation}
\sum_{x\in\{0,1\}^n}\Exp_{U\sim\mathcal{B}}\left[A_U(x)^2q_U(x)^2\right]=\frac{2^{K+1}}{2^{3n}}(1 + 6e^{-24 J T} +e^{-48JT})^n
\end{equation}
for the algorithm of \cite{gao2024limitations}. Recall that $q_U$ is the output distribution of the circuit $U$, $A_U$ is the distribution sampled by the spoofing algorithm on input $U$, and $K$ is the number of disjoint qubit subsets created during the algorithm. The distinction in typical behaviors stems from the difference between the constant $4!=24$ term versus the $2^{K+1}$ term which scales exponentially in $n$ since $K\geq\Omega(n/\log n)$. 

In this section, we numerically investigate whether the predictions made through Brownian circuits are reflected in constant-depth all-to-all random quantum circuits. More specifically, we try to estimate the 4th moment quantity
\begin{equation}
2^{3n}\sum_{x\in\{0,1\}^n}\Exp_{U\sim\mathcal{D}_{n,d}}\left[q_U(x)^4\right]
\label{eq:quantum_numerics}
\end{equation}
for the quantum computer and the corresponding quantity
\begin{equation}
2^{3n}\sum_{x\in\{0,1\}^n}\Exp_{U\sim\mathcal{D}_{n,d}}\left[A_U(x)^2q_U(x)^2\right]
\label{eq:harvard_numerics}
\end{equation}
for the algorithm of \cite{gao2024limitations} for $n$ ranging from $16$ to $30$ and $d$ ranging from $5$ to $8$. The spoofing algorithm is implemented using the greedy partition strategy described in \Cref{app:harvard_discrete}. See \Cref{fig:numerics_all_to_all} for plots of our results. We observe that starting from $d=7$, consistent with the large $T$ assumption required by our Brownian analysis, the 4th moment for the spoofing algorithm overtakes that of the quantum computer. Hence, the numerics suggest that this phenomenon predicted by the Brownian circuits does translate to the realm of constant-depth all-to-all random quantum circuits.

\begin{figure}[ht]
\includegraphics[width=\linewidth]{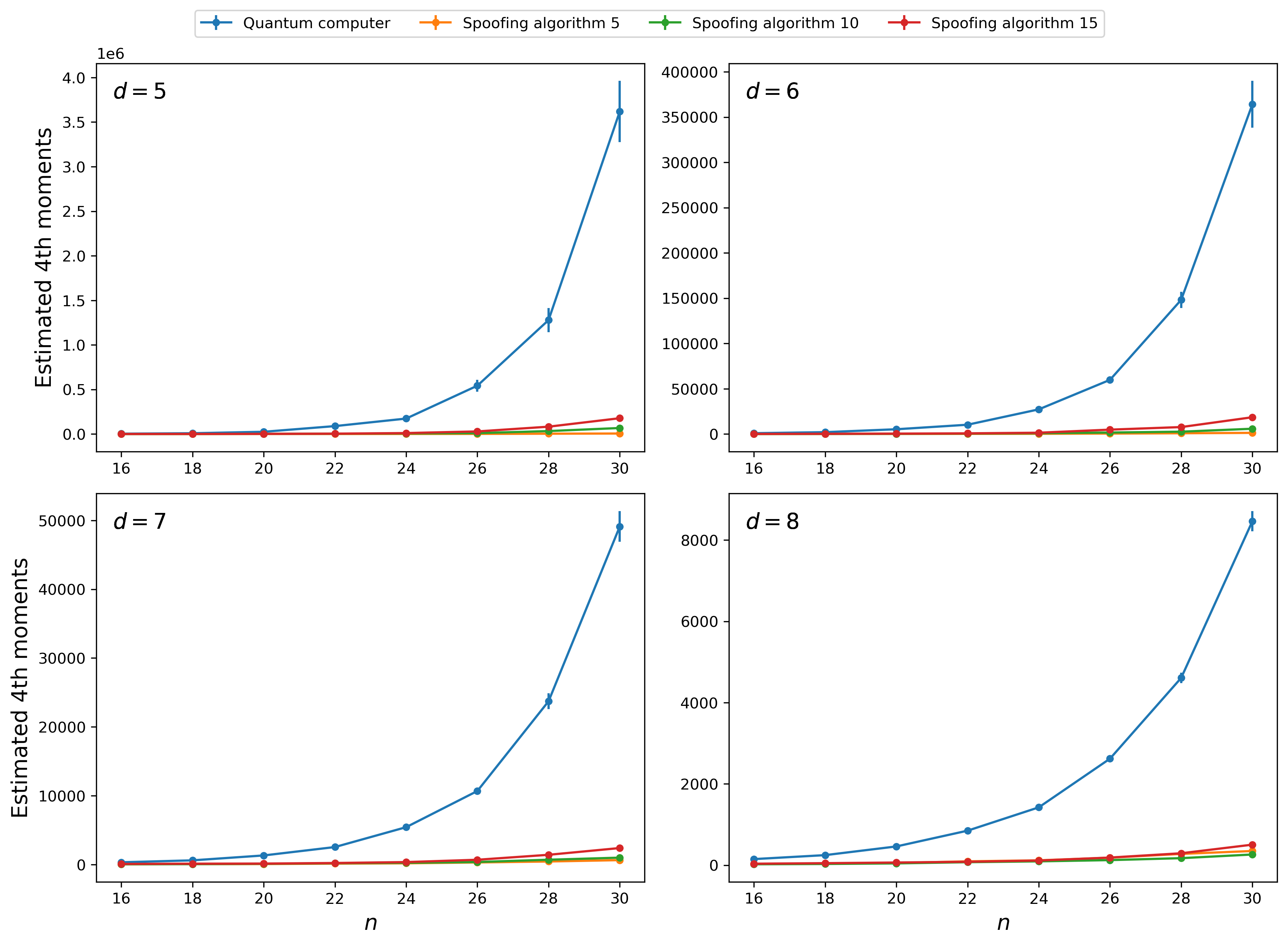}
\caption{Results for numerically estimating \Cref{eq:quantum_numerics} and \Cref{eq:harvard_numerics} for the 1D architecture for $n$ ranging from $16$ to $30$ and $d$ ranging from $5$ to $8$. In the caption, ``Spoofing algorithm $r$'' with $r\in\{5,10,15\}$ indicates that the algorithm partitions the qubits into $K=\lceil\frac{n}{r}\rceil$ subsets where each subset, except maybe the last one, has size $r$. Each data point is calculated by an average of $2000$ circuit instances. The error bars show one standard error. The simulations are performed using Cirq \cite{cirq} and qsim \cite{qsim}.}
\label{fig:numerics_1D}
\end{figure}

To further investigate the ``universality'' of the Brownian dynamics, we have repeated the same numerical simulations on 1D random quantum circuits with periodic boundary conditions (so that every layer again consists of exactly $\frac{n}{2}$ gates). For the 1D architecture, we adopt the natural partition strategy of taking contiguous blocks of qubits as the subsets, for block sizes of $r=5,10,15$. Note that the last subset contains fewer qubits if $r$ does not divide $n$, and the number of subsets $K$ equals $\lceil\frac{n}{r}\rceil$. See \Cref{fig:numerics_1D} for plots of our results for the 1D architecture. Here, we see that the 4th moment quantities are always larger for the quantum computer, and for the spoofing algorithm, the values appear the largest when $r=15$, which correspond to the smallest $K=\lceil\frac{n}{r}\rceil$. Therefore, the numerics suggest that the Brownian predictions are incompatible with 1D random quantum circuits. 

However, this leaves open the possibility that Brownian circuits become accurate models of $D$-dimensional random quantum circuits for sufficiently large constant $D>1$. We formalize this observation in the following conjecture.

\begin{conjecture}
For sufficiently large constant $D$ and $d$, the spoofing algorithm of \cite{gao2024limitations} has exponentially larger variance in $n$ compared to a perfect quantum computer when run on input $D$-dimensional $n$-qubit depth-$d$ brickwork random quantum circuits.
\label{conj:D-dimensional}
\end{conjecture}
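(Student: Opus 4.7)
\noindent\textbf{Proof plan for \Cref{conj:D-dimensional}.} The plan is to lift the Brownian 4th-moment calculations to the $D$-dimensional brickwork setting via the standard statistical-mechanics mapping for Haar moments. Let $\mathcal{C}_{n,d,D}$ denote the $D$-dimensional depth-$d$ brickwork random circuit ensemble. Averaging each Haar-random two-qubit gate over $\mathbb{U}(4)$ with Weingarten calculus rewrites the dominant 4th-moment quantities $\sum_x\Exp_{U\sim\mathcal{C}_{n,d,D}}[q_U(x)^4]$ and $\sum_x\Exp_{U\sim\mathcal{C}_{n,d,D}}[A_U(x)^2 q_U(x)^2]$ as partition functions of a $24$-state, Potts-like model on the brickwork lattice, with ferromagnetic-type local weights and boundary conditions that differ between the two quantities.

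First I would write down this statistical-mechanics model explicitly and identify the configurations that dominate its partition function at large $n$, namely the ``aligned'' sector in which neighboring permutations agree. Second, I would exploit the fact that the coordination number of the $D$-dimensional brickwork lattice grows as $\Theta(D)$ to argue that, for sufficiently large $D$ and any $d$ above a $D$-dependent threshold, the model is in a strongly ordered phase with an $n$-independent free-energy gap between the aligned sector and its excitations. This is the finite-$D$ analogue of why mean-field theory is exact in the all-to-all and Brownian models. Matching the resulting asymptotics against the Brownian prediction of \Cref{eq:quantum_4th_moment} would give, for the perfect quantum computer,
\begin{equation}
\Exp_{U\sim\mathcal{C}_{n,d,D}}\left[\left(\Exp_{x\sim q_U}[q_U(x)]\right)^2\right]\leq \frac{C_{D,d}}{2^{2n}}(1+\varepsilon_{D,d})^{2n},
\end{equation}
with $C_{D,d}=O(1)$ and $\varepsilon_{D,d}\to 0$ as $d\to\infty$, which combined with \Cref{thm:QS_xeb} bounds the variance of $\XEB(U,U)$ by a constant times the square of its mean.

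Third, I would adapt the greedy-partition analysis behind \Cref{thm:harvard_expected_lower_bound} to the $D$-dimensional brickwork architecture: for constant $d$, axis-aligned boxes of linear size $O(d)$ yield $K=\Theta(n/d^D)$ disconnected blocks, each brute-force simulable in time $2^{O(d^D)}$. The product structure $A_U=\bigotimes_{i=1}^K A_{U,i}$ factorizes the 4th moment across blocks, and repeating the Weingarten calculation inside each block should produce an extra combinatorial prefactor of $2^{\Theta(K)}=2^{\Omega(n/d^D)}$, precisely mirroring \Cref{eq:harvard_4th_moment} and exponentially larger in $n$ than the $O(1)$ prefactor obtained in the quantum case. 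Subtracting the square of the expected score, itself matched via \Cref{thm:harvard_expected_lower_bound}, would then separate the variances by an $e^{\Omega(n)}$ factor.

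The hard part will be the second step: rigorously controlling the $D$-dimensional permutation-valued statistical-mechanics model at constant depth. In the all-to-all and Brownian models mean field is exact, but on a finite-dimensional lattice one must suppress contributions from domain walls and geometric fluctuations. A natural route is to combine reflection positivity or infrared bounds (as in the proof of long-range order for $O(N)$ models in $D\geq 3$) with a cluster expansion around the aligned phase; however, the signed Weingarten weights and the brickwork staggering obstruct a direct application, and it is unclear how large $D$ must be for the argument to close. A weaker but still informative fallback would be to prove the conjecture with $D$ allowed to grow slowly with $d$, or to establish it only for a continuous-time ``Brownian brickwork'' variant whose structure is directly compatible with the path-integral tools developed in this paper.
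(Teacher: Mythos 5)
The statement you are addressing is \Cref{conj:D-dimensional}, which the paper explicitly leaves \emph{open}: the Discussion section says outright that ``Another open question is to prove or disprove \Cref{conj:D-dimensional}.'' There is no proof in the paper to compare against; the paper only offers numerical evidence in the all-to-all case and a 1D counterexample indicating the conjecture genuinely requires large $D$. So what you have written is a research plan, and it should be evaluated as such, not as a missed reconstruction of an existing argument.

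As a plan, the skeleton is reasonable and correctly locates the difficulty, but the central step (your ``second'' step) has a genuine gap that you acknowledge but understate. The 4th-moment quantities involve $k=4$ replicas, so the Weingarten/stat-mech mapping produces a lattice model whose local degrees of freedom live in $S_4$ ($24$ states, as you say) \emph{with signed weights}: the Weingarten function on $\mathbb{U}(4)$ takes negative values for $k=4$, so the resulting ``partition function'' is not that of a genuine probability measure. This kills reflection positivity, infrared bounds, and most cluster-expansion machinery in their standard forms, all of which require a positive transfer operator or a positivity structure that the signed model does not obviously possess. Moreover, even in the sign-free $k=2$ case, rigorous control of the brickwork stat-mech model at \emph{constant} depth is not available; what is known for log-depth anticoncentration (\cite{Dalzell_2022}) relies on the 1D structure of the effective model after reductions that do not carry over cleanly to $k=4$ and general $D$. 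So ``high coordination number implies a strongly ordered phase with an $n$-independent gap'' is a physically motivated heuristic, not a step you can currently make rigorous, and it is precisely the content of the conjecture. Your own numerical observation (via the paper's Fig.~2) that the conjecture \emph{fails} at $D=1$ means any argument must actually use $D$ large in an essential, quantitative way; a soft ``mean field becomes exact'' appeal will not discriminate $D=1$ from $D=3$.

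Two smaller points worth tightening. First, in your third step, the block-size bookkeeping is off: for constant $d$ and $D$, axis-aligned boxes of linear size $\Theta(d)$ give $K=\Theta(n/d^D)=\Theta(n)$ blocks, not merely $\Omega(n/\log n)$; that is fine for the conclusion but you should also verify that after deleting boundary-crossing gates a nontrivial fraction of each block's interior survives (you need block linear size strictly larger than the lightcone radius $\sim d$, and the surviving ``core'' fraction $\left(1-2d/r\right)^D$ must be bounded away from zero, which constrains $r$ relative to $dD$). Second, ``matching against the Brownian prediction'' in your second step would only give an upper bound on the quantum 4th moment if you can show the stat-mech model's ordered-phase contribution dominates; you also need a matching lower bound on the quantum second moment, which the paper does provide via \Cref{thm:QS_xeb}, so that part of the plan is sound. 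Your proposed fallback, a continuous-time ``Brownian brickwork'' variant directly amenable to the path-integral tools of the paper, is the most realistic near-term route and would be a worthwhile intermediate result even if the discrete conjecture remains open.
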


\FloatBarrier

\section{Discussion}
\label{sec:conclusion}

In this paper we studied a family of random Brownian all-to-all circuits and used mean-field (large-$n$) techniques to calculate arbitrary moments of the output distribution at constant depth. Access to these higher moments allowed us to characterize not just the expected linear cross-entropy (\textsf{XEB}) benchmark scores but also the typicality of such scores. In particular, we found that the spoofing algorithm suggested in \cite{gao2024limitations} suffers from an exponentially large variance in the number of disjoint subsystems, meaning that the average score is \emph{not} typical. Our numerical simulations of all-to-all random quantum circuits support this conclusion. Our results leave many questions unanswered. In particular, while we used Brownian circuits as a proxy for all-to-all discrete Haar-random unitary circuits, a precise mathematical connection between these two circuit families remains unclear. It is also unclear whether the \XEB~benchmark for Brownian circuits is robust against other types of classical spoofing algorithms. More generally, deeper connections between Brownian circuit sampling and complexity theory remains to be worked out. We leave investigation of these questions for future work. Another open question is to prove or disprove \Cref{conj:D-dimensional}.

\section{Acknowledgements}
B.F. and S.G. ~acknowledge support from AFOSR
(FA9550-21-1-0008).  This material is based upon work partially
supported by the National Science Foundation under Grant CCF-2044923
(CAREER), by the U.S. Department of Energy, Office of Science,
National Quantum Information Science Research Centers (Q-NEXT) and by
the DOE QuantISED grant DE-SC0020360.  This work was done in part while a subset of the authors were visiting the Simons Institute for the Theory of Computing, supported by NSF QLCI Grant No. 2016245.
\bibliographystyle{unsrt}
\bibliography{References}

\begin{thebibliography}{10}

\bibitem{Arute2019-zu}
Frank Arute, Kunal Arya, Ryan Babbush, Dave Bacon, Joseph~C Bardin, Rami
  Barends, Rupak Biswas, Sergio Boixo, Fernando G S~L Brandao, David~A Buell,
  Brian Burkett, Yu~Chen, Zijun Chen, Ben Chiaro, Roberto Collins, William
  Courtney, Andrew Dunsworth, Edward Farhi, Brooks Foxen, Austin Fowler, Craig
  Gidney, Marissa Giustina, Rob Graff, Keith Guerin, Steve Habegger, Matthew~P
  Harrigan, Michael~J Hartmann, Alan Ho, Markus Hoffmann, Trent Huang, Travis~S
  Humble, Sergei~V Isakov, Evan Jeffrey, Zhang Jiang, Dvir Kafri, Kostyantyn
  Kechedzhi, Julian Kelly, Paul~V Klimov, Sergey Knysh, Alexander Korotkov,
  Fedor Kostritsa, David Landhuis, Mike Lindmark, Erik Lucero, Dmitry Lyakh,
  Salvatore Mandr{\`a}, Jarrod~R McClean, Matthew McEwen, Anthony Megrant, Xiao
  Mi, Kristel Michielsen, Masoud Mohseni, Josh Mutus, Ofer Naaman, Matthew
  Neeley, Charles Neill, Murphy~Yuezhen Niu, Eric Ostby, Andre Petukhov, John~C
  Platt, Chris Quintana, Eleanor~G Rieffel, Pedram Roushan, Nicholas~C Rubin,
  Daniel Sank, Kevin~J Satzinger, Vadim Smelyanskiy, Kevin~J Sung, Matthew~D
  Trevithick, Amit Vainsencher, Benjamin Villalonga, Theodore White, Z~Jamie
  Yao, Ping Yeh, Adam Zalcman, Hartmut Neven, and John~M Martinis.
\newblock Quantum supremacy using a programmable superconducting processor.
\newblock {\em Nature}, 574(7779):505--510, October 2019.

\bibitem{zhu2021quantumcomputationaladvantage60qubit}
Qingling Zhu, Sirui Cao, Fusheng Chen, Ming-Cheng Chen, Xiawei Chen, Tung-Hsun
  Chung, Hui Deng, Yajie Du, Daojin Fan, Ming Gong, Cheng Guo, Chu Guo, Shaojun
  Guo, Lianchen Han, Linyin Hong, He-Liang Huang, Yong-Heng Huo, Liping Li,
  Na~Li, Shaowei Li, Yuan Li, Futian Liang, Chun Lin, Jin Lin, Haoran Qian, Dan
  Qiao, Hao Rong, Hong Su, Lihua Sun, Liangyuan Wang, Shiyu Wang, Dachao Wu,
  Yulin Wu, Yu~Xu, Kai Yan, Weifeng Yang, Yang Yang, Yangsen Ye, Jianghan Yin,
  Chong Ying, Jiale Yu, Chen Zha, Cha Zhang, Haibin Zhang, Kaili Zhang, Yiming
  Zhang, Han Zhao, Youwei Zhao, Liang Zhou, Chao-Yang Lu, Cheng-Zhi Peng,
  Xiaobo Zhu, and Jian-Wei Pan.
\newblock Quantum computational advantage via 60-qubit 24-cycle random circuit
  sampling, 2021.

\bibitem{morvan2023phasetransitionrandomcircuit}
A.~Morvan, B.~Villalonga, X.~Mi, S.~Mandrà, A.~Bengtsson, P.~V. Klimov,
  Z.~Chen, S.~Hong, C.~Erickson, I.~K. Drozdov, J.~Chau, G.~Laun, R.~Movassagh,
  A.~Asfaw, L.~T. A.~N. Brandão, R.~Peralta, D.~Abanin, R.~Acharya, R.~Allen,
  T.~I. Andersen, K.~Anderson, M.~Ansmann, F.~Arute, K.~Arya, J.~Atalaya, J.~C.
  Bardin, A.~Bilmes, G.~Bortoli, A.~Bourassa, J.~Bovaird, L.~Brill,
  M.~Broughton, B.~B. Buckley, D.~A. Buell, T.~Burger, B.~Burkett, N.~Bushnell,
  J.~Campero, H.~S. Chang, B.~Chiaro, D.~Chik, C.~Chou, J.~Cogan, R.~Collins,
  P.~Conner, W.~Courtney, A.~L. Crook, B.~Curtin, D.~M. Debroy, A.~Del~Toro
  Barba, S.~Demura, A.~Di Paolo, A.~Dunsworth, L.~Faoro, E.~Farhi, R.~Fatemi,
  V.~S. Ferreira, L.~Flores Burgos, E.~Forati, A.~G. Fowler, B.~Foxen,
  G.~Garcia, E.~Genois, W.~Giang, C.~Gidney, D.~Gilboa, M.~Giustina, R.~Gosula,
  A.~Grajales Dau, J.~A. Gross, S.~Habegger, M.~C. Hamilton, M.~Hansen, M.~P.
  Harrigan, S.~D. Harrington, P.~Heu, M.~R. Hoffmann, T.~Huang, A.~Huff, W.~J.
  Huggins, L.~B. Ioffe, S.~V. Isakov, J.~Iveland, E.~Jeffrey, Z.~Jiang,
  C.~Jones, P.~Juhas, D.~Kafri, T.~Khattar, M.~Khezri, M.~Kieferová, S.~Kim,
  A.~Kitaev, A.~R. Klots, A.~N. Korotkov, F.~Kostritsa, J.~M. Kreikebaum,
  D.~Landhuis, P.~Laptev, K.~M. Lau, L.~Laws, J.~Lee, K.~W. Lee, Y.~D. Lensky,
  B.~J. Lester, A.~T. Lill, W.~Liu, W.~P. Livingston, A.~Locharla, F.~D.
  Malone, O.~Martin, S.~Martin, J.~R. McClean, M.~McEwen, K.~C. Miao,
  A.~Mieszala, S.~Montazeri, W.~Mruczkiewicz, O.~Naaman, M.~Neeley, C.~Neill,
  A.~Nersisyan, M.~Newman, J.~H. Ng, A.~Nguyen, M.~Nguyen, M.~Yuezhen Niu,
  T.~E. O'Brien, S.~Omonije, A.~Opremcak, A.~Petukhov, R.~Potter, L.~P.
  Pryadko, C.~Quintana, D.~M. Rhodes, E.~Rosenberg, C.~Rocque, P.~Roushan,
  N.~C. Rubin, N.~Saei, D.~Sank, K.~Sankaragomathi, K.~J. Satzinger, H.~F.
  Schurkus, C.~Schuster, M.~J. Shearn, A.~Shorter, N.~Shutty, V.~Shvarts,
  V.~Sivak, J.~Skruzny, W.~C. Smith, R.~D. Somma, G.~Sterling, D.~Strain,
  M.~Szalay, D.~Thor, A.~Torres, G.~Vidal, C.~Vollgraff Heidweiller, T.~White,
  B.~W.~K. Woo, C.~Xing, Z.~J. Yao, P.~Yeh, J.~Yoo, G.~Young, A.~Zalcman,
  Y.~Zhang, N.~Zhu, N.~Zobrist, E.~G. Rieffel, R.~Biswas, R.~Babbush, D.~Bacon,
  J.~Hilton, E.~Lucero, H.~Neven, A.~Megrant, J.~Kelly, I.~Aleiner,
  V.~Smelyanskiy, K.~Kechedzhi, Y.~Chen, and S.~Boixo.
\newblock Phase transition in random circuit sampling, 2023.

\bibitem{aaronson2016complexitytheoreticfoundationsquantumsupremacy}
Scott Aaronson and Lijie Chen.
\newblock Complexity-theoretic foundations of quantum supremacy experiments,
  2016.

\bibitem{Bouland_2018}
Adam Bouland, Bill Fefferman, Chinmay Nirkhe, and Umesh Vazirani.
\newblock On the complexity and verification of quantum random circuit
  sampling.
\newblock {\em Nature Physics}, 15(2):159–163, October 2018.

\bibitem{movassagh2020quantumsupremacyrandomcircuits}
Ramis Movassagh.
\newblock Quantum supremacy and random circuits, 2020.

\bibitem{Bouland_2022}
Adam Bouland, Bill Fefferman, Zeph Landau, and Yunchao Liu.
\newblock Noise and the frontier of quantum supremacy.
\newblock In {\em 2021 IEEE 62nd Annual Symposium on Foundations of Computer
  Science (FOCS)}. IEEE, February 2022.

\bibitem{ware2023sharpphasetransitionlinear}
Brayden Ware, Abhinav Deshpande, Dominik Hangleiter, Pradeep Niroula, Bill
  Fefferman, Alexey~V. Gorshkov, and Michael~J. Gullans.
\newblock A sharp phase transition in linear cross-entropy benchmarking, 2023.

\bibitem{aaronson2020classicalhardnessspoofinglinear}
Scott Aaronson and Sam Gunn.
\newblock On the classical hardness of spoofing linear cross-entropy
  benchmarking, 2020.

\bibitem{barak2020spoofing}
Boaz Barak, Chi-Ning Chou, and Xun Gao.
\newblock {Spoofing Linear Cross-Entropy Benchmarking in Shallow Quantum
  Circuits}.
\newblock In James~R. Lee, editor, {\em 12th Innovations in Theoretical
  Computer Science Conference (ITCS 2021)}, volume 185 of {\em Leibniz
  International Proceedings in Informatics (LIPIcs)}, pages 30:1--30:20,
  Dagstuhl, Germany, 2021. Schloss Dagstuhl -- Leibniz-Zentrum f{\"u}r
  Informatik.

\bibitem{gao2024limitations}
Xun Gao, Marcin Kalinowski, Chi-Ning Chou, Mikhail~D. Lukin, Boaz Barak, and
  Soonwon Choi.
\newblock Limitations of linear cross-entropy as a measure for quantum
  advantage.
\newblock {\em PRX Quantum}, 5:010334, Feb 2024.

\bibitem{watts2024quantumadvantagemeasurementinducedentanglement}
Adam~Bene Watts, David Gosset, Yinchen Liu, and Mehdi Soleimanifar.
\newblock Quantum advantage from measurement-induced entanglement in random
  shallow circuits, 2024.

\bibitem{mcginley2024measurementinducedentanglementcomplexityrandom}
Max McGinley, Wen~Wei Ho, and Daniel Malz.
\newblock Measurement-induced entanglement and complexity in random
  constant-depth 2d quantum circuits, 2024.

\bibitem{NappShallow2022}
John~C. Napp, Rolando~L. La~Placa, Alexander~M. Dalzell, Fernando G. S.~L.
  Brand\~ao, and Aram~W. Harrow.
\newblock Efficient classical simulation of random shallow 2d quantum circuits.
\newblock {\em Phys. Rev. X}, 12:021021, Apr 2022.

\bibitem{Lashkari2013towards}
Nima Lashkari, Douglas Stanford, Matthew Hastings, Tobias Osborne, and Patrick
  Hayden.
\newblock Towards the fast scrambling conjecture.
\newblock {\em Journal of High Energy Physics}, 2013(4):22, Apr 2013.

\bibitem{bentsen2021measurement}
Gregory~S. Bentsen, Subhayan Sahu, and Brian Swingle.
\newblock Measurement-induced purification in large-$n$ hybrid brownian
  circuits.
\newblock {\em Phys. Rev. B}, 104:094304, Sep 2021.

\bibitem{jian2023linear}
Shao-Kai Jian, Gregory Bentsen, and Brian Swingle.
\newblock Linear growth of circuit complexity from brownian dynamics.
\newblock {\em Journal of High Energy Physics}, 2023(8):190, Aug 2023.

\bibitem{hunterjones2019unitarydesignsstatisticalmechanics}
Nicholas Hunter-Jones.
\newblock Unitary designs from statistical mechanics in random quantum
  circuits, 2019.

\bibitem{Dalzell_2022}
Alexander~M. Dalzell, Nicholas Hunter-Jones, and Fernando G. S.~L. Brandão.
\newblock Random quantum circuits anticoncentrate in log depth.
\newblock {\em PRX Quantum}, 3(1), March 2022.

\bibitem{edwards1975theory}
S~F Edwards and P~W Anderson.
\newblock Theory of spin glasses.
\newblock {\em Journal of Physics F: Metal Physics}, 5(5):965, may 1975.

\bibitem{sherrington1975solvable}
David Sherrington and Scott Kirkpatrick.
\newblock Solvable model of a spin-glass.
\newblock {\em Phys. Rev. Lett.}, 35:1792--1796, Dec 1975.

\bibitem{thouless1977solution}
P.~W.~Anderson D.~J.~Thouless and R.~G. Palmer.
\newblock Solution of 'solvable model of a spin glass'.
\newblock {\em The Philosophical Magazine: A Journal of Theoretical
  Experimental and Applied Physics}, 35(3):593--601, 1977.

\bibitem{almeida1978stability}
J~R~L de~Almeida and D~J Thouless.
\newblock Stability of the sherrington-kirkpatrick solution of a spin glass
  model.
\newblock {\em Journal of Physics A: Mathematical and General}, 11(5):983, may
  1978.

\bibitem{parisi1979toward}
G.~Parisi.
\newblock Toward a mean field theory for spin glasses.
\newblock {\em Physics Letters A}, 73(3):203--205, 1979.

\bibitem{gross1984simplest}
D.J. Gross and M.~Mezard.
\newblock The simplest spin glass.
\newblock {\em Nuclear Physics B}, 240(4):431--452, 1984.

\bibitem{mezard1986spin}
M~Mezard, G~Parisi, and M~Virasoro.
\newblock {\em Spin Glass Theory and Beyond}.
\newblock WORLD SCIENTIFIC, 1986.

\bibitem{young1997spin}
A~P Young.
\newblock {\em Spin Glasses and Random Fields}.
\newblock WORLD SCIENTIFIC, 1997.

\bibitem{bray1980replica}
A~J Bray and M~A Moore.
\newblock Replica theory of quantum spin glasses.
\newblock {\em Journal of Physics C: Solid State Physics}, 13(24):L655, aug
  1980.

\bibitem{sachdev1993gapless}
Subir Sachdev and Jinwu Ye.
\newblock Gapless spin-fluid ground state in a random quantum heisenberg
  magnet.
\newblock {\em Phys. Rev. Lett.}, 70:3339--3342, May 1993.

\bibitem{miller1993zerotemperature}
Jonathan Miller and David~A. Huse.
\newblock Zero-temperature critical behavior of the infinite-range quantum
  ising spin glass.
\newblock {\em Phys. Rev. Lett.}, 70:3147--3150, May 1993.

\bibitem{read1995landau}
N.~Read, Subir Sachdev, and J.~Ye.
\newblock Landau theory of quantum spin glasses of rotors and ising spins.
\newblock {\em Phys. Rev. B}, 52:384--410, Jul 1995.

\bibitem{kopec1995continuous}
T.~K. Kope\ifmmode~\acute{c}\else \'{c}\fi{}.
\newblock Discontinuous spin-glass transition in a random quantum heisenberg
  magnet.
\newblock {\em Phys. Rev. B}, 52:9590--9594, Oct 1995.

\bibitem{georges2000mean}
Antoine Georges, Olivier Parcollet, and Subir Sachdev.
\newblock Mean field theory of a quantum heisenberg spin glass.
\newblock {\em Phys. Rev. Lett.}, 85:840--843, Jul 2000.

\bibitem{georges2001quantum}
A.~Georges, O.~Parcollet, and S.~Sachdev.
\newblock Quantum fluctuations of a nearly critical heisenberg spin glass.
\newblock {\em Phys. Rev. B}, 63:134406, Mar 2001.

\bibitem{parcollet1999nonfermi}
Olivier Parcollet and Antoine Georges.
\newblock Non-fermi-liquid regime of a doped mott insulator.
\newblock {\em Phys. Rev. B}, 59:5341--5360, Feb 1999.

\bibitem{fitzpatrick2014nonfermi}
A.~Liam Fitzpatrick, Shamit Kachru, Jared Kaplan, and S.~Raghu.
\newblock Non-fermi-liquid behavior of large-${N}_{B}$ quantum critical metals.
\newblock {\em Phys. Rev. B}, 89:165114, Apr 2014.

\bibitem{sachdev2015bekenstein}
Subir Sachdev.
\newblock Bekenstein-hawking entropy and strange metals.
\newblock {\em Phys. Rev. X}, 5:041025, Nov 2015.

\bibitem{werman2017nonquasiparticle}
Yochai Werman, Steven~A. Kivelson, and Erez Berg.
\newblock Non-quasiparticle transport and resistivity saturation: a view from
  the large-n limit.
\newblock {\em npj Quantum Materials}, 2(1):7, Feb 2017.

\bibitem{davison2017thermoelectric}
Richard~A. Davison, Wenbo Fu, Antoine Georges, Yingfei Gu, Kristan Jensen, and
  Subir Sachdev.
\newblock Thermoelectric transport in disordered metals without quasiparticles:
  The sachdev-ye-kitaev models and holography.
\newblock {\em Phys. Rev. B}, 95:155131, Apr 2017.

\bibitem{gu2020notes}
Yingfei Gu, Alexei Kitaev, Subir Sachdev, and Grigory Tarnopolsky.
\newblock Notes on the complex sachdev-ye-kitaev model.
\newblock {\em Journal of High Energy Physics}, 2020(2):157, Feb 2020.

\bibitem{chowdhury2022sachdev}
Debanjan Chowdhury, Antoine Georges, Olivier Parcollet, and Subir Sachdev.
\newblock Sachdev-ye-kitaev models and beyond: Window into non-fermi liquids.
\newblock {\em Rev. Mod. Phys.}, 94:035004, Sep 2022.

\bibitem{brezin1978planar}
E.~Br{\'e}zin, C.~Itzykson, G.~Parisi, and J.~B. Zuber.
\newblock Planar diagrams.
\newblock {\em Communications in Mathematical Physics}, 59(1):35--51, Feb 1978.

\bibitem{witten1980expansion}
Edward Witten.
\newblock {\em The 1/N Expansion in Atomic and Particle Physics}, pages
  403--419.
\newblock Springer US, Boston, MA, 1980.

\bibitem{coleman1985aspects}
Sidney Coleman.
\newblock {\em 1/N}, page 351–402.
\newblock Cambridge University Press, 1985.

\bibitem{thooft1993planar}
G.~'t~HOOFT.
\newblock {\em A PLANAR DIAGRAM THEORY FOR STRONG INTERACTIONS}, pages 80--92.
\newblock World Scientific, 1993.

\bibitem{moshe2003quantum}
Moshe Moshe and Jean Zinn-Justin.
\newblock Quantum field theory in the large n limit: a review.
\newblock {\em Physics Reports}, 385(3):69--228, 2003.

\bibitem{kitaev2015simple}
Alexei Kitaev.
\newblock A simple model of quantum holography.
\newblock In {\em KITP strings seminar and Entanglement}, volume~12, 2015.

\bibitem{maldacena2016remarks}
Juan Maldacena and Douglas Stanford.
\newblock Remarks on the sachdev-ye-kitaev model.
\newblock {\em Phys. Rev. D}, 94:106002, Nov 2016.

\bibitem{fu2016numerical}
Wenbo Fu and Subir Sachdev.
\newblock Numerical study of fermion and boson models with infinite-range
  random interactions.
\newblock {\em Phys. Rev. B}, 94:035135, Jul 2016.

\bibitem{gu2017local}
Yingfei Gu, Xiao-Liang Qi, and Douglas Stanford.
\newblock Local criticality, diffusion and chaos in generalized
  sachdev-ye-kitaev models.
\newblock {\em Journal of High Energy Physics}, 2017(5):125, May 2017.

\bibitem{berkooz2017higher}
Micha Berkooz, Prithvi Narayan, Moshe Rozali, and Joan Sim{\'o}n.
\newblock Higher dimensional generalizations of the syk model.
\newblock {\em Journal of High Energy Physics}, 2017(1):138, Jan 2017.

\bibitem{fu2017supersymmetric}
Wenbo Fu, Davide Gaiotto, Juan Maldacena, and Subir Sachdev.
\newblock Supersymmetric sachdev-ye-kitaev models.
\newblock {\em Phys. Rev. D}, 95:026009, Jan 2017.

\bibitem{kitaev2018soft}
Alexei Kitaev and S.~Josephine Suh.
\newblock The soft mode in the sachdev-ye-kitaev model and its gravity dual.
\newblock {\em Journal of High Energy Physics}, 2018(5):183, May 2018.

\bibitem{hartnoll2018holographic}
Sean~A Hartnoll, Andrew Lucas, and Subir Sachdev.
\newblock {\em Holographic quantum matter}.
\newblock MIT press, 2018.

\bibitem{sarosi2018ads2}
Gabor Sarosi.
\newblock {AdS$_{2}$ holography and the SYK model}.
\newblock {\em PoS}, Modave2017:001, 2018.

\bibitem{saad2019semiclassicalrampsykgravity}
Phil Saad, Stephen~H. Shenker, and Douglas Stanford.
\newblock A semiclassical ramp in syk and in gravity, 2019.

\bibitem{rosenhaus2019introduction}
Vladimir Rosenhaus.
\newblock An introduction to the syk model.
\newblock {\em Journal of Physics A: Mathematical and Theoretical},
  52(32):323001, jul 2019.

\bibitem{berkooz2021complex}
Micha Berkooz, Vladimir Narovlansky, and Himanshu Raj.
\newblock Complex sachdev-ye-kitaev model in the double scaling limit.
\newblock {\em Journal of High Energy Physics}, 2021(2):113, Feb 2021.

\bibitem{sahu2022entanglement}
Subhayan Sahu, Shao-Kai Jian, Gregory Bentsen, and Brian Swingle.
\newblock Entanglement phases in large-$n$ hybrid brownian circuits with
  long-range couplings.
\newblock {\em Phys. Rev. B}, 106:224305, Dec 2022.

\bibitem{Deshpande_2022}
Abhinav Deshpande, Pradeep Niroula, Oles Shtanko, Alexey~V. Gorshkov, Bill
  Fefferman, and Michael~J. Gullans.
\newblock Tight bounds on the convergence of noisy random circuits to the
  uniform distribution.
\newblock {\em PRX Quantum}, 3(4), December 2022.

\bibitem{krovi2022average}
Hari Krovi.
\newblock Average-case hardness of estimating probabilities of random quantum
  circuits with a linear scaling in the error exponent.
\newblock {\em arXiv preprint arXiv:2206.05642}, 2022.

\bibitem{Aaronson2017supremacy}
Scott Aaronson and Lijie Chen.
\newblock Complexity-theoretic foundations of quantum supremacy experiments.
\newblock In {\em Proceedings of the 32nd Computational Complexity Conference},
  CCC '17, Dagstuhl, DEU, 2017. Schloss Dagstuhl--Leibniz-Zentrum fuer
  Informatik.

\bibitem{kretschmer2021quantum}
William Kretschmer.
\newblock The quantum supremacy tsirelson inequality.
\newblock {\em Quantum}, 5:560, 2021.

\bibitem{cirq}
Cirq Developers.
\newblock Cirq, May 2024.

\bibitem{qsim}
Quantum~AI team and collaborators.
\newblock qsim, September 2020.

\bibitem{altland2010condensed}
Alexander Altland and Ben~D Simons.
\newblock {\em Condensed matter field theory}.
\newblock Cambridge university press, 2010.

\bibitem{klauder1985coherent}
John~R Klauder and Bo-Sture Skagerstam.
\newblock {\em Coherent states: applications in physics and mathematical
  physics}.
\newblock World scientific, 1985.

\end{thebibliography}

\pagebreak
\newpage

\appendix

\section{The spoofing algorithm of \cite{gao2024limitations} for all-to-all random quantum circuits}
\label{app:harvard_discrete}

We first describe a simple greedy partition strategy.

\begin{figure}[ht]
\begin{minipage}{0.75\linewidth}
\begin{algorithm}[H]
	\caption{Greedy partition for the all-to-all architecture}
 \label{alg:greedy_partition}
	\hspace*{\algorithmicindent} \hspace{-26pt} \textbf{Input:}  an $n$-qubit depth-$d$ RQC $U\sim\mathcal{D}_{n,d}$\\
 \hspace*{\algorithmicindent} \hspace{-28pt} \textbf{Output:} a partition $\hat{\mathcal{P}}$ of $\{1,\ldots,n\}$
\begin{algorithmic}[1]
\State{$G\gets\{1,\ldots,n\}$}
\State{$\mathcal{P}\gets\emptyset, \tilde{\mathcal{P}}\gets\emptyset$}
\While{$G\neq\emptyset$}
\State{Choose the smallest $i\in G$}
\State{$P_i\gets\{i\}\cup\{j:\text{there is a gate in $U$ acting on qubits $i$ and $j$}\}$}
\State{$\mathcal{P}\gets\mathcal{P}\cup\{P_i\}$, $\tilde{\mathcal{P}}\gets\tilde{\mathcal{P}}\cup P_i$}
\State{$D_i\gets\{j:\text{there is a gate in $U$ acting on qubit $j$ and some qubit $k\in P_i$}\}$}
\State{$G\gets G\setminus D_i$}
\EndWhile
\State{$\hat{\mathcal{P}}\gets\mathcal{P}\cup\{\{j\}:j\in\{1,\ldots,n\}\setminus\tilde{P}\}$}
\State{\textbf{return} $\hat{\mathcal{P}}$}
\end{algorithmic}
\end{algorithm}
\end{minipage}
\end{figure}

\begin{theorem}
The following properties hold for \Cref{alg:greedy_partition}:
\begin{enumerate}
\item At the end of each iteration of the while loop, $|P_i|\leq d+1$ and $|D_i|\leq d^2+1$. The while loop terminates after at most $n$ iterations.

\item After exiting the while loop, $|\mathcal{P}|\geq\frac{n}{d^2+1}$.

\item After exiting the while loop, for every $P_i\in\mathcal{P}$, for every gate $V$ in $U$ acting on the qubit $i$ and some other qubit $j$, $j\in P_i$. 

\item After exiting the while loop, for every $P_i,P_{i'}\in\mathcal{P}$ with $i\neq i'$, $P_i\cap P_{i'}=\emptyset$, so $\hat{\mathcal{P}}$ is a partition of $\{1,\ldots,n\}$. 
\end{enumerate}
\end{theorem}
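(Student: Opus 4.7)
The plan is to verify each of the four properties in order, using two simple ingredients: the depth-$d$ restriction on $U$ (which caps how many gates touch any one qubit), and the greedy invariant that at each iteration $i$ is the smallest surviving element of $G$ while every qubit in the neighborhood $D_i$ is immediately evicted.

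For Property 1, I would first bound $|P_i|$ by noting that in a depth-$d$ all-to-all circuit each qubit can be acted on by at most one gate per layer, so qubit $i$ has at most $d$ distinct gate-partners, yielding $|P_i|\le d+1$. The bound $|D_i|\le d^2+1$ then follows by a similar count: the $d$ neighbors of $i$ already lie in $P_i$, and each of those neighbors has at most $d-1$ \emph{additional} partners (since one of its $d$ gates is already accounted for by $i$), so the total is at most $(d+1)+d(d-1)=d^2+1$. Termination of the while loop follows by checking that at least one element (namely $i$ itself, using the architectural guarantee that every qubit in $\mathcal{D}_{n,d}$ participates in at least one gate so that $P_i\setminus\{i\}$ is nonempty, hence $i\in D_i$) is removed from $G$ at each iteration, so the loop halts after at most $n$ rounds.

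Property 2 is then an immediate consequence: since $|D_i|\le d^2+1$, each iteration deletes at most $d^2+1$ elements from $G$, so the number of rounds (and hence $|\mathcal{P}|$) is at least $\lceil n/(d^2+1)\rceil\ge n/(d^2+1)$. Property 3 is built directly into the definition of $P_i$ on line 5 and requires no further argument.

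Property 4 is the step I expect to be the main obstacle, since it requires a careful case analysis. The key observation is the contrapositive: if two chosen pivots $i'<i$ were ever to satisfy $j\in P_i\cap P_{i'}$ for some qubit $j$, I would split into cases according to whether $j=i$, $j=i'$, or $j\notin\{i,i'\}$. In every case, the existence of $j$ forces either a direct gate between $i$ and $i'$ (when $j\in\{i,i'\}$) or $j\in P_{i'}$ with $i$ a gate-partner of $j$ (when $j$ is a third qubit). Either way, $i$ is a qubit having a gate with some element of $P_{i'}$, so $i\in D_{i'}$ and thus $i$ would have been deleted from $G$ when $i'$ was processed, contradicting the subsequent selection of $i$ as the smallest element of $G$. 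Pairwise disjointness of the $P_i$'s then combines with the singleton fillers on line 10 to produce a genuine partition $\hat{\mathcal{P}}$ of $\{1,\ldots,n\}$.
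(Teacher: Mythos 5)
Your proof is correct and follows essentially the same reasoning as the paper's: identical counting for $|P_i|$ and $|D_i|$ (bounding neighbors-of-neighbors by $d(d-1)$), the same observation that $i\in D_i$ guarantees progress and hence termination, the same iteration count giving $|\mathcal{P}|\geq n/(d^2+1)$, and the same contradiction argument for pairwise disjointness (the later pivot would have landed in the earlier pivot's deletion set $D_{i'}$). Your Property 4 case split ($j=i$, $j=i'$, $j\notin\{i,i'\}$) is slightly more explicit than the paper's one-line argument, but it is the same underlying observation.
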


\begin{proof}
\text{ }
\begin{enumerate}
\item Consider an iteration of the while loop where $i\in G$ is chosen. Since $U$ has depth $d$, by the definition of $\mathcal{D}_{n,d}$, there are exactly $d$ $2$-qubit gates acting on the qubit $i$. In total, these gates can act on at most $d+1$ distinct qubits, so $|P_i|\leq d+1$. Since $i\in P_i$, we get $P_i\subseteq D_i$. Suppose there exists $j\in D_i\setminus P_i$. Then there exists $k\in P_i\setminus\{i\}$ such that there is a gate acting on qubits $j$ and $k$. We have $|P_i\setminus\{i\}|\leq d$ and since $k\in P_i\setminus\{i\}$, there is a gate acting on qubits $i$ and $k$. Thus, $|D_i|=|P_i|+|D_i\setminus P_i|\leq d+1+d(d-1)=d^2+1$. Since $i\in P_i\subseteq D_i$ and $i\in G$, the size of $G$ decreases by at least one in each iteration. Thus, the while loop terminates after at most $n$ iterations.

\item Since $|D_i|\leq d^2+1$ in every iteration, the while loop is executed at least $\frac{n}{d^2+1}$ times, so $|\mathcal{P}|\geq\frac{n}{d^2+1}$.

\item This property holds by the definition of each $P_i\in\mathcal{P}$.

\item Let $P_i,P_{i'}\in\mathcal{P}$ with $i\neq i'$. WLOG, suppose $i<i'$, so $P_i$ is added to $\mathcal{P}$ before $P_{i'}$ is. Assume by contradiction that there exists $k\in P_i\cap P_{i'}$. Since $k\in P_i$ and there is a gate acting on qubits $k$ and $i'$, $i'\in D_i$. Thus, $i'$ is deleted from $G$ in the iteration $P_i$ is added to $\mathcal{P}$. Clearly, we have a contradiction.
\end{enumerate}
\end{proof}

Next, we prove that the \textsf{XEB} score achieved by \Cref{def:HA} increases exponentially in $n$ in the constant-depth regime.

\begin{theorem}
Let $A$ denote the spoofing algorithm of \cite{gao2024limitations} using the greedy partition strategy described in \Cref{alg:greedy_partition}. For every even $n\geq 2$ and $d\geq 1$, it holds that
\begin{equation}
\Exp_{U\sim\mathcal{D}_{n,d}}[\textsf{XEB}(U,A)]\geq \left(1+\left(\frac{1}{15}\right)^d\right)^{\frac{n}{d^2+1}}.
\end{equation}
\end{theorem}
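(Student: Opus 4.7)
The plan is to reduce $\Exp[\textsf{XEB}(U,A)]$ to a tensor-network expectation on a $4$-copy doubled Hilbert space and then extract an additive contribution of $(1/15)^d$ per center $i\in\mathcal{P}$ of the greedy partition; the bound $K=|\mathcal{P}|\geq n/(d^2+1)$ from Property~2 of the greedy partition theorem then yields the final exponent.

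First I would condition on the random layout, which determines $\hat{\mathcal{P}}$ and classifies each gate of $U$ as either intra-subset (appearing in both $\tilde U$ and $U$) or inter-subset (appearing only in $U$, replaced by identity in $\tilde U$). Given the layout, the $2$-qubit gates are mutually independent Haar-random, and $A_U(x)=|\bra{x}\tilde U\ket{0^n}|^2$, so the replica identity gives
$$\Exp_U[\textsf{XEB}(U,A)\mid\mathrm{layout}] \;=\; 2^n\,\bra{S}\,\Exp_U\!\bigl[\tilde{U}\otimes\tilde{U}^\ast\otimes U\otimes U^\ast\bigr]\,\ket{0^{4n}},$$
where $\bra{S}=\sum_{x\in\{0,1\}^n}\bra{x,x,x,x}$ is the $4$-fold computational-basis coincidence state. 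The expectation factorizes gate-by-gate: every intra-subset gate contributes the full $4$-fold Haar moment $\int G^{\otimes 2}\otimes G^{\ast\otimes 2}\,dG$, while every inter-subset gate contributes $I^{\otimes 2}\otimes\int G\otimes G^\ast\,dG$ (the $\tilde U$ copies are the identity since $\tilde U$ has no such gate).

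Next I would expand the $4$-fold Haar moment of each intra-subset gate in the permutation/Weingarten basis of $S_2$ acting on the pair of $\tilde U$-copies versus the pair of $U$-copies, with diagonal coefficient $1/(d^2-1)=1/15$ at $d=4$ and off-diagonal $-1/[d(d^2-1)]=-1/60$. This recasts the expected $\textsf{XEB}$ as a sum over ``domain-wall'' configurations that assign a permutation (identity or SWAP) to every qubit-layer position; the all-identity configuration contributes exactly $1$ after the $2^n$ normalization, serving as the baseline. For each center $i\in\mathcal{P}$ I would identify a local SWAP domain consisting of SWAP along qubit $i$'s vertical worldline through all $d$ layers and identity everywhere else. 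By Property~3 of the greedy partition, every one of the $d$ gates touching qubit $i$ lies inside $\tilde U_i$, so this domain sees only intra-subset gates and picks up a diagonal Weingarten factor $1/15$ at each, contributing at least $(1/15)^d$. Because the subsets $P_i$ are pairwise disjoint (Property~4), SWAP domains at distinct centers are geometrically independent and can be activated simultaneously on any subset $T\subseteq\mathcal{P}$ with combined contribution $(1/15)^{d|T|}$. Summing over all $T\subseteq\mathcal{P}$ yields a lower bound of $(1+(1/15)^d)^{K}$; averaging over the layout and invoking $K\geq n/(d^2+1)$ (Property~2) then proves the theorem.

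The hard part will be verifying that the configurations discarded in the lower bound are genuinely non-negative contributions: the Weingarten matrix has a negative off-diagonal, and many ``mixed'' configurations (SWAP walls that terminate mid-layer, originate at non-center qubits, or get truncated by the $I^{\otimes 2}$ factor at an inter-subset gate) could a priori contribute with either sign. I would address this by changing basis so that all tensor-network matrix elements are manifestly non-negative on the boundary states $\bra{S}$ and $\ket{0^{4n}}$ -- for instance, projectors onto symmetric and antisymmetric subspaces of each copy-pair, analogous to the second-moment analysis of Dalzell et al.~\cite{Dalzell_2022} -- so that discarding all but the ``centers of $\mathcal{P}$'' contributions becomes an honest lower bound. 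A fallback is a per-center telescoping/hybrid argument that introduces the centers one at a time, using the independence of intra- and inter-subset Haar gates to pick up an exact factor of $1+(1/15)^d$ at each step while averaging out the inter-subset gates conditionally; this sidesteps the sign issue at the cost of more involved conditional bookkeeping.
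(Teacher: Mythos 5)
Your outline reaches the right combinatorics — a $(1/15)^d$ contribution per center $i\in\mathcal{P}$ (using Property~3 to guarantee all $d$ gates on qubit $i$ are intra-subset), independence of centers via Property~4, a sum over subsets giving $(1+(1/15)^d)^{|\mathcal{P}|}$, and Property~2 to convert $|\mathcal{P}|$ into $n/(d^2+1)$ — but you take a genuinely different route from the paper and correctly identify where your route runs into trouble. The paper works in the Pauli basis rather than the permutation/Weingarten basis. It writes $2^n\sum_z(\ket{z}\bra{z})^{\otimes 2}=\sum_w Z(w)\otimes Z(w)$ and then tracks ``trajectories'' of Pauli strings under the circuit. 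The crucial payoff is that the Haar-conjugation transition rules (Facts~\ref{L:first_moment_formula} and~\ref{L:second_moment_formula}) form a substochastic matrix: an intra-subset gate maps any non-identity Pauli to a uniform mixture over all $15$ non-identity Paulis with weight $1/15$ each, an inter-subset gate kills any non-identity Pauli, and the identity maps to the identity with weight $1$. Every trajectory weight $c(\vec{\gamma}_s)\geq 0$ and every boundary overlap $\bra{0^{2n}}\gamma_s\otimes\gamma_s\ket{0^{2n}}\in\{0,1\}$, so discarding all but the constant-$Z(w)$ trajectories supported on the centers is immediately an honest lower bound. Your Weingarten expansion, by contrast, carries the negative off-diagonal $-1/60$, and your fix of passing to symmetric/antisymmetric projectors — while it would restore positivity of the gate tensor — would still require a separate argument that the XEB boundary states $\sum_x\bra{xxxx}$ and $\ket{0^{4n}}$ decompose with non-negative coefficients in that basis; this is not automatic and is exactly the bookkeeping the Pauli basis makes trivial. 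Your telescoping fallback could work but would be considerably more involved. The simplest completion of your plan is to replace your Weingarten expansion by the paper's Pauli-trajectory expansion, which delivers the same per-center factor of $1/15$ without any sign issues.
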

The proof relies on two basic facts about the first and second moments of conjugating a Pauli operator by a Haar-random $2$-qubit gate.

\begin{fact}\label{L:first_moment_formula}
For every $P\in\{I,X,Y,Z\}^{\otimes 2}$,
$$\Exp_{U\sim\mathbb{U}(4)}[UPU^\dagger]=\begin{cases}
I\otimes I & \text{if $P=I\otimes I$,}\\
0 & \text{otherwise.}
\end{cases}$$
\end{fact}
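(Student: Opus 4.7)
The plan is to reduce the claim to two standard facts about the Haar measure on $\mathbb{U}(4)$: its left-invariance and Schur's lemma for the defining representation. The case $P=I\otimes I$ is immediate, since $UIU^\dagger=I_4$ for every $U$, so the expectation is just $I\otimes I$. For the remaining fifteen choices of $P$, I will show that the expectation vanishes by a symmetrization argument.

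First I would define $M:=\Exp_{U\sim\mathbb{U}(4)}[UPU^\dagger]$ and fix an arbitrary $V\in\mathbb{U}(4)$. Using left-invariance of the Haar measure (the substitution $U\mapsto V^\dagger U$ preserves the distribution), I would compute
\[
VMV^\dagger=\Exp_U\left[VUPU^\dagger V^\dagger\right]=\Exp_{U'}\left[U'PU'^\dagger\right]=M,
\]
so $M$ commutes with every $V\in\mathbb{U}(4)$. By Schur's lemma applied to the defining representation of $\mathbb{U}(4)$ on $\mathbb{C}^4$ (which is irreducible), this forces $M=c\,I_4$ for some scalar $c\in\mathbb{C}$.

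To pin down $c$, I would take the trace on both sides. Cyclicity gives $\tr{UPU^\dagger}=\tr{P}$ for every $U$, hence $\tr{M}=\tr{P}$. For any $P\in\{I,X,Y,Z\}^{\otimes 2}\setminus\{I\otimes I\}$, at least one tensor factor is a non-identity Pauli, so $\tr{P}=0$ because $X,Y,Z$ are individually traceless and the trace is multiplicative across tensor products. Combined with $\tr{M}=4c$, this forces $c=0$ and therefore $M=0$.

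There is no real obstacle here; the argument is the textbook ``unitary twirling collapses any operator into the commutant of $\mathbb{U}(4)$, which consists only of scalars.'' The only things to be careful about are invoking Schur's lemma in the right form (irreducibility of the defining $4$-dimensional representation) and noticing that only one-sided invariance of the Haar measure is actually needed, even though both are standard.
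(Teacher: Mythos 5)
Your proof is correct and is the standard unitary-twirling argument: left-invariance of the Haar measure shows the expectation lies in the commutant of $\mathbb{U}(4)$, Schur's lemma (irreducibility of the defining representation) forces it to be a scalar multiple of the identity, and tracelessness of nontrivial Pauli strings fixes the scalar to zero. The paper states this as a basic fact and does not supply a proof, so there is nothing to compare against; your argument is the expected one and is complete.
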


\begin{fact}\label{L:second_moment_formula}
For every $P\in\{I,X,Y,Z\}^{\otimes 2}$,
$$\Exp_{U\sim\mathbb{U}(4)}[(U\otimes U)(P\otimes P)(U^\dagger\otimes U^\dagger)]=\begin{cases}
I\otimes I\otimes I\otimes I & \text{if $P=I\otimes I$,}\\
\frac{1}{15}\sum_{Q\in\{I,X,Y,Z\}^{\otimes 2}\setminus\{I\otimes I\}} Q\otimes Q & \text{otherwise.}
\end{cases}$$
\end{fact}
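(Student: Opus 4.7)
The plan is to reduce Fact 2 to the standard characterization of the commutant of the diagonal action of $\mathbb{U}(4)$ on $(\mathbb{C}^4)^{\otimes 2}$. The case $P = I\otimes I$ is immediate since $U(I\otimes I)U^\dagger = I\otimes I$ for every unitary $U$, so the substance lies in the case $P \neq I\otimes I$. Define $T(P) := \Exp_{U\sim\mathbb{U}(4)}[(U\otimes U)(P\otimes P)(U^\dagger\otimes U^\dagger)]$. By translation invariance of the Haar measure, $T(P)$ commutes with $V\otimes V$ for every $V\in\mathbb{U}(4)$. Schur--Weyl duality then implies $T(P) = \alpha\, I_{16} + \beta\, F$ for some scalars $\alpha,\beta$, where $F$ is the swap operator on the two copies of $\mathbb{C}^4$ and $I_{16}$ denotes the identity on $(\mathbb{C}^4)^{\otimes 2}$.

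Next I would pin down $\alpha$ and $\beta$ by computing two traces. First, $\tr{T(P)} = \tr{P\otimes P} = (\tr{P})^2 = 0$, because any $P \in \{I,X,Y,Z\}^{\otimes 2}\setminus\{I\otimes I\}$ has at least one traceless Pauli factor; using $\tr{I_{16}} = 16$ and $\tr{F} = 4$, this yields $16\alpha + 4\beta = 0$. Second, cyclicity of the trace together with the commutation $[F,V\otimes V] = 0$ gives $\tr{F\, T(P)} = \tr{F(P\otimes P)} = \tr{P^2} = \tr{I_4} = 4$, and since $\tr{F^2} = \tr{I_{16}} = 16$, this produces $4\alpha + 16\beta = 4$. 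Solving the $2\times 2$ linear system yields $\alpha = -1/15$ and $\beta = 4/15$.

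Finally I would convert the answer into the Pauli-basis form stated in Fact 2 using the identity $F = \tfrac{1}{4}\sum_{Q\in\{I,X,Y,Z\}^{\otimes 2}} Q\otimes Q$, which follows from the orthogonality relation $\tr{P_j P_k} = 4\delta_{jk}$ together with the general swap expansion $F = \sum_k O_k \otimes O_k^\dagger$ valid for any Hilbert--Schmidt-orthonormal operator basis $\{O_k\}$ on $\mathbb{C}^4$. Substituting this into $T(P) = -\tfrac{1}{15}I_{16} + \tfrac{4}{15}F$, the $Q = I\otimes I$ contribution from $\tfrac{4}{15}F$ equals $\tfrac{1}{15}I_{16}$, which cancels the $-\tfrac{1}{15}I_{16}$ term exactly and leaves $T(P) = \tfrac{1}{15}\sum_{Q\neq I\otimes I} Q\otimes Q$, as claimed.

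I do not anticipate a genuine obstacle; each step is a routine calculation. The only place to tread carefully is the normalization conventions: the swap on $(\mathbb{C}^d)^{\otimes 2}$ has trace $d$ (not $d^2$), and the Pauli basis is orthonormal with respect to Hilbert--Schmidt only after dividing by $\sqrt{d}$. A standalone alternative would be to verify the formula directly via Weingarten calculus on $\mathbb{U}(4)$, but the Schur--Weyl route above is cleaner and sidesteps the more intricate Weingarten bookkeeping.
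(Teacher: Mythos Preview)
Your proof is correct. The paper does not actually supply a proof of this statement: it is labeled a \emph{Fact} and invoked directly in the trajectory analysis of Appendix~A without further justification. Your Schur--Weyl argument is the standard way to establish it, and every step checks out: the commutant identification, the two trace computations giving $\alpha=-1/15$ and $\beta=4/15$, and the conversion via $F=\tfrac{1}{4}\sum_Q Q\otimes Q$ are all correct, including the normalizations you flagged.
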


\begin{proof}
Suppose the input circuit $U$ consists of $s$ $2$-qubit gates, so we can write $U$ as $U=U_s\ldots U_1$. Let $O\subseteq[s]$ denote the set of indices of gates replaced by \Cref{def:HA}. Let $\bar{U}=\bar{U}_s\cdots \bar{U}_1$ denote the modified circuit constructed by the algorithm so that for every $j\in[s]$, $\bar{U}_j=U_j$ if $j\notin O$ and $\bar{U}_j=I\otimes I$ if $j\in O$. Then we can write
\begin{align*}
\Exp_{U\sim\mathcal{D}_{n,d}}[\textsf{XEB}(U,A)]&=2^n\sum_{z\in\{0,1\}^n}\Exp_{U\sim\mathcal{D}_{n,d}}\left[|\bra{z}U\ket{0^n}|^2\cdot|\bra{z}\bar{U}\ket{0^n}|^2\right]\\
&=2^n\sum_{z\in\{0,1\}^n}\Exp_{U\sim\mathcal{D}_{n,d}}\left[\bra{0^{2n}}(U^\dagger\otimes \bar{U}^\dagger)(\ket{z}\bra{z}\otimes\ket{z}\bra{z})(U\otimes\bar{U})\ket{0^{2n}}\right]\\
&=\sum_{w\in\{0,1\}^n}\Exp_{U\sim\mathcal{D}_{n,d}}\left[\bra{0^{2n}}(U^\dagger\otimes \bar{U}^\dagger)(Z(w)\otimes Z(w))(U\otimes\bar{U})\ket{0^{2n}}\right]
\end{align*}
where we use the notation $Z(w)=\bigotimes_{j=1}^n Z^{w_j}$ for every $w\in\{0,1\}^n$.

Let $w\in\{0,1\}^n$. The above expectation for $w$ can be written as a sum over trajectories that can be enumerated inductively. Each trajectory is of the form $\vec{\gamma}_s=(\gamma_0,\ldots,\gamma_s)$ with associated weights $c(\vec{\gamma}_s)\geq 0$ such that for every $j\in[s]\cup\{0\}$, either $\gamma_j\in\{I,X,Y,Z\}^{\otimes n}$ or $\gamma_j=0$. The trajectories will satisfy
$$\Exp_{U\sim\mathcal{D}_{n,d}}\left[(U^\dagger\otimes \bar{U}^\dagger)(Z(w)\otimes Z(w))(U\otimes\bar{U})\right]=\sum_{\vec{\gamma}_s}c(\vec{\gamma}_s)\gamma_s\otimes\gamma_s$$
where the sum is over all the trajectories. For the base case, we start with $\gamma_0=Z(w)$, $\vec{\gamma}_0=(\gamma_0)$, and $c(\vec{\gamma}_0)=1$. Let $j\in[s-1]\cup\{0\}$, and suppose we have constructed a trajectory $\vec{\gamma}_j=(\gamma_0,\ldots,\gamma_{j})$ by induction with weight $c(\vec{\gamma}_j)\geq 0$. If $\gamma_j=0$, then we have $\vec{\gamma}_{j+1}=(\gamma_0,\ldots,\gamma_{j},\gamma_{j+1})$ with $\gamma_{j+1}=0$ and $c(\vec{\gamma}_{j+1})=0$. Now suppose $\gamma_j\in\{I,X,Y,Z\}^{\otimes n}$. Suppose the $(j+1)$-st gate acts on qubits $a$ and $b$. We first consider the case where $j+1\notin O$, so the non-trivial action of the $(j+1)$-st gate is
$$\Exp_{U\sim\mathbb{U}(4)}[(U^\dagger\otimes U^\dagger)((\gamma_j)_a\otimes(\gamma_j)_b\otimes(\gamma_j)_a\otimes(\gamma_j)_b)(U\otimes U)]$$
By \Cref{L:second_moment_formula}, if $(\gamma_j)_a\otimes (\gamma_j)_b=I\otimes I$, then we have $\vec{\gamma}_{j+1}=(\gamma_0,\ldots,\gamma_j,\gamma_{j+1})$ where $\gamma_{j+1}=\gamma_j$ and $c(\vec{\gamma}_{j+1})=c(\vec{\gamma}_j)\geq 0$. If $(\gamma_j)_a\otimes (\gamma_j)_b\neq I\otimes I$, then by \Cref{L:second_moment_formula}, the expectation over the $(j+1)$-st gate splits $\vec{\gamma}_j$ into 15 trajectories as follows. For every $Q_1\otimes Q_2\in\{I,X,Y,Z\}^{\otimes 2}\setminus\{I\otimes I\}$, we have a $\gamma_{j+1}\in\{I,X,Y,Z\}^{\otimes n}$ where $(\gamma_{j+1})_a=Q_1$, $(\gamma_{j+1})_b=Q_2$, and for every $k\notin\{a,b\}$, $(\gamma_{j+1})_k=(\gamma_j)_k$, and the trajectory $\vec{\gamma}_{j+1}=(\gamma_0,\ldots,\gamma_j,\gamma_{j+1})$ formed has weight $c(\vec{\gamma}_{j+1})=\frac{1}{15}c(\vec{\gamma}_{j})\geq 0$. If $j+1\neq s$, then the next gate will act on the 15 spawned trajectories separately by the linearity of expectation. Now suppose $j+1\in O$. Then for the $(j+1)$-st gate, we need to evaluate
$$\Exp_{U\sim\mathbb{U}(4)}\left[(U^\dagger\otimes I)((\gamma_j)_a\otimes(\gamma_j)_b\otimes(\gamma_j)_a\otimes(\gamma_j)_b)(U\otimes I)\right]$$
By \Cref{L:first_moment_formula}, if $(\gamma_j)_a\otimes (\gamma_j)_b=I\otimes I$, then we have $\vec{\gamma}_{j+1}=(\gamma_0,\ldots,\gamma_j,\gamma_{j+1})$ where $\gamma_{j+1}=\gamma_j$ and $c(\vec{\gamma}_{j+1})=c(\vec{\gamma}_j)\geq 0$. Otherwise, we have
$$\Exp_{U\sim\mathbb{U}(4)}\left[(U^\dagger\otimes I)((\gamma_j)_a\otimes(\gamma_j)_b\otimes(\gamma_j)_a\otimes(\gamma_j)_b)(U\otimes I)\right]=0,$$
so we get $\vec{\gamma}_{j+1}=(\gamma_0,\ldots,\gamma_j,\gamma_{j+1})$ with $\gamma_{j+1}=0$ and $c(\vec{\gamma}_{j+1})=0$.

Overall, we have that
$$\sum_{w\in\{0,1\}^n}\Exp_{U\sim\mathcal{D}_{n,d}}\left[\bra{0^{2n}}(U^\dagger\otimes \bar{U}^\dagger)(Z(w)\otimes Z(w))(U\otimes\bar{U})\ket{0^{2n}}\right]=\sum_{\vec{\gamma}_s}c(\vec{\gamma}_s)\bra{0^{2n}}(\gamma_s\otimes\gamma_s)\ket{0^{2n}},$$
so each trajectory $\vec{\gamma}_s$ contributes $c(\vec{\gamma}_s)\bra{0^{2n}}(\gamma_s\otimes\gamma_s)\ket{0^{2n}}\geq 0$ to the sum since $\bra{0^{2n}}(\gamma_s\otimes\gamma_s)\ket{0^{2n}}\in\{0,1\}$ for every $\gamma_s\in\{I,X,Y,Z\}^{\otimes n}\cup\{0\}$. Therefore, we can count the contributions from a subset of all the trajectories to arrive at a lower bound. 

Notice that for every partition $P_i\in\mathcal{P}$ and every gate $U_j$ in $U$ acting on qubit $i$, $\bar{U}_j=U_j$. Define $R=\{i\in[n]:P_i\in\mathcal{P}\}$ and $W=\{w\in\{0,1\}^n:w_i=0\text{ }\forall i\notin R\}$. Then for every $w\in W$, the trajectory $\vec{\gamma}_s^*=(\gamma_0^*,\ldots,\gamma_s^*)$ where $\gamma_0^*=\cdots=\gamma_s^*=Z(w)$ exists and $c(\vec{\gamma}_s^*)\geq (\frac{1}{15})^{|w|d}$. Therefore,
\begin{equation}
\Exp_{U\sim\mathcal{D}_{n,d}}[\textsf{XEB}(U,A)]\geq\sum_{k=0}^{|\mathcal{P}|}\binom{|\mathcal{P}|}{k}\left(\frac{1}{15}\right)^{kd}=\left(1+\left(\frac{1}{15}\right)^d\right)^{|\mathcal{P}|}\geq\left(1+\left(\frac{1}{15}\right)^d\right)^{\frac{n}{d^2+1}}.
\end{equation}
\end{proof}

\section{Brownian Circuit Calculations}

Here we derive expressions for moments of the transition probability $q_U(\mathbf{x}) = \magn{\bra{\mathbf{x}} U \ket{\mathbf{0}}}^2$ for Brownian circuits. In what follows, $\mathbb{E} = \mathbb{E}_{U \sim \mathcal{B}}$ is always understood to mean the expectation over Brownian circuits. We begin with relatively simple expressions and add complexity step by step as the final calculations can become somewhat complicated. Because there are many equations, we list the main results here:
\begin{itemize}
    \item The exact expectation value of the return probability $\mathbb{E}[q_U(\mathbf{0})]$ appears in Eq. \eqref{eq:p0expecval}.
    \item The exact expectation value of the transition probability $\mathbb{E}[q_U(\mathbf{x})]$ appears in Eq. \eqref{eq:k1final}. We perform the sum over bitstrings in Eq. \eqref{eq:k1sumbitstrings}.
    \item A saddle-point calculation of the transition probability $\mathbb{E}[q_U(\mathbf{x})]$ appears in Eq. \eqref{eq:k1saddlepoint}.
    \item A saddle-point calculation of the $k$th-order transition probability $\mathbb{E}[q_U^k(\mathbf{x})]$ appears in Eq. \eqref{eq:arbitrarykfinal}. We perform the sum over bitstrings in Eq. \eqref{eq:arbitraryksumbitstrings}.
    \item A large-$n$ calculation of the expected overlap $\mathbb{E}[q_U(\mathbf{x}) A_U(\mathbf{x})]$ appears in Eq. \eqref{eq:pqoverlap}, where $A_U(\mathbf{x})$ is the disjoint circuit used in the algorithm of \cite{gao2024limitations}. We perform the sum over bitstrings in Eq. \eqref{eq:pqoverlapsumbits} by making further large-$n$ approximations.
    \item A large-$n$ calculation of the $c$th moment of the expected overlap $\mathbb{E}[q_U^c(\mathbf{x}) A_U^c(\mathbf{x})]$ appears in Eq \eqref{eq:pqcthmoment}, where $k = 2c$. We perform the sum over bitstrings in Eq. \eqref{eq:pqcthmomentsumbitstrings}.
    \item A comparison of the \textsf{XEB} variance for the spoofing algorithm of \cite{gao2024limitations} versus the true distribution appears in Eq. \eqref{eq:varspoof} and \eqref{eq:vartrue}.
\end{itemize}

\pagebreak
\newpage

\subsection{First Moment}

We start by evaluating expectation values of the first moment of the transition probability, corresponding to $k = 1$.

\subsubsection{Return Probability}

The expectation value of the return probability for a single all-to-all Brownian circuit on $n$ qubits is
\begin{equation}
	\mathbb{E}[q_U(\mathbf{0})] = \mathbb{E}\left[\magn{\bra{\mathbf{0}} U \ket{\mathbf{0}}}^2 \right]
\end{equation}
where we use boldface to denote bitstring vectors $\mathbf{0} \equiv 0^n$ and where the expectation value is taken over circuit realizations. Here the Brownian unitary evolution $U$ is defined by:
\begin{equation}
	U = \prod_{t=0}^T U_t = \prod_t \exp \left(-i \sum_{\substack{i<j \\ \alpha\beta}} J_{ij}^{\alpha \beta}(t) \sigma_i^{\alpha} \sigma_j^{\beta} \Delta t \right)
\end{equation}
where $\sigma_i^{\alpha}$ are the Pauli matrices on site $i$, and we have discretized the time evolution into steps of size $\Delta t$ with the understanding that we will consider the limit $\Delta t \rightarrow 0$ at the end of the calculation. The Brownian coupling coefficients $J_{ij}^{\alpha \beta}(t)$ are Gaussian white-noise variables with zero mean and variance
\begin{equation}
	\mathbb{E} \left[ J_{ij}^{\alpha \beta}(t) J_{i'j'}^{\alpha' \beta'}(t') \right] = \delta_{ii'} \delta_{jj'} \delta^{\alpha \alpha'} \delta^{\beta \beta'} \delta_{tt'} \frac{J}{n \Delta t}.
\end{equation}

It is convenient to write the return probability as a doubled system with two `replicas,' which we will subsequently refer to as $L$ and $R$:
\begin{equation}
	q_U(\mathbf{0}) = \bra{\mathbf{0}} U \ket{\mathbf{0}} \left(\bra{\mathbf{0}} U \ket{\mathbf{0}}\right)^* = \bra{\mathbf{00}} U \otimes U^* \ket{\mathbf{00}}.
\end{equation}
Further, the physics of the problem becomes more transparent when we use a proper time-reversal operation instead of the complex conjugate. We define the time-reversal operator $\mathcal{T}$ as:
\begin{equation}
	\mathcal{T}(U) \equiv U^{\mathcal{T}} = (i Y)^{\dagger} U^* (i Y)
\end{equation}
where
\begin{equation}
	iY = \prod_i \left(i \sigma_i^y\right) = \bigotimes_i \begin{bmatrix}
0 & 1 \\
-1 & 0 \\
\end{bmatrix}_i
\end{equation}
and $(i Y)^{\dagger} (i Y) = (i Y) (i Y)^{\dagger} = 1$. Note that the Pauli operators uniformly reverse sign under the time-reversal operation: $\mathcal{T}(\sigma_i^{\alpha}) = - \sigma_i^{\alpha}$, whereas they do not under complex conjugation. Explicitly, bra and ket vectors transform under this operation as follows:
\begin{align}
	\label{eq:timerevbrakets}
	\left(i \sigma^y\right)^{\dagger} \ket{0} &= \ket{1} \nonumber \\
	\left(i \sigma^y \right)^{\dagger} \ket{1} &= - \ket{0} \nonumber \\
	\bra{0} i \sigma^y &= \bra{1} \nonumber \\
	\bra{1} i \sigma^y &= -\bra{0}.
\end{align}
We therefore find the following expression for the return probability:
\begin{align}
	q_U(\mathbf{0}) &= \bra{\mathbf{00}} U \otimes (iY) (iY)^{\dagger} U^* (iY) (iY)^{\dagger} \ket{\mathbf{00}} \nonumber \\
		&= \bra{\mathbf{01}} U \otimes U^{\mathcal{T}} \ket{\mathbf{01}} \nonumber \\
		&= \bra{\mathbf{01}} \prod_t U_t \otimes U_t^{\mathcal{T}} \ket{\mathbf{01}}
\end{align}

Next we consider the average over circuit realizations. Because each timestep of the circuit features independent Brownian variables, we may evaluate the expectation value at each timestep independently:
\begin{equation}
	\mathbb{E}[q_U(\mathbf{0})] = \mathbb{E}\left[ \bra{\mathbf{01}} \prod_t U_t \otimes U_t^{\mathcal{T}} \ket{\mathbf{01}} \right] = \bra{\mathbf{01}} \prod_t \mathbb{E}\left[ U_t \otimes U_t^{\mathcal{T}} \right] \ket{\mathbf{01}}.
\end{equation}
To evaluate the disorder average, we first expand the exponential in $U_t \otimes U_t^{\mathcal{T}}$ using the Baker-Campbell-Hausdorff formula and keep only the first-order terms with the expectation that we will take the limit $\Delta t \rightarrow 0$ at the end of the calculation:
\begin{align}
    U_t \otimes U_t^{\mathcal{T}} &= \exp \left[-i \sum_{\substack{i<j \\ \alpha\beta}} J_{ij}^{\alpha \beta}(t) \left( \sigma_{iL}^{\alpha} \sigma_{jL}^{\beta} - \sigma_{iR}^{\alpha} \sigma_{jR}^{\beta} \right) \Delta t \right] \nonumber \\
    & \approx \prod_{\substack{i<j \\ \alpha\beta}} \exp \left[-i J_{ij}^{\alpha \beta}(t) \left( \sigma_{iL}^{\alpha} \sigma_{jL}^{\beta} - \sigma_{iR}^{\alpha} \sigma_{jR}^{\beta} \right) \Delta t \right]
\end{align}
where we use the notation $L,R$ to refer to spins in the first and second replica, respectively. Then, because the Brownian couplings are Gaussian white-noise variables, we can write the disorder-average expectation value as a product of Gaussian integrals:
\begin{equation}
	\mathbb{E}\left[ U_t \otimes U_t^{\mathcal{T}} \right] = \prod_{\substack{i<j \\ \alpha \beta}} \left( \int d J_{ij}^{\alpha \beta}(t) \exp\left[ - \frac{\left( J_{ij}^{\alpha \beta}(t) \right)^2 }{2J/\Delta t n} \right] \exp \left[ - i J_{ij}^{\alpha \beta}(t) \Delta t \left( \sigma_{iL}^{\alpha} \sigma_{jL}^{\beta} - \sigma_{iR}^{\alpha} \sigma_{jR}^{\beta}  \right) \right] \right).
\end{equation}
We may immediately evaluate these Gaussian integrals to obtain (again, to lowest order in $\Delta t$):
\begin{equation}
	\mathbb{E}\left[ U_t \otimes U_t^{\mathcal{T}} \right] = \exp \left[ - \frac{J}{2N} \sum_{\substack{i<j \\ \alpha \beta}} \left( \sigma_{iL}^{\alpha} \sigma_{jL}^{\beta} - \sigma_{iR}^{\alpha} \sigma_{jR}^{\beta} \right)^2 \Delta t \right].
\end{equation}
By stacking these individual timesteps up again sequentially, we therefore find that the expectation value of the return probability can be written in terms of imaginary-time evolution
\begin{equation}
	\mathbb{E}[q_U(\mathbf{0})] = \bra{\mathbf{01}} e^{-H_{\mathrm{eff}} T} \ket{\mathbf{01}}
\end{equation}
with an effective Hamiltonian
\begin{equation}
	H_{\mathrm{eff}} = \frac{J}{2N} \sum_{\substack{i<j \\ \alpha \beta}} \left( \sigma_{iL}^{\alpha} \sigma_{jL}^{\beta} - \sigma_{iR}^{\alpha} \sigma_{jR}^{\beta} \right)^2.
    \label{eq:k1effectiveham}
\end{equation}
After some manipulation, this effective Hamiltonian can be rewritten entirely in terms of Heisenberg interactions:
\begin{equation}
    H_{\mathrm{eff}} = - \frac{J}{2 n} \left( \sum_i \vec{\sigma}_{iL} \cdot \vec{\sigma}_{iR} \right)^2 + \frac{9 J}{2} (n-1) + \frac{J}{2 n} \sum_i \left( \vec{\sigma}_{iL} \cdot \vec{\sigma}_{iR} \right)^2
	\label{eq:effhamk2}
\end{equation}

\subsubsection{Effective Hamiltonian Spectrum}
\label{sec:exactspectrum}

Due to its simple form and mutually-commuting Heisenberg interactions, the effective Hamiltonian \eqref{eq:effhamk2} is exactly solvable. First, note that the Heisenberg interaction is easily diagonalized:
\begin{equation}
	\vec{\sigma}_{iL} \cdot \vec{\sigma}_{iR} = -3 \ket{s}\bra{s}_i + \ket{t}\bra{t}_i + \ket{11} \bra{11}_i + \ket{00} \bra{00}_i
\end{equation}
where
\begin{align}
	\ket{s} &= \frac{1}{\sqrt{2}}(\ket{01}-\ket{10})_{LR} \nonumber \\
	\ket{t} &= \frac{1}{\sqrt{2}}(\ket{01}+\ket{10})_{LR} \nonumber \\
\end{align}
are the singlet and triplet states that, together with the vectors $\ket{00},\ket{11}$, span the two-qubit Hilbert space for $L,R$. Similarly,
\begin{equation}
	\left( \vec{\sigma}_{iL} \cdot \vec{\sigma}_{iR} \right)^2 = 9 \ket{s}\bra{s}_i + \ket{t}\bra{t}_i + \ket{11} \bra{11}_i + \ket{00} \bra{00}_i
\end{equation}
Therefore, the eigenvectors of $H_{\mathrm{eff}}$ are product states of the form
\begin{equation}
	\ket{\Psi} = \bigotimes_i \ket{\psi}_{iL,iR}
\end{equation}
where each $\ket{\psi}_{iL,iR}$ can independently take one of the values $\ket{s},\ket{t},\ket{00},\ket{11}$. The unique ground state is a tensor product of all singlet states
\begin{equation}
	\ket{\Omega} = \bigotimes_i \ket{s}_{iL,iR}
\end{equation}
while higher-energy states have excitations $\ket{t}_i,\ket{00}_i,\ket{11}_i$. The eigenvalues are
\begin{align}
	H_{\mathrm{eff}} \ket{\Psi} &= \left( -\frac{J}{2N} \left[ (-3)(n-z) + z \right]^2 + \frac{9J}{2}(n-1) + \frac{J}{2N} \left[ 9(n-z) + z \right] \right) \ket{\Psi} \nonumber \\
		&= \left( \frac{4 J z (3 n - 2z - 1)}{n} \right) \ket{\Psi}
\end{align}
where the integer $z = 0,1,\ldots,n$ counts the number of excitations ($\ket{t},\ket{00},\ket{11}$) appearing in the state $\ket{\Psi}$.

We can use this exact spectrum to calculate the return probability expectation value. First, we write the state $\ket{\mathbf{01}}$ as a sum over the eigenvectors of $H_{\mathrm{eff}}$. To do so, note that for each site $i$ we can write
\begin{equation}
	\ket{01}_i = \frac{1}{\sqrt{2}} \left( \ket{s}_i + \ket{t}_i \right) = \frac{1}{\sqrt{2}} \sum_{z_i = 0,1} \ket{z_i}
\end{equation}
where the boolean variable $z_i$ indicates whether we have the state $\ket{s}_i = \ket{z_i=0}$ or the state $\ket{t}_i = \ket{z_i=1}$. Applying this notation to the entire state $\ket{\mathbf{01}}$, we find
\begin{equation}
	\ket{\mathbf{01}} \equiv \prod_i \ket{01}_i = \frac{1}{2^{n/2}} \prod_i \left( \ket{s}_i + \ket{t}_i \right) = \frac{1}{2^{n/2}} \sum_{\mathbf{z} = \{0,1\}} \ket{\mathbf{z}}
\end{equation}
where the bitstring $\mathbf{z} = z_1 z_2 \ldots z_N$ indicates which eigenstate we have. For example, the all-zero bitstring $\mathbf{z} = \mathbf{0}$ is the ground state $\ket{\Omega}$. With this notation, each state $\ket{\mathbf{z}}$ is an eigenvector of $H_{\mathrm{eff}}$ with eigenvalue
\begin{equation}
	H_{\mathrm{eff}} \ket{\mathbf{z}} = \left( \frac{4 J \magn{\mathbf{z}} (3 n - 2 \magn{\mathbf{z}} - 1)}{n} \right) \ket{\mathbf{z}}
\end{equation}
where $\magn{\mathbf{z}}$ is the Hamming weight of the bitstring $\mathbf{z}$. Putting this all together, and noting that $\bracket{\mathbf{01}}{\mathbf{z}} = 2^{-n/2}$, we finally find
\begin{equation}
	\mathbb{E}[q_U(\mathbf{0})] = \bra{\mathbf{01}} e^{-H_{\mathrm{eff}} T} \ket{\mathbf{01}} = \frac{1}{2^n} \sum_{\mathbf{z} = \{0,1\}} \exp \left(- \frac{4 J T \magn{\mathbf{z}} (3 n - 2 \magn{\mathbf{z}} - 1)}{n} \right).
	\label{eq:p0expecval}
\end{equation}
Notice that as $T \rightarrow 0$, this expression goes to $\mathbb{E}[q_U(\mathbf{0})] \rightarrow 1$ as expected. For $T \rightarrow \infty$, we have $\mathbb{E}[q_U(\mathbf{0})] \rightarrow 2^{-n}$.
% \gsb{Can probably do this sum for $n \rightarrow \infty$ by considering a continuum limit and a Gaussian integral}

\subsubsection{Transition Probability for Arbitrary Bitstring}

We now generalize the above derivation to calculate the probability $q_U(\mathbf{x})$ of transitioning from $\mathbf{0}$ to an arbitrary bitstring $\mathbf{x}$. Introducing two replicas and using the time-reversal operation as above, we find
\begin{align}
	q_U(\mathbf{x}) &= \magn{\bra{\mathbf{x}} U \ket{\mathbf{0}}}^2 \nonumber \\
		&= \bra{\mathbf{xx}} U \otimes U^* \ket{\mathbf{00}} \nonumber \\
		&= \left[ \bra{\mathbf{xx}} \mathbb{I} \otimes (iY) \right] U \otimes U^{\mathcal{T}} \ket{\mathbf{01}} \nonumber \\
		&= (-1)^{\mathbf{x}} \bra{\mathbf{x\overline{x}}} U \otimes U^{\mathcal{T}} \ket{\mathbf{01}}
	\label{eq:pxderiv}
\end{align}
where in the last line we have used Eq. \eqref{eq:timerevbrakets} and introduced the inverse $\overline{\mathbf{x}}$ of the bitstring $\mathbf{x}$, with $\overline{0} = 1$ and $\overline{1} = 0$. To find the expectation value $\mathbb{E}[q_U(\mathbf{x})]$, the calculation proceeds identically to the derivation above, except that we must now consider the overlap between the states $\ket{\mathbf{x\overline{x}}}$ and $\ket{\mathbf{z}}$. To calculate this, we collect the relevant overlaps in Table \ref{tab:overlaps}
\begin{equation}
\begin{tabular}{c | c | c}
$x_i$ & $z_i$ & overlap \\
\hline
$0$ & $0$ & $\bracket{01}{s} = 1/\sqrt{2}$ \\
$0$ & $1$ & $\bracket{01}{t} = 1/\sqrt{2}$ \\
$1$ & $0$ & $\bracket{10}{s} = -1/\sqrt{2}$ \\
$1$ & $1$ & $\bracket{10}{t} = 1/\sqrt{2}$ \\
\end{tabular}
\label{tab:overlaps}
\end{equation}
We therefore find
\begin{equation}
	\bracket{\mathbf{x\overline{x}}}{\mathbf{z}} = \prod_i \bracket{x_i \overline{x_i}}{z_i} = \frac{1}{2^{n/2}} \prod_i (-1)^{x_i} (-1)^{x_i z_i} = \frac{1}{2^{n/2}} (-1)^{\mathbf{x}} (-1)^{\mathbf{x} \cdot \mathbf{z}}
\end{equation}
where $\mathbf{x} \cdot \mathbf{z} \equiv \left( \sum_i x_i z_i \right) \mod 2$ is the bitwise dot product between bitstrings. We can now collect our results together to evaluate the expectation value $\mathbb{E}[q_U(\mathbf{x})]$. Crucially, the leading factor of $(-1)^{\mathbf{x}}$ cancels with the leading factor in Eq. \eqref{eq:pxderiv}, leaving us with the final result
\begin{equation}
	\mathbb{E}[q_U(\mathbf{x})] = \frac{1}{2^n} \sum_{\mathbf{z} = \{0,1\}} (-1)^{\mathbf{x} \cdot \mathbf{z}} \exp \left(- \frac{4 J T \magn{\mathbf{z}} (3 n - 2 \magn{\mathbf{z}} - 1)}{n} \right).
    \label{eq:k1final}
\end{equation}
Comparing this with Eq. \eqref{eq:p0expecval} we notice that the only difference is the factor of $(-1)^{\mathbf{x} \cdot \mathbf{z}}$, which does not affect the spectrum of $H_{\mathrm{eff}}$. Notice that in the limit $T \rightarrow 0$, this expression goes to $\mathbb{E}[q_U(\mathbf{x})] \rightarrow \delta_{\mathbf{x},\mathbf{0}}$ as expected. Notice also that as $T \rightarrow \infty$, we have $\mathbb{E}[q_U(\mathbf{x})] \rightarrow 2^{-n}$, i.e. at long times all bitstrings $\mathbf{x}$ are equally probable.

Finally, we can use this result to compute the sum over all bitstrings $\mathbf{x}$. We make use of the identity
\begin{equation}
	\sum_{\mathbf{x} = \{0,1\}} (-1)^{\mathbf{x} \cdot \mathbf{z}} = \prod_i \sum_{x_i = 0,1} (-1)^{x_i z_i} = \prod_i (1+(-1)^{z_i}) = 2^n \delta_{\mathbf{z},\mathbf{0}}.
    \label{eq:bitstringorthog}
\end{equation}
That is, after taking the sum over all bitstrings $\mathbf{x}$, the only term that survives is the $\mathbf{z} = \mathbf{0}$ term. We therefore find
\begin{equation}
	\sum_{\mathbf{x}} \mathbb{E}[q_U(\mathbf{x})] = 1
    \label{eq:k1sumbitstrings}
\end{equation}
as expected for a probability distribution.

\subsubsection{Path Integral}

Although for $k=1$ we are able to solve for the spectrum of $H_{\mathrm{eff}}$ exactly, it will be useful to also solve the problem using mean-field (large-$n$) methods, which will be the only tool available for larger $k > 1$. We start with the expectation value written in terms of the effective Hamiltonian:
\begin{equation}
    \mathbb{E}[q_U(\mathbf{x})] = (-1)^{\mathbf{x}} \bra{\mathbf{x\overline{x}}} e^{-H_{\mathrm{eff}} T} \ket{\mathbf{01}}
\end{equation}
where we subsequently ignore all terms in the Hamiltonian that are subleading in $n$:
\begin{equation}
    H_{\mathrm{eff}} \approx - \frac{J}{2 n} \left( \sum_i \vec{\sigma}_{iL} \cdot \vec{\sigma}_{iR} \right)^2 + \frac{9 J}{2} n + \mathcal{O}(1).
    \label{eq:k1effhamlargen}
\end{equation}
We convert this expression into a path integral by introducing the overcomplete basis of coherent spin states $|\vec{\Omega}_{ia} \rangle$ parameterized by the SO(3) unit vectors $\vec{\Omega}_{ia}$, where $a = L,R$. The coherent states obey the eigenvalue equation
\begin{equation}
    \vec{\Omega}_{ia} \cdot \vec{\sigma}_{ia} |\vec{\Omega}_{ia}\rangle = |\vec{\Omega}_{ia}\rangle
\end{equation}
and have the completeness relation
\begin{equation}
    \mathbb{I} = \frac{1}{2 \pi} \int d^2 \vec{\Omega}_{ia} | \vec{\Omega}_{ia} \rangle \langle \vec{\Omega}_{ia} |.
\end{equation}
We insert this resolution of the identity at each timestep of the propagator $e^{-H_{\mathrm{eff}} T}$ in the usual way to obtain a path integral expression \cite{altland2010condensed}. To turn the spins $\vec{\sigma}_{ia}$ into coherent state vectors, we make use of the `lower symbol' \cite{klauder1985coherent}
% \begin{equation}
%     \sigma^{\alpha}_{ia} = \frac{3}{2 \pi} \int d^2 \vec{\Omega}_{ia} | \vec{\Omega}_{ia} \rangle \langle \vec{\Omega}_{ia} | \ \Omega^{\alpha}_{ia}.
% \end{equation}
\begin{equation}
    \langle \vec{\Omega}_{ia}| \sigma^{\alpha}_{ia} | \vec{\Omega}_{ia} \rangle = \Omega_{ia}^{\alpha}
\end{equation}
This effectively converts the Pauli spin operators into classical SO(3) unit vectors $\vec{\sigma}_{ia} \rightarrow \vec{\Omega}_{ia}$. The path integral expression is
\begin{align}
    (-1)^{\mathbf{x}} \bra{\mathbf{x\overline{x}}} e^{-H_{\mathrm{eff}} T} \ket{\mathbf{01}} = \int^{\mathbf{x\overline{x}}}_{\mathbf{01}} \mathcal{D} \vec{\Omega}_{ia} \ e^{-I[\vec{\Omega}_{ia}]}
\end{align}
with action (to leading order in $n$)
\begin{equation}
    I[\vec{\Omega}_{ia}] = \int_0^T d t \left[ - \frac{J}{2} n \left( \frac{1}{n} \sum_i \vec{\Omega}_{iL} \cdot \vec{\Omega}_{iR} \right)^2 + \frac{9 J}{2} n + \mathcal{O}(1) \right] - \ln (-1)^{\mathbf{x}}
\end{equation}
where we have buried the overlap terms $\langle \vec{\Omega}_{ia,t+1} | \vec{\Omega}_{ia,t} \rangle$ in the definition of the integration measure $\mathcal{D} \vec{\Omega}_{ia}$.

We now introduce the mean field
\begin{equation}
    G(t) = \frac{1}{n} \sum_i \vec{\Omega}_{iL} \cdot \vec{\Omega}_{iR}
\end{equation}
and its associated Lagrange multiplier $F(t)$ via the identity
\begin{equation}
    1 = \int \mathcal{D} F \mathcal{D} G \ \exp \left[ n \int_0^T dt \ F(t) \left( G(t) - \frac{1}{n} \sum_i \vec{\Omega}_{iL} \cdot \vec{\Omega}_{iR} \right) \right]
\end{equation}
to write the path integral as
\begin{equation}
    (-1)^{\mathbf{x}} \bra{\mathbf{x\overline{x}}} e^{-H_{\mathrm{eff}} T} \ket{\mathbf{01}} = \int^{\mathbf{x\overline{x}}}_{\mathbf{01}} \mathcal{D} \vec{\Omega}_{ia} \int \mathcal{D} F \mathcal{D} G \ e^{-I[\vec{\Omega}_{ia},F,G]}
\end{equation}
with action
\begin{equation}
    I[\vec{\Omega}_{ia},F,G] = n \int_0^T dt \left[ - \frac{J}{2} G^2 + \frac{9 J}{2} - F G + \frac{1}{n} \sum_i F \ \vec{\Omega}_{iL} \cdot \vec{\Omega}_{iR} \right] - \ln (-1)^{\mathbf{x}}.
    \label{eq:k1pathintegralaction}
\end{equation}
Next, we integrate over the spin part of the action, which has been decoupled by the fields $G,F$. It is convenient to transform this back into an operator representation, yielding:
\begin{align}
    &(-1)^{\mathbf{x}} \int^{\mathbf{x\overline{x}}}_{\mathbf{01}} \mathcal{D} \vec{\Omega}_{ia} \exp \left( - \int_0^T dt \sum_i F(t) \ \vec{\Omega}_{iL} \cdot \vec{\Omega}_{iR} \right) \nonumber \\
    & \quad \quad \quad \quad \quad \quad \quad \quad = (-1)^{\mathbf{x}} \bra{\mathbf{x\overline{x}}} \mathbb{T} \exp \left[ - \int_0^T dt \ F(t) \sum_i \vec{\sigma}_{iL} \cdot \vec{\sigma}_{iR} \right] \ket{\mathbf{01}} \nonumber \\
        & \quad \quad \quad \quad \quad \quad \quad \quad = \prod_i (-1)^{x_i} \bra{x_i\overline{x}_i} \mathbb{T} \exp \left[ - \int_0^T dt \ F(t) \vec{\sigma}_{iL} \cdot \vec{\sigma}_{iR} \right] \ket{01} 
\end{align}
where $\mathbb{T}$ is the time-ordering operation. Our final path integral expression for the expectation value of the return probability is therefore
\begin{equation}
    \mathbb{E}[q_U(\mathbf{x})] = (-1)^{\mathbf{x}} \bra{\mathbf{x\overline{x}}} e^{-H_{\mathrm{eff}} T} \ket{\mathbf{01}} = \int \mathcal{D} F \mathcal{D} G \ e^{-I[F,G]}
\end{equation}
with action
\begin{equation}
    I[F,G] = - \sum_i \ln \left( (-1)^{x_i} \bra{x_i \overline{x}_i} \mathbb{T} \exp \left[ - \int_0^T dt \ F(t) \vec{\sigma}_{iL} \cdot \vec{\sigma}_{iR} \right] \ket{01} \right) + n \int_0^T dt \left[ - \frac{J}{2} G^2 + \frac{9 J}{2} - F G \right].
\end{equation}

\subsubsection{Large-$n$ Limit}

Next, we solve for the saddle points of this action in the limit of large $n$. This relies on the fact that the action is proportional to $n$, so that fluctuations are suppressed by $1/n$ (this is equivalent to a semiclassical expansion, where $1/n$ plays the role of $\hbar$). We assume that the saddle-points $F,G$ are time-independent, so that the action takes the form
\begin{equation}
    I[F,G] = - \sum_i \ln \left( (-1)^{x_i} \bra{x_i \overline{x}_i} \exp \left[ - F T \vec{\sigma}_{iL} \cdot \vec{\sigma}_{iR} \right] \ket{01} \right) + n \left[ - \frac{JT}{2} G^2 + \frac{9 JT}{2} - F T G \right].
\end{equation}
We can exactly evaluate the first term since it is just a Heisenberg coupling:
\begin{align}
    (-1)^{x_i} \bra{x_i \overline{x}_i} \exp \left[ - F T \vec{\sigma}_{iL} \cdot \vec{\sigma}_{iR} \right] \ket{01} &= (-1)^{x_i} \bra{x_i \overline{x}_i} \left( e^{3 F T} \ket{s} \bra{s}_i + e^{-F T} \ket{t} \bra{t}_i \right) \ket{01} \nonumber \\
    &= \frac{1}{2} e^{3 F T} + (-1)^{x_i} \frac{1}{2} e^{-F T}.
    \label{eq:simplifyheis}
\end{align}
Hence our action is:
\begin{equation}
    I[F,G] = - \sum_i \ln \left[ \frac{1}{2} e^{3 F T} + (-1)^{x_i} \frac{1}{2} e^{-F T} \right] + n \left[ - \frac{JT}{2} G^2 + \frac{9 JT}{2} - F T G \right]
\end{equation}
We now look for saddle points that solve the Euler-Lagrange equations of motion $\partial I / \partial G = 0$, $\partial I / \partial F = 0$. The equation of motion for $G$ gives
\begin{equation}
    F = - J G
\end{equation}
which we substitute back into the action to obtain:
\begin{equation}
    I^* = - \sum_i \ln \left[ \frac{1}{2} e^{-3 G J T} + (-1)^{x_i} \frac{1}{2} e^{G J T} \right] + n \left[ \frac{JT}{2} G^2 + \frac{9 JT}{2} \right].
    \label{eq:spactiong}
\end{equation}
Solving the second equation of motion, we find the saddle point equation of motion for $G$:
\begin{equation}
    G = \frac{1}{n} \sum_i \frac{-3 + (-1)^{x_i} e^{4 G J T}}{1+ (-1)^{x_i} e^{4 G J T}}
\end{equation}
which at long times takes the time-independent value
\begin{equation}
    G \rightarrow  -3.
\end{equation}
Plugging this back into the saddle-point action, we find:
\begin{align}
    \mathbb{E}[q_U(\mathbf{x})] = e^{-I^*} &= \frac{1}{2^n} \prod_i \left(1 + (-1)^{x_i} e^{-12 J T} \right) \nonumber \\
    &= \frac{1}{2^n} \prod_i \sum_{z_i = 0,1} (-1)^{x_i z_i} e^{-12 J T z_i} \nonumber \\
    &= \frac{1}{2^n} \sum_{\mathbf{z} = \{0,1\}} (-1)^{\mathbf{x} \cdot \mathbf{z}} \exp \left( -12 J T \magn{\mathbf{z}} \right)
    \label{eq:k1saddlepoint}
\end{align}
which is equivalent to Eq. \eqref{eq:k1final} in the limit that $n \gg \magn{\mathbf{z}}$. Notice, starting from the first line, that we may alternatively write this expression as
\begin{equation}
    \mathbb{E}[q_U(\mathbf{x})] = \frac{1}{2^n} \left(1 - e^{-12 J T} \right)^{\magn{\mathbf{x}}} \left(1 + e^{-12 J T} \right)^{n - \magn{\mathbf{x}}}
\end{equation}
so that the sum over bitstrings $\mathbf{x}$ is trivial:
\begin{equation}
    \sum_{\mathbf{x}} \mathbb{E}[q_U(\mathbf{x})] = 1
\end{equation}

\subsection{Higher Moments}

Next, we consider higher moments $\mathbb{E}[q_U^k(\mathbf{x})]$ of the Brownian transition probability, with $k > 1$. In this case there are $2k$ replicas, which we label using the combined index $ra = 1L,1R,2L,2R, \ldots, kL,kR$, where $r = 1,2,\ldots,k$ and $a = L,R$. For example, for $k = 2$ we have:
\begin{align}
    \mathbb{E}[q_U^2(\mathbf{x})] &= \bra{\mathbf{x\overline{x} x\overline{x}}} \mathbb{E}[ U \otimes U^{\mathcal{T}} \otimes U \otimes U^{\mathcal{T}}] \ket{\mathbf{0101}} \nonumber \\
    &= \bra{\mathbf{x\overline{x} x\overline{x}}} e^{-H^{(2)}_{\mathrm{eff}} T} \ket{\mathbf{0101}}
\end{align}
where there are two factors of $(-1)^{\mathbf{x}}$ that cancel each other. For higher moments $k > 2$ we similarly have
\begin{align}
    \mathbb{E}[q_U^k(\mathbf{x})] &= \left[(-1)^{\mathbf{x}}\right]^k \bra{\mathbf{x\overline{x} x\overline{x}} \cdots} \mathbb{E}[ U \otimes U^{\mathcal{T}} \otimes U \otimes U^{\mathcal{T}} \cdots] \ket{\mathbf{0101} \cdots} \nonumber \\
    &= \left[(-1)^{\mathbf{x}}\right]^k \bra{\mathbf{x\overline{x} x\overline{x}} \cdots} e^{-H^{(k)}_{\mathrm{eff}} T} \ket{\mathbf{0101} \cdots}.
\end{align}

\subsubsection{Effective Hamiltonian}

Performing the disorder average over the Brownian couplings $J_{ij}^{\alpha \beta}(t)$ as above, we find the effective Hamiltonian for arbitrary $k$:
\begin{equation}
    H^{(k)}_{\mathrm{eff}} = \frac{J}{2N} \sum_{\substack{i<j \\ \alpha \beta}} \left( \sum_{r = 1}^k \sigma_{irL}^{\alpha} \sigma_{jrL}^{\beta} - \sigma_{irR}^{\alpha} \sigma_{jrR}^{\beta} \right)^2
    \label{eq:arbitrarykeffham}
\end{equation}
generalizing Eq. \eqref{eq:k1effectiveham}, where we use the convention $(-1)^L = 1, (-1)^R = -1$. After some manipulation, this Hamiltonian can be rewritten as
\begin{equation}
    H^{(k)}_{\mathrm{eff}} = \frac{J}{2 n} \sum_{ra < sb} (-1)^{a+b} \left( \sum_i \vec{\sigma}_{ira} \cdot \vec{\sigma}_{isb} \right)^2 + \frac{9}{2} k J (n-1) - \frac{J}{2 n} \sum_{ra < sb} (-1)^{a+b} \sum_i \left( \vec{\sigma}_{ira} \cdot \vec{\sigma}_{isb} \right)^2
\end{equation}
where in the sum over $ra < sb$ we treat $ra$ and $sb$ as a combined index, with the convention that $L < R$, such that there are a total of $k (2k-1)$ terms in the sum. The last term in the effective Hamiltonian is sub-leading at large $n$, so we subsequently ignore it. We can see that this Hamiltonian has exactly $k!$ ground states, given by different pairings of $L,R$ replicas. 
For example, consider the state
\begin{equation}
    \ket{\Omega} = \prod_{ir} \ket{s}_{irL,irR}
\end{equation}
which is a product state of singlets between spins $\vec{\sigma}_{irL}$ and $\vec{\sigma}_{irR}$ for each $r = 1,2,\ldots,k$. We draw this state using the `ladder' diagram shown in Fig. \ref{fig:groundstatediagrams}.a. This state has the property that we can `pass' the Pauli operators through the singlet state from $L$ to $R$ with a minus sign:
\begin{equation}
    \sigma^{\alpha}_{irL} \ket{\Omega} = -\sigma^{\alpha}_{irR} \ket{\Omega}
\end{equation}
and therefore
\begin{equation}
    \sigma_{irL}^{\alpha} \sigma_{jrL}^{\beta} \ket{\Omega} = \sigma_{irR}^{\alpha} \sigma_{jrR}^{\beta} \ket{\Omega}
    \label{eq:groundstatepassthrough}
\end{equation}
so that the two terms of opposite sign in the Hamiltonian Eq. \eqref{eq:arbitrarykeffham} cancel, leading to $H^{(k)}_{\mathrm{eff}}\ket{\Omega} = 0$. This cancellation property works for any pairing of $L,R$ spins, so we find exactly $k!$ ground states, all having energy $0$ as illustrated in Fig. \ref{fig:groundstatediagrams}.

\begin{figure}
    \centering
    \includegraphics{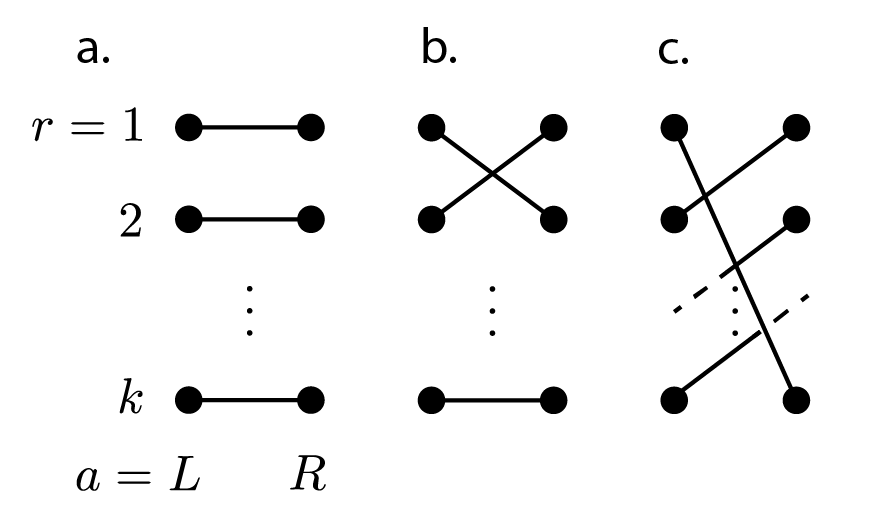}
    \caption{A sample of possible ground-state pairings for the effective Hamiltonian $H^{(k)}_{\mathrm{eff}}$. Lines represent singlet states $\ket{s}$ connecting pairs of spins in the $L,R$ replicas. (a) `Ladder' diagram showing singlets between $L,R$ spins sharing the same $r$ index. (b) Permuted diagram where the $r = 1,2$ spins have been exchanged on the $L$ side. (c) Cyclically permuted diagram where all spins on the $L$ side have been cyclically permuted. All $k!$ ground states can be obtained by starting from the `ladder' diagram and permuting spins on the $L$ side.}
    \label{fig:groundstatediagrams}
\end{figure}

\subsubsection{Path Integral}

Using the same tools as above, we convert this expectation value into a path integral so that we may use large-$n$ methods to evaluate it. In doing so we introduce $k (2k-1)$ mean fields
\begin{equation}
    G_{rs}^{ab}(t) = \frac{1}{n} \sum_i \vec{\Omega}_{ira} \cdot \vec{\Omega}_{isb}
\end{equation}
for $ra < sb$, and their associated Lagrange multipliers $F_{rs}^{ab}$(t) via the identity
\begin{equation}
    1 = \int \mathcal{D} F_{rs}^{ab} \mathcal{D} G_{rs}^{ab} \ \exp \left[ n \int_0^T dt \ \sum_{ra < sb} F_{rs}^{ab}(t) \left( G_{rs}^{ab}(t) - \frac{1}{n} \sum_i \vec{\Omega}_{ira} \cdot \vec{\Omega}_{isb} \right) \right].
\end{equation}
Our path integral expression for the expectation value of the $k$th moment, generalizing Eq. \eqref{eq:k1pathintegralaction}, is
\begin{equation}
    \left[(-1)^{\mathbf{x}}\right]^k \bra{\mathbf{x\overline{x}} \cdots} e^{-H^{(k)}_{\mathrm{eff}} T} \ket{\mathbf{01} \cdots} = \int^{\mathbf{x\overline{x}}\cdots}_{\mathbf{01} \cdots} \mathcal{D} \vec{\Omega}_{ira} \int \mathcal{D} F_{rs}^{ab} \mathcal{D} G_{rs}^{ab} \ e^{-I[\vec{\Omega}_{ira},F_{rs}^{ab},G_{rs}^{ab}]}
\end{equation}
with action
\begin{align}
    I[\vec{\Omega}_{ira},F_{rs}^{ab},G_{rs}^{ab}] &= n \int_0^T dt \left[ \sum_{ra < sb} \left( (-1)^{a+b} \frac{J}{2} \left(G_{rs}^{ab} \right) ^2 - F_{rs}^{ab} G_{rs}^{ab} \right. \right. \nonumber \\
    &\quad \quad \quad \quad \quad \quad \quad \quad \quad \left. \left. + \frac{1}{n} \sum_i F_{rs}^{ab} \ \vec{\Omega}_{ira} \cdot \vec{\Omega}_{isb} \right) + \frac{9 k J}{2} \right] - \ln \left[(-1)^{\mathbf{x}}\right]^k.
\end{align}
Next, we integrate over the spin part of the action, which has been decoupled by the fields $G_{rs}^{ab},F_{rs}^{ab}$. It is convenient to transform this back into an operator representation, yielding:
\begin{align}
    &\left[(-1)^{\mathbf{x}}\right]^k \int^{\mathbf{x\overline{x}} \cdots}_{\mathbf{01} \cdots} \mathcal{D} \vec{\Omega}_{ira} \exp \left( - \int_0^T dt \sum_{ra < sb} \sum_i F_{rs}^{ab}(t) \ \vec{\Omega}_{ira} \cdot \vec{\Omega}_{isb} \right) \nonumber \\
    & \quad \quad \quad \quad \quad \quad \quad \quad = \left[(-1)^{\mathbf{x}} \right]^k \bra{\mathbf{x\overline{x}} \cdots} \mathbb{T} \exp \left[ - \int_0^T dt \sum_{ra < sb} \ F_{rs}^{ab}(t) \sum_i \vec{\sigma}_{ira} \cdot \vec{\sigma}_{isb} \right] \ket{\mathbf{01} \cdots} \nonumber \\
        & \quad \quad \quad \quad \quad \quad \quad \quad = \prod_i \left[(-1)^{x_i}\right]^k \bra{x_i\overline{x}_i \cdots} \mathbb{T} \exp \left[ - \int_0^T dt \sum_{ra < sb} \ F_{rs}^{ab}(t) \vec{\sigma}_{ira} \cdot \vec{\sigma}_{isb} \right] \ket{01 \cdots} 
\end{align}
where $\mathbb{T}$ is the time-ordering operation. Our final path integral expression for the expectation value of the return probability is therefore
\begin{equation}
    \mathbb{E}[q_U^k(\mathbf{x})] = \int \mathcal{D} F_{rs}^{ab} \mathcal{D} G_{rs}^{ab} \ e^{-I[F_{rs}^{ab},G_{rs}^{ab}]}
\end{equation}
with action
\begin{align}
    I[F_{rs}^{ab},G_{rs}^{ab}] &= - \sum_i \ln \left( \left[(-1)^{x_i}\right]^k \bra{x_i \overline{x}_i \cdots} \mathbb{T} \exp \left[ - \int_0^T dt \ \sum_{ra < sb} F_{rs}^{ab}(t) \vec{\sigma}_{ira} \cdot \vec{\sigma}_{isb} \right] \ket{01 \cdots} \right) \nonumber \\
    & \quad \quad \quad \quad + n \int_0^T dt \sum_{ra < sb} \left[ (-1)^{a+b} \frac{J}{2} \left( G_{rs}^{ab} \right)^2 + \frac{9 k J}{2} - F_{rs}^{ab} G_{rs}^{ab} \right].
\end{align}

\subsubsection{Large-$n$ Limit}

Next, we solve for the saddle points of this action in the limit of large $n$. We again assume that the saddle-points $F_{rs}^{ab},G_{rs}^{ab}$ are time-independent, so that the action takes the form
\begin{align}
    I[F_{rs}^{ab},G_{rs}^{ab}] &= - \sum_i \ln \left( \left[(-1)^{x_i}\right]^k \bra{x_i \overline{x}_i \cdots} \exp \left[ - \sum_{ra < sb} F_{rs}^{ab} T \ \vec{\sigma}_{ira} \cdot \vec{\sigma}_{isb} \right] \ket{01 \cdots} \right) \nonumber \\
    & \quad \quad \quad \quad + n \sum_{ra < sb} \left[ (-1)^{a+b} \frac{JT}{2} \left( G_{rs}^{ab} \right)^2 + \frac{9 k JT}{2} - F_{rs}^{ab} G_{rs}^{ab} T \right].
\end{align}
Solving the Euler-Lagrange equation $\partial I / \partial G_{rs}^{ab} = 0$ we find
\begin{equation}
    F_{rs}^{ab} = (-1)^{a+b} J G_{rs}^{ab}
\end{equation}
which we substitute back into the action to find
\begin{align}
    I^* &= - \sum_i \ln \left( \left[(-1)^{x_i}\right]^k \bra{x_i \overline{x}_i \cdots} \exp \left[ - JT \sum_{ra < sb} (-1)^{a+b} G_{rs}^{ab} \ \vec{\sigma}_{ira} \cdot \vec{\sigma}_{isb} \right] \ket{01 \cdots} \right) \nonumber \\
    & \quad \quad \quad \quad + n \sum_{ra < sb} \left[ -(-1)^{a+b} \frac{JT}{2} \left( G_{rs}^{ab} \right)^2 + \frac{9 k JT}{2} \right].
\end{align}
Inspired by the ground-state pairings that we found above, we make the ansatz $G_{rs}^{ab} = 0$ unless $r = s$ and $ab = LR$, which should correspond to the `ladder' diagram in Fig. \ref{fig:groundstatediagrams}.a. We show that this ansatz leads to a self-consistent saddle-point. In this case, the action simplifies to a sum over independent actions
\begin{align}
    I^* &= - \sum_i \ln \left( \left[(-1)^{x_i}\right]^k \bra{x_i \overline{x}_i \cdots} \exp \left[ JT \sum_{r} G_{rr}^{LR} \ \vec{\sigma}_{irL} \cdot \vec{\sigma}_{irR} \right] \ket{01 \cdots} \right) \nonumber \\
    & \quad \quad \quad \quad + n \sum_{r} \left[ \frac{JT}{2} \left( G_{rr}^{LR} \right)^2 + \frac{9 k JT}{2} \right]. \nonumber \\
    &= \sum_r I^*_r
\end{align}
where
\begin{align}
    I^*_r &= \sum_i \ln \left( (-1)^{x_i} \bra{x_i \overline{x}_i} \exp \left[ JT G_{rr}^{LR} \ \vec{\sigma}_{irL} \cdot \vec{\sigma}_{irR} \right] \ket{01} \right) \nonumber \\
    & \quad \quad \quad \quad + n \left[ \frac{JT}{2} \left( G_{rr}^{LR} \right)^2 + \frac{9 JT}{2} \right].
\end{align}
We may evaluate each of these actions independently.
Similar to Eq. \eqref{eq:simplifyheis}, we can exactly evaluate the first term since it is just a Heisenberg coupling:
\begin{align}
    &(-1)^{x_i} \bra{x_i \overline{x}_i} \exp \left[ JT G_{rr}^{LR} \vec{\sigma}_{irL} \cdot \vec{\sigma}_{irR} \right] \ket{01} \nonumber \\
    & \quad \quad \quad \quad = (-1)^{x_i} \bra{x_i \overline{x}_i} \left( e^{-3 JT G_{rr}^{LR}} \ket{s} \bra{s}_i + e^{JT G_{rr}^{LR}} \ket{t} \bra{t}_i \right) \ket{01} \nonumber \\
    &= \frac{1}{2} e^{-3 JT G_{rr}^{LR}} + (-1)^{x_i} \frac{1}{2} e^{ JT G_{rr}^{LR}}.
\end{align}
This gives the action
\begin{equation}
    I_r^* = - \sum_i \ln \left[ \frac{1}{2} e^{-3 G_{rr}^{LR} J T} + (-1)^{x_i} \frac{1}{2} e^{G_{rr}^{LR} J T} \right] + n \left[ \frac{JT}{2} \left( G_{rr}^{LR} \right)^2 + \frac{9 JT}{2} \right].
\end{equation}
which is equivalent to Eq. \eqref{eq:spactiong} with $G \rightarrow G_{rr}^{LR}$. We therefore find the same time-independent saddle point $G_{rr}^{LR} \rightarrow -3$, leading to 
\begin{equation}
    e^{-I^*} = \prod_{r=1}^k e^{-I^*_r} = \left( \frac{1}{2^n} \sum_{\mathbf{z} = \{0,1\}} (-1)^{\mathbf{x} \cdot \mathbf{z}} \exp \left( -12 J T \magn{\mathbf{z}} \right) \right)^k.
\end{equation}
This is the saddle-point corresponding to the `ladder' diagram, but we also have many other saddle points corresponding to permutations of the $L$ replicas. There are $k!$ such saddle points, all with the same action, so we conclude that
\begin{align}
    \mathbb{E}[q_U^k(\mathbf{x})] &= k! \ e^{-I^*} \nonumber \\
    &= k! \left( \frac{1}{2^n} \sum_{\mathbf{z} = \{0,1\}} (-1)^{\mathbf{x} \cdot \mathbf{z}} \exp \left( -12 J T \magn{\mathbf{z}} \right) \right)^k.
\end{align}
Notice that for $\mathbf{x} = \mathbf{0}$, this expression simplifies to
\begin{equation}
    \mathbb{E}[q_U^k(\mathbf{0})] = \frac{k!}{2^{kN}} \left( 1 + e^{-12 J T} \right)^{kN}.
\end{equation}
We also find that as $T \rightarrow 0$, Eq. \eqref{eq:arbitrarykfinal} vanishes unless $\mathbf{x} = 0$, which makes sense. Finally, for $T \rightarrow \infty$, this expression limits to the Porter-Thomas value $\mathbb{E}[q_U^k(\mathbf{x})] \rightarrow k! \ 2^{-kN}$. We can further simplify the above expression using the following manipulations:
\begin{align}
    \sum_{\mathbf{z} = \{0,1\}} (-1)^{\mathbf{x} \cdot \mathbf{z}} \exp \left( -12 J T \magn{\mathbf{z}} \right) &= \frac{1}{2^n} \prod_i \sum_{z_i = 0,1} (-1)^{x_i z_i} e^{-12 J T z_i} \nonumber \\
    &= \prod_i \left(1 + (-1)^{x_i} e^{-12 J T} \right) \nonumber \\
    &= \left(1 - e^{-12 J T} \right)^{\magn{\mathbf{x}}} \left(1 + e^{-12 J T} \right)^{n-\magn{\mathbf{x}}} 
\end{align}
Thus we find our final large-$n$ expression for the $k$th moment:
\begin{equation}
    \mathbb{E}[q_U^k(\mathbf{x})] = \frac{k!}{2^{kN}} \left(1 - e^{-12 J T} \right)^{k \magn{\mathbf{x}}} \left(1 + e^{-12 J T} \right)^{k (n-\magn{\mathbf{x}})} 
    \label{eq:arbitrarykfinal}
\end{equation}
Finally, we perform the sum over bitstrings $\mathbf{x}$ to obtain
\begin{equation}
    \sum_{\mathbf{x}} \mathbb{E}[q_U^k(\mathbf{x})] = \frac{k!}{2^{kN}} \left[ \left(1 - e^{-12 J T} \right)^{k} +  \left(1 + e^{-12 J T} \right)^{k} \right]^n
    \label{eq:arbitraryksumbitstrings}
\end{equation}

\subsubsection{Second Moment $k = 2$}

We may simplify Eq. \eqref{eq:arbitrarykfinal} for $k = 2$. In this case we have
\begin{equation}
    \mathbb{E}[q_U^2(\mathbf{x})] = \frac{2}{2^{2N}}\sum_{\mathbf{z},\mathbf{z'} = \{0,1\}} (-1)^{\mathbf{x} \cdot (\mathbf{z}+\mathbf{z'})} \exp \left( -12 J T (\magn{\mathbf{z}}+\magn{\mathbf{z'}}) \right)
\end{equation}
Taking the sum over bitstrings $\mathbf{x}$ and applying the identity Eq. \eqref{eq:bitstringorthog}, we find that all terms vanish unless $\mathbf{z} = \mathbf{z'}$, which gives:
\begin{align}
    \sum_{\mathbf{x}} \mathbb{E}[q_U^2(\mathbf{x})] &= \frac{2}{2^{n}}\sum_{\mathbf{z} = \{0,1\}} \exp \left( -24 J T \magn{\mathbf{z}} \right) \nonumber \\
    &= \frac{2}{2^{n}} \left( 1 + e^{-24 J T} \right)^n.
    \label{eq:k2sumbitstrings}
\end{align}
Therefore, the true quantum circuit gives an \textsf{XEB} score of
\begin{equation}
    \mathrm{XEB}(U,U) \approx 1 + 2 n e^{-24 J T} + \mathcal{O}(e^{-48 J T})
\end{equation}

\subsection{The spoofing algorithm of \cite{gao2024limitations}}

In this section, we consider a classically efficient spoofing algorithm \cite{gao2024limitations} that aims to generate bitstrings drawn from a probability distribution $A_U(\mathbf{x})$ that spoofs the distribution $q_U(\mathbf{x})$ generated by the full Brownian circuit in terms of the linear cross-entropy benchmark (\textsf{XEB}) score:
\begin{equation}
    \mathrm{XEB}(U,A) = 2^n \sum_{\mathbf{x}} \mathbb{E} [ q_U(\mathbf{x}) A_U(\mathbf{x}) ] - 1.
\end{equation}
% First, let's consider some simple examples. If we use the real distribution $q_U(\mathbf{x})$, then from Eq. \eqref{eq:k2sumbitstrings} at long times $T \rightarrow \infty$ we obtain a score of $\mathrm{XEB}(U,U) = 1$. If we use a flat distribution $q(\mathbf{x}) = 2^{-n}$ that samples bitstrings uniformly, then we obtain a score of $\mathrm{XEB}(U,A) = 0$. If we use a delta-function distribution $q(\mathbf{x}) = \delta_{\mathbf{x},\mathbf{0}}$ then from Eq. \eqref{eq:p0expecval} we obtain $\mathrm{XEB}(U,A) \rightarrow 2^n - 1$ as $T \rightarrow 0$ and $\mathrm{XEB}(U,A) \rightarrow 0$ as $T \rightarrow \infty$. \gsb{Is this a problem?}

To spoof a high \textsf{XEB} score at late times, the authors of \cite{gao2024limitations} suggested splitting up the original circuit into disjoint subsystems, each of which could be efficiently simulated classically. Here we apply this idea to our all-to-all Brownian circuits by splitting the original circuit into $K = n/\log n$ disjoint subsystems of $M = \log n$ qubits each. Unlike the nearest-neighbor circuits considered in \cite{gao2024limitations}, this splitting requires deleting a huge number of couplings ($\mathcal{O}(n^2)$), yielding very slow dynamics in each of the subsystems. To compensate for this, we increase the strength of the couplings in each subsystem by a factor of $A = \sqrt{K}$ such that each subsystem reaches its Haar-random distribution on the same timescale as the original circuit does.

\begin{figure}
    \centering
    \includegraphics{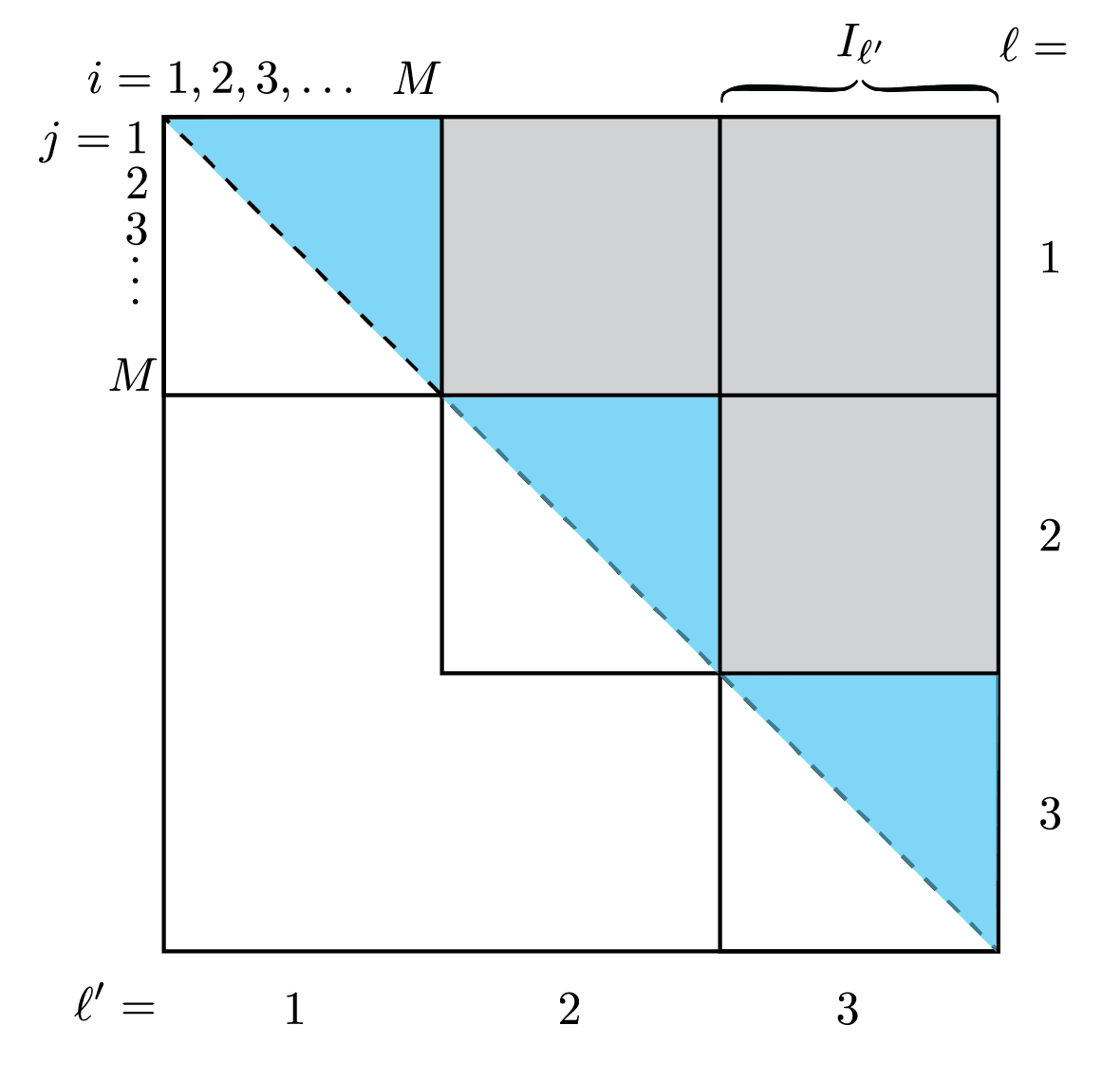}
    \caption{Two categories of coupling coefficients $J_{ij}^{\alpha \beta}(t)$ in the disjoint Brownian circuit model. Blue coefficients couple subsystems to themselves while gray coefficients couple two different subsystems. Gray coefficients are set to zero in the disjoint Brownian model.}
    \label{fig:disjointcouplings}
\end{figure}

To be specific, the original circuit $U$ features $\frac{9}{2} n(n-1)$ Brownian coefficients $J_{ij}^{\alpha \beta}(t)$ for $i < j$. The disjoint circuit $V = \otimes_{\ell} V_{\ell}$ is split into $K$ disjoint subsystems labelled by $\ell = 1,2,\ldots,K$. Each subsystem $V_{\ell}$ consists of $M$ qubits indexed by $i \in I_{\ell}$, where the interval $I_{\ell}$ consists of exactly $M$ indices (we do not bother to enumerate these indices here). The $\frac{9}{2} n(n-1)$ coefficients $J_{ij}^{\alpha \beta}(t)$ in the original circuit can then be divided into categories depending on whether they couple a subsystem $\ell$ to itself or whether they couple a subsystem $\ell$ to a different subsystem $\ell'$, which we illustrate in Fig. \ref{fig:disjointcouplings}. There are $\frac{9}{2} K M (M-1)$ coefficients that couple subsystems to themselves (blue in Fig. \ref{fig:disjointcouplings}), where $i < j \in I_{\ell}$ for any $\ell$. In addition, there are $\frac{9}{2} M^2 K(K-1)$ coefficients that couple different subsystems (gray in Fig. \ref{fig:disjointcouplings}), where $i \in I_{\ell}$ and $j \in I_{\ell'}$ for $\ell \neq \ell'$. In the disjoint circuit $V$, this second category of coefficients are all set to zero. Together, these two categories add up to the total $\frac{9}{2} n (n-1)$ for $n = K M$.

\subsubsection{Effective Hamiltonian}

We are interested in calculating the expected overlap
\begin{equation}
    \mathbb{E}[q_U(\mathbf{x}) A_U(\mathbf{x})] = \mathbb{E} \left[ \magn{\bra{\mathbf{x}} U \ket{\mathbf{0}}}^2 \magn{\bra{\mathbf{x}} V \ket{\mathbf{0}}}^2 \right]
\end{equation}
between the original circuit $U$ and the disjoint circuit $V$. Introducing four replicas and applying the time-reversal operation as usual, we find
\begin{align}
    \mathbb{E}[q_U(\mathbf{x}) A_U(\mathbf{x})] &= \bra{\mathbf{x\overline{x} x\overline{x}}} \mathbb{E}[ U \otimes U^{\mathcal{T}} \otimes V \otimes V^{\mathcal{T}}] \ket{\mathbf{0101}} \nonumber \\
    &= \bra{\mathbf{x\overline{x} x\overline{x}}} \prod_t \mathbb{E}[ U_t \otimes U_t^{\mathcal{T}} \otimes V_t \otimes V_t^{\mathcal{T}}] \ket{\mathbf{0101}}
\end{align}
where replicas $1R,1L$ refer to the full Brownian circuit $U$ and replicas $2R,2L$ refer to the disjoint Brownian circuit $V$. Similar to the calculations above, the expectation value at each time $t$ can be written as a series of Gaussian integrals:
\begin{align}
    &\mathbb{E}[ U_t \otimes U_t^{\mathcal{T}} \otimes V_t \otimes V_t^{\mathcal{T}}] \nonumber \\
    &= \int \prod_{\substack{i<j \\ \alpha \beta}} dJ_{ij}^{\alpha \beta}(t) \exp \left[ - \sum_{\substack{i<j \\ \alpha\beta}} \frac{\left( J_{ij}^{\alpha \beta}(t)\right)^2}{2J / \Delta t n} \right] \nonumber \\
    &\times \exp \left[ -i \sum_{\ell < m} \sum_{\substack{i \in I_{\ell} \\ j \in I_{m}}} \sum_{\alpha \beta} J_{ij}^{\alpha \beta}(t) \left( \sigma_{i1L}^{\alpha} \sigma_{j1L}^{\beta} - \sigma_{i1R}^{\alpha} \sigma_{j1R}^{\beta} \right) \Delta t \right] \nonumber \\
    &\times \exp \left[ -i \sum_{\ell} \sum_{i < j \in I_{\ell}} \sum_{\alpha \beta} J_{ij}^{\alpha \beta}(t) \left( \sigma_{i1L}^{\alpha} \sigma_{j1L}^{\beta} - \sigma_{i1R}^{\alpha} \sigma_{j1R}^{\beta} + A \sigma_{i2L}^{\alpha} \sigma_{j2L}^{\beta} - A \sigma_{i2R}^{\alpha} \sigma_{j2R}^{\beta} \right) \Delta t \right] \nonumber \\
    &= e^{-H_{\mathrm{eff}} \Delta t}
\end{align}
where the third line accounts for the couplings between different subsystems (gray in Fig. \ref{fig:disjointcouplings}) that are present in the full Brownian circuit $U$ but not the disjoint circuit $V$, while the fourth line accounts for the couplings within each subsystem (blue in Fig. \ref{fig:disjointcouplings}) that are present in both circuits. Notice that we have included an extra factor of $A$ in replicas $2L,2R$ increase the strength of these couplings so that the disjoint system reaches its Haar value on the same timescale as the full circuit. We will find that the choice $A = \sqrt{K}$ gives the right timescale, but we leave this parameter undetermined for the moment. Performing the Gaussian integrals, we find that the effective Hamiltonian is
\begin{align}
    H_{\mathrm{eff}} &= \frac{J}{2 n} \sum_{\ell < m} \sum_{\substack{i \in I_{\ell} \\ j \in I_{m}}} \sum_{\alpha \beta} \left( \sigma_{i1L}^{\alpha} \sigma_{j1L}^{\beta} - \sigma_{i1R}^{\alpha} \sigma_{j1R}^{\beta} \right)^2  \nonumber \\
    &+ \frac{J}{2N} \sum_{\ell} \sum_{i < j \in I_{\ell}} \sum_{\alpha \beta} \left( \sigma_{i1L}^{\alpha} \sigma_{j1L}^{\beta} - \sigma_{i1R}^{\alpha} \sigma_{j1R}^{\beta} + A \sigma_{i2L}^{\alpha} \sigma_{j2L}^{\beta} - A \sigma_{i2R}^{\alpha} \sigma_{j2R}^{\beta} \right)^2 .
    \label{eq:harvardalgham}
\end{align}

Note that this Hamiltonian has a ground state
\begin{equation}
	\ket{\Omega} = \bigotimes_{ir} \ket{s}_{irL,irR}
\end{equation}
which is a `ladder' state consisting of singlet states pairing qubits $\vec{\sigma}_{irL}$ and $\vec{\sigma}_{irR}$ for $r = 1,2$. Similar to Eq. \eqref{eq:groundstatepassthrough}, this state has the property
\begin{equation}
    \sigma_{irL}^{\alpha} \sigma_{jrL}^{\beta} \ket{\Omega} = \sigma_{irR}^{\alpha} \sigma_{jrR}^{\beta} \ket{\Omega}
\end{equation}
so that the terms of opposite sign in the Hamiltonian Eq. \eqref{eq:harvardalgham} cancel, leading to $H_{\mathrm{eff}}\ket{\Omega} = 0$. Since $k = 2$, we can ask whether this Hamiltonian also has the permuted ground state
\begin{equation}
    \ket{\Omega'} = \bigotimes_{i} \ket{s}_{i1L,i2R} \otimes \ket{s}_{i2L,i1R}
\end{equation}
(see Fig. \ref{fig:groundstatediagrams}.b.). However, this is not a ground state for two reasons. First, the two terms in the first line of Eq. \eqref{eq:harvardalgham} do not cancel. Second, assuming $A > 1$, the terms in the second line also do not cancel. So this Hamiltonian only has a single ground state $\ket{\Omega}$.

After some manipulation, the effective Hamiltonian can be written as
\begin{equation}
    H_{\mathrm{eff}} = H_N + H_{A M} + H_{A^2} + H_{A} + H_1.
    \label{eq:harvardeffectivehamtotal}
\end{equation}
Here we have organized the terms by order of magnitude, where
\begin{align}
    H_N &= \frac{J}{2N} \left[ 9 M^2 K(K-1) + 9 (1+A^2) K M(M-1) \right] \nonumber \\
    &- \frac{J}{2N} \left[ \left( \sum_{i=1}^n \heis{i1L}{i1R} \right)^2 + A^2 \sum_{\ell} \left( \sum_{i \in I_{\ell}} \heis{i2L}{i2R} \right)^2 \right]
    \label{eq:harvardhamn}
\end{align}
is $\mathcal{O}(n)$,
\begin{align}
    H_{AM} &= \frac{J}{2N} \left[ A \sum_{\ell} \left( \sum_{i \in I_{\ell}} \heis{i1L}{i2L} \right)^2 -  A \sum_{\ell} \left( \sum_{i \in I_{\ell}} \heis{i1L}{i2R} \right)^2 \right. \nonumber \\
    & \quad \quad \quad \quad \left. - A \sum_{\ell} \left( \sum_{i \in I_{\ell}} \heis{i1R}{i2L} \right)^2 +  A \sum_{\ell} \left( \sum_{i \in I_{\ell}} \heis{i1R}{i2R} \right)^2 \right]
\end{align}
is $\mathcal{O}(AM)$,
\begin{equation}
    H_{A^2} = \frac{J}{2N} A^2 \sum_{\ell} \sum_{i \in I_{\ell}} \left( \heis{i2L}{i2R} \right)^2
\end{equation}
is $\mathcal{O}(A^2)$
\begin{align}
    H_{A} &= - \frac{J}{2N} \left[ A \sum_{\ell} \sum_{i \in I_{\ell}} \left( \heis{i1L}{i2L} \right)^2 -  A \sum_{\ell} \sum_{i \in I_{\ell}} \left( \heis{i1L}{i2R} \right)^2 \right. \nonumber \\
    & \quad \quad \quad \quad \left. - A \sum_{\ell} \sum_{i \in I_{\ell}} \left( \heis{i1R}{i2L} \right)^2 +  A \sum_{\ell} \sum_{i \in I_{\ell}} \left( \heis{i1R}{i2R} \right)^2 \right]
\end{align}
is $\mathcal{O}(A)$, and
\begin{equation}
    H_{1} = \frac{J}{2N} \sum_{\ell} \sum_{i \in I_{\ell}} \left( \heis{i1L}{i1R} \right)^2
\end{equation}
is $\mathcal{O}(1)$. In the limit $n \rightarrow \infty$, most of these terms are negligible compared to the $\mathcal{O}(n)$ terms, so we can write
\begin{equation}
    H_{\mathrm{eff}} \approx H_N
\end{equation}
where
\begin{align}
    H_N &\approx \frac{J}{2N} \left[ 9 n^2 + 9 A^2 K M^2 \right] \nonumber \\
    &- \frac{J}{2N} \left[ \left( \sum_{i=1}^n \heis{i1L}{i1R} \right)^2 + A^2 \sum_{\ell} \left( \sum_{i \in I_{\ell}} \heis{i2L}{i2R} \right)^2 \right]
    \label{eq:harvardhamlargen}
\end{align}
to leading in order in $n$. As a quick check, we can set $\heis{i1L}{i1R} = \heis{i2L}{i2R} = -3$, with all other Heisenberg terms vanishing, and see explicitly that $H_{\mathrm{eff}} = 0$, both for the full Hamiltonian and for the large-$n$ Hamiltonian. We already see from Eq. \eqref{eq:harvardhamlargen} that we must set $A = \sqrt{K}$ in order for the replica $r = 2$ terms to be of the same order as the replica $r = 1$ terms.

\subsubsection{Effective Hamiltonian Spectrum}

Note that the dominant term $H_N$ splits into $K+1$ disjoint pieces:
\begin{equation}
    H_{\mathrm{eff}} \approx H_N = H_0 + \sum_{\ell} H_{\ell}
    \label{eq:harvardh0hl}
\end{equation}
where
\begin{equation}
    H_0 = - \frac{J}{2 n} \left( \sum_{i=1}^n \vec{\sigma}_{i1L} \cdot \vec{\sigma}_{i1R} \right)^2 + \frac{9 J}{2} n
\end{equation}
and 
\begin{equation}
    H_{\ell} = - \frac{J A^2}{2 n} \left( \sum_{i\in I{\ell}} \vec{\sigma}_{i2L} \cdot \vec{\sigma}_{i2R} \right)^2 + \frac{9 J A^2}{2N} M^2.
\end{equation}
These Hamiltonians are both of the same form as Eq. \eqref{eq:k1effhamlargen} and we already know how to find their exact spectra using the methods discussed in Sec. \ref{sec:exactspectrum}. The eigenvectors of $H_0$ are
\begin{equation}
    \ket{\Psi}_0 = \bigotimes_{i=1}^n \ket{\psi}_{i1L,i1R}
\end{equation}
where $\ket{\psi}_{i1L,i1R}$ can independently take one of the values $\ket{s},\ket{t},\ket{00},\ket{11}$. The eigenvalues are
\begin{equation}
    H_0 \ket{\Psi}_0 = \left( \frac{4 J z (3 n - 2z)}{n} \right) \ket{\Psi}_0
\end{equation}
where $z$ counts the number of excitations $\ket{t}, \ket{00}, \ket{11}$. Similarly, the eigenvectors of $H_{\ell}$ are
\begin{equation}
    \ket{\Psi}_{\ell} = \bigotimes_{i \in I_{\ell}} \ket{\psi}_{i2L,i2R}
\end{equation}
where $\ket{\psi}_{i2L,i2R}$ can independently take one of the values $\ket{s},\ket{t},\ket{00},\ket{11}$. The eigenvalues are
\begin{equation}
    H_{\ell} \ket{\Psi}_{\ell} = \left( \frac{4 A^2 J z (3 M-2 z)}{n} \right) \ket{\Psi}_{\ell}
\end{equation}
Again, we see that we should set $A = \sqrt{K}$ since the gap to the first excited state (for $M \gg z$) goes like $\Delta = 12 J A^2 M / n$. So if we chose $A = 1$ the gap would scale like $M/n$, whereas if we choose $A = \sqrt{K}$ the gap is $\mathcal{O}(1)$.

Finally, we can use these results to calculate the expected overlap $\mathbb{E}[q_U(\mathbf{x}) A_U(\mathbf{x})]$:
\begin{align}
    \mathbb{E}[q_U(\mathbf{x}) A_U(\mathbf{x})] &= \bra{\mathbf{x\overline{x} x\overline{x}}} e^{-H_{\mathrm{eff}} T} \ket{\mathbf{0101}} \nonumber \\
    &= \bra{\mathbf{x\overline{x}}} e^{-H_0 T} \ket{\mathbf{01}} \prod_{\ell} \bra{\mathbf{x\overline{x}}} e^{-H_{\ell} T} \ket{\mathbf{01}}_{\ell} \nonumber \\
    &= \mathbb{E}[q_U(\mathbf{x})] \prod_{\ell} \mathbb{E}[q_{\ell}(\mathbf{x})]
\end{align}
so the overlap expectation value completely factorizes. Plugging in our known results, we find:
\begin{align}
    \mathbb{E}[q_U(\mathbf{x}) A_U(\mathbf{x})] &= \left[ \frac{1}{2^n} \sum_{\mathbf{z} = \{0,1\}} (-1)^{\mathbf{x} \cdot \mathbf{z}} \exp \left(- \frac{4 J T \magn{\mathbf{z}} (3 n - 2 \magn{\mathbf{z}} )}{n} \right) \right] \nonumber \\
    &\times \prod_{\ell=1}^K \left[ \frac{1}{2^M} \sum_{\mathbf{z}_{\ell} = \{0,1\}} (-1)^{\mathbf{x}_{\ell} \cdot \mathbf{z}_{\ell}} \exp \left(- \frac{4 A^2 J T \magn{\mathbf{z_{\ell}}} (3 M - 2 \magn{\mathbf{z_{\ell}}})}{n} \right) \right]
    \label{eq:pqoverlap}
\end{align}
where $\mathbf{x}_{\ell}$ is the length-$M$ substring of $\mathbf{x}$ that lives in subsystem $\ell$.
At long times $T \rightarrow \infty$, this expression limits to $\mathbb{E}[q_U(\mathbf{x}) A_U(\mathbf{x})] \rightarrow 2^{-2N}$.

The above expression simplifies if we consider the approximation $n \gg \magn{\mathbf{z}}, M \gg \magn{\mathbf{z}_{\ell}}$ and set $A = \sqrt{K}$, which gives us
\begin{align}
     \mathbb{E}[q_U(\mathbf{x}) A_U(\mathbf{x})]  &= \frac{1}{2^{2N}}\left[ \left(1 - e^{-12 J T} \right)^{\magn{\mathbf{x}}} \left(1 + e^{-12 J T} \right)^{n-\magn{\mathbf{x}}} \right] \nonumber \\
     & \quad \quad \times \prod_{\ell=1}^K \left[ \left(1 - e^{-12 J T} \right)^{\magn{\mathbf{x}_{\ell}}} \left(1 + e^{-12 J T} \right)^{M-\magn{\mathbf{x}_{\ell}}} \right] \nonumber \\
     &= \frac{1}{2^{2N}} \left[ \left(1 - e^{-12 J T} \right)^{2\magn{\mathbf{x}}} \left(1 + e^{-12 J T} \right)^{2(n-\magn{\mathbf{x}})} \right]
     \label{eq:harvardoverlapk2bitstring}
\end{align}
This is the expression we would find if we solved the model using large-$n$ methods. If $\mathbf{x} = \mathbf{0}$, then the above expression simplifies to
\begin{equation}
    \mathbb{E}[q_U(\mathbf{0}) A_U(\mathbf{0})] = \frac{1}{2^{2N}} \left(1 + e^{-12 J T} \right)^{2N}
\end{equation}
Finally, if we perform a sum over bitstrings, we obtain:
\begin{align}
    \sum_{\mathbf{x}} \mathbb{E}[q_U(\mathbf{x}) A_U(\mathbf{x})] &= \frac{1}{2^{2N}} \left[ \left(1 - e^{-12 J T} \right)^{2} + \left(1 + e^{-12 J T} \right)^{2} \right]^n \nonumber \\
    &= \frac{1}{2^{n}} \left( 1 + e^{-24 J T} \right)^n
    \label{eq:pqoverlapsumbits}
\end{align}
The only difference between this and Eq. \eqref{eq:k2sumbitstrings} is the leading factor of $2$. Therefore, the spoofing quantum circuit of \cite{gao2024limitations} gives an \textsf{XEB} score of
\begin{equation}
    \mathrm{XEB}(U,A) \approx n e^{-24 J T} + \mathcal{O}(e^{-48 J T})
\end{equation}

\subsubsection{Moments of Disjoint Circuit}

It is also useful to compute the moments $\mathbb{E}[A_U^k(\mathbf{x})]$ of the disjoint circuit. Because the disjoint subsystems are independent, we immediately have:
\begin{equation}
    \mathbb{E}[A_U^k(\mathbf{x})] = \prod_{\ell} \mathbb{E}[q_{\ell}^k(\mathbf{x})]
\end{equation}
where each $q_{\ell}(\mathbf{x})$ is the probability distribution for the $\ell$th subsystem. We have already calculated the $k$th moment for the full circuit $\mathbb{E}[q_U^k(\mathbf{x})]$. The large-$n$ result for this appears in Eq. \eqref{eq:arbitrarykfinal}, which we may simply reuse here with the substitutions $n \rightarrow M$ and $\mathbf{x} \rightarrow \mathbf{x_{\ell}}$. This gives:
\begin{align}
    \mathbb{E}[A_U^k(\mathbf{x})] &= \prod_{\ell = 1}^K \frac{k!}{2^{kM}} \left(1 - e^{-12 J T} \right)^{k \magn{\mathbf{x}_{\ell}}} \left(1 + e^{-12 J T} \right)^{k (M-\magn{\mathbf{x}_{\ell}})}  \nonumber \\
    &= \frac{(k!)^K}{2^{kN}} \left(1 - e^{-12 J T} \right)^{k \magn{\mathbf{x}}} \left(1 + e^{-12 J T} \right)^{k (n-\magn{\mathbf{x}})}
\end{align}
and performing the sum over bitstrings gives:
\begin{equation}
    \sum_{\mathbf{x}} \mathbb{E}[A_U^k(\mathbf{x})] = \frac{(k!)^K}{2^{kN}} \left[ \left(1 - e^{-12 J T} \right)^{k} +  \left(1 + e^{-12 J T} \right)^{k} \right]^n
\end{equation}

\subsubsection{Variance of algorithm of \cite{gao2024limitations}}

In the previous subsection we calculated the expected \textsf{XEB} score for the spoofing algorithm of \cite{gao2024limitations}, but we are also interested in how typical this score is. We can get a handle on this question by investigating the variance of the score:
\begin{align}
    \mathrm{Var \ XEB}(U,A) &= \langle (q_U A_U)^2 \rangle - \langle q_U A_U \rangle^2 \nonumber \\
    &= \frac{1}{2^n} \sum_{\mathbf{x}}\mathbb{E}\left[ \left(q_U(\mathbf{x}) A_U(\mathbf{x}) \right)^2 \right] - \left( \frac{1}{2^n} \sum_{\mathbf{x}} \mathbb{E}\left[ q_U(\mathbf{x}) A_U(\mathbf{x}) \right] \right)^2 
\end{align}
This involves calculating a fourth moment quantity with $k = 4$:
\begin{equation}
    \mathbb{E}\left[ q_U^2(\mathbf{x}) A_U^2(\mathbf{x}) \right] = \bra{\mathbf{x\overline{x} x\overline{x}} \ldots} \prod_t \mathbb{E}[ U_t \otimes U_t^{\mathcal{T}} \otimes U_t \otimes U_t^{\mathcal{T}} \otimes V_t \otimes V_t^{\mathcal{T}} \otimes V_t \otimes V_t^{\mathcal{T}}] \ket{\mathbf{0101} \ldots}
\end{equation}
It is not too much additional work to consider arbitrary moments of this quantity:
\begin{equation}
    \mathbb{E} [ q_U^c(\mathbf{x}) A_U^c(\mathbf{x}) ]
\end{equation}
where $k = 2c$ and $c \geq 1$. Taking $c = 1$ will be a good check on our results for the higher moments. Performing the disorder average over the couplings $J_{ij}^{\alpha \beta}(t)$ as above, we find an effective Hamiltonian
\begin{align}
    H_{\mathrm{eff}} &= \frac{J}{2 n} \sum_{\ell < m} \sum_{\substack{i \in I_{\ell} \\ j \in I_{m}}} \sum_{\alpha \beta} \left( \sum_{r = 1}^c \sum_{a=L,R} (-1)^a \ \sigma_{ira}^{\alpha} \sigma_{jra}^{\beta} \right)^2  \nonumber \\
    &+ \frac{J}{2N} \sum_{\ell} \sum_{i < j \in I_{\ell}} \sum_{\alpha \beta} \left( \sum_{r=1}^{2c} \sum_{a=L,R} (-1)^a A_r \ \sigma_{ira}^{\alpha} \sigma_{jra}^{\beta} \right)^2
    \label{eq:harvardalghamarbk}
\end{align}
where
\begin{equation}
    A_r = \begin{cases} 
      1 & r \leq c \\
      \sqrt{K} & r > c
   \end{cases}
\end{equation}
which generalizes Eq. \eqref{eq:harvardalgham}. This Hamiltonian has $(c!)^{K+1}$ ground states. We have the usual `ladder' ground state
\begin{equation}
    \ket{\Omega} = \bigotimes_{ir} \ket{s}_{irL,irR}
\end{equation}
with $H_{\mathrm{eff}} \ket{\Omega} = 0$ since we can pass Pauli operators through the singlets such that terms of opposite sign cancel in Eq. \ref{eq:harvardalghamarbk}. We can get $c!-1$ additional ground states by permuting replicas $r = 1,2,\ldots,c$ amongst themselves. Furthermore, for each $\ell$, we can permute replicas $r = c+1,c+2,\ldots,2c$ amongst themselves, which causes the same cancellations. Crucially, we can do this second permutation independently for each subsystem $\ell$, so we have a total of $(c!)^{1+K}$ independent ground states.

After some manipulation, and after ignoring terms that are subleading in $n$, the effective Hamiltonian can be written as
\begin{equation}
    H_{\mathrm{eff}} = H_0^{(c)} + \sum_{\ell} H_{\ell}^{(c)}
    \label{eq:harvardalgeffhamarbc}
\end{equation}
where
\begin{equation}
    H_0^{(c)} = \frac{J}{2N} \sum_{ra<sb}^{r,s \leq c} (-1)^{a+b} \left( \sum_{i=1}^n \heis{ira}{isb} \right)^2 + \frac{9 J}{2} c n
\end{equation}
and
\begin{equation}
    H_{\ell}^{(c)} = \frac{J}{2N} K \sum_{ra < sb}^{c < r,s \leq 2c} (-1)^{a+b} \left( \sum_{i \in I_{\ell}} \heis{ira}{isb} \right)^2 + \frac{9 J}{2} c M
\end{equation}
which generalizes Eq. \eqref{eq:harvardh0hl}. As a quick check we can set $\heis{irL}{irR} = -3$ and all others 0 to find $H_0^{(c)} = 0$ and $H_{\ell}^{(c)} = 0$.

Finally, we can use these results to calculate the $c$th moment of the expected overlap $\mathbb{E}[q_U^c(\mathbf{x}) A_U^c(\mathbf{x})]$:
\begin{align}
    \mathbb{E}[q_U^c(\mathbf{x}) A_U^c(\mathbf{x})] &= \bra{\mathbf{x\overline{x} x\overline{x}} \cdots} e^{-H_{\mathrm{eff}} T} \ket{\mathbf{0101} \cdots} \nonumber \\
    &= \bra{\mathbf{x\overline{x}} \cdots} e^{-H_0^{(c)} T} \ket{\mathbf{01} \cdots} \prod_{\ell} \bra{\mathbf{x\overline{x}} \cdots} e^{-H_{\ell}^{(c)} T} \ket{\mathbf{01} \cdots}_{\ell} \nonumber \\
    &= \mathbb{E}[q_U^c(\mathbf{x})] \prod_{\ell} \mathbb{E}[A^{c}_{U,\ell}(\mathbf{x})]
\end{align}
so the overlap expectation value completely factorizes as before. We can use our previous results (see Eq. \eqref{eq:arbitrarykfinal}) to obtain an explicit expression for this quantity. We have:
\begin{align}
     \mathbb{E}[q_U^c(\mathbf{x}) A_U^c(\mathbf{x})]  &= \left[ \frac{c!}{2^{cN}} \left(1 - e^{-12 J T} \right)^{c\magn{\mathbf{x}}} \left(1 + e^{-12 J T} \right)^{c(n-\magn{\mathbf{x}})} \right] \nonumber \\
     & \quad \quad \times \prod_{\ell=1}^K \left[ \frac{c!}{2^{cM}} \left(1 - e^{-12 J T} \right)^{c\magn{\mathbf{x}_{\ell}}} \left(1 + e^{-12 J T} \right)^{c(M-\magn{\mathbf{x}_{\ell}})} \right] \nonumber \\
     &= \frac{(c!)^{1+K}}{2^{2cN}} \left[ \left(1 - e^{-12 J T} \right)^{2c\magn{\mathbf{x}}} \left(1 + e^{-12 J T} \right)^{2c(n-\magn{\mathbf{x}})} \right]
     \label{eq:pqcthmoment}
\end{align}
We see that this reduces to Eq. \eqref{eq:harvardoverlapk2bitstring} for $c = 1$. Finally, performing a sum over bitstrings, we obtain:
\begin{equation}
    \sum_{\mathbf{x}} \mathbb{E}[q_U^c(\mathbf{x}) A_U^c(\mathbf{x})] = \frac{(c!)^{1+K}}{2^{2cN}} \left[ \left(1 - e^{-12 J T} \right)^{2c} + \left(1 + e^{-12 J T} \right)^{2c} \right]^n
    \label{eq:pqcthmomentsumbitstrings}
\end{equation}
We see that this reduces to Eq. \eqref{eq:pqoverlapsumbits} for $c = 1$.

We can use these results to compute the variance of the spoofing algorithm of \cite{gao2024limitations}. For $K = n / \log n$, the prefactor in Eq. \eqref{eq:pqcthmomentsumbitstrings} completely dominates the square of the mean, so we have
\begin{equation}
    \mathrm{Var \ XEB}(U,A) \rightarrow \frac{2^{1+K} - 1}{2^{4N}} \approx \frac{2^{1+K}}{2^{4N}}
    \label{eq:varspoof}
\end{equation}
at long times $T \rightarrow \infty$, whereas the variance of the true distribution goes as
\begin{equation}
    \mathrm{Var \ XEB}(U,U) \rightarrow \frac{4! - 4}{2^{4N}} = \frac{20}{2^{4N}}
    \label{eq:vartrue}
\end{equation}

\subsubsection{algorithm of \cite{gao2024limitations} with Strong 1-Design Property}

The version of the algorithm of \cite{gao2024limitations} for Brownian circuits presented above has a 1-design time that scales similarly to the 2-design time. This makes it somewhat different from discrete random circuits, which become 1-designs after a single layer of gates. To bring our model closer to these discrete circuits, we consider a variant of the algorithm with a shorter 1-design time generated by strong single-qubit scrambling. To generate this single-qubit scrambling, we consider adding the following Brownian dynamics in between each two-qubit scrambling unitary:
\begin{equation}
    W_t = \exp{\left( -i \sum_{i \alpha} \mu_i^{\alpha}(t) \sigma_i^{\alpha} \Delta t \right)}
\end{equation}
where the coupling coefficients $\mu_i^{\alpha}(t)$ are Gaussian white-noise variables with zero mean and variance
\begin{equation}
    \mathbb{E} \left[ \mu_i^{\alpha}(t) \mu_{i'}^{\alpha'}(t') \right] = \delta_{ii'} \delta^{\alpha \alpha'} \delta_{tt'} \frac{\mu}{\Delta t}
\end{equation}
Notice that there is no factor of $n$ in the denominator of the variance; this guarantees that the resulting effective Hamiltonian is extensive. These single-qubit rotations are the same across all replicas, including the disjoint circuit, so the expectation value, to lowest order in $\Delta t$, is
\begin{align}
    \mathbb{E} \left[ W_t \otimes W_t^{\mathcal{T}} \otimes W_t \otimes W_t^{\mathcal{T}} \right] &= \prod_{i \alpha} \left( \int d \mu_{i}^{\alpha}(t) \exp\left[ - \frac{\left( \mu_{i}^{\alpha}(t) \right)^2 }{2\mu/\Delta t n} \right] \exp \left[ - i \mu_{i}^{\alpha}(t) \Delta t \left( \sum_{ra} \sigma_{ira}^{\alpha}  \right) \right] \right) \nonumber \\
    &= \exp \left[ - \frac{\mu}{2} \sum_{i \alpha} \left( \sum_{ra} \sigma_{ira}^{\alpha} \right)^2 \Delta t \right].
\end{align}
Notice that there are no minus signs with respect to $a = L,R$, because $W_t^{\mathcal{T}} = W_t$. Stacking these terms up in time, we therefore find an effective Hamiltonian
\begin{equation}
    H^{(1)}_{\mathrm{eff}} = \frac{\mu}{2} \sum_{i \alpha} \left( \sum_{ra} \sigma_{ira}^{\alpha} \right)^2
\end{equation}
which, after some manipulation, can be written as
\begin{align}
    H^{(1)}_{\mathrm{eff}} &= 3 \mu n k + \mu \sum_i \sum_{ra < sb} \vec{\sigma}_{ira} \cdot \vec{\sigma}_{isb} \nonumber \\
    &= 2 \mu \sum_i \vec{S}_i \cdot \vec{S}_i
\end{align}
where $S^{\alpha}_i = \frac{1}{2} \sum_{ra} \sigma^{\alpha}_i$ is the total spin on site $i$. For large $\mu$, this Hamiltonian therefore wants to minimize the total spin, i.e. $\vec{S}_i \cdot \vec{S}_i = 0$.
This effective Hamiltonian is to be added to the 2-qubit scrambling Hamiltonian Eq.~\eqref{eq:harvardeffectivehamtotal}. Ignoring terms that are subleading in $n$, we therefore have the total effective Hamiltonian
\begin{equation}
    H_{\mathrm{eff}} = H_N + H_{AM} + H^{(1)}_{\mathrm{eff}} + \mathcal{O}(A^2)
\end{equation}
We shall keep the $\mathcal{O}(AM)$ term for now, but we shall see in a moment that it is negligible in the saddle-point calculation. We shall also take $\mu \gg J$ such that the 1-design time is much shorter than the 2-design time. This effective Hamiltonian has a single unique ground state given by the familiar `ladder' diagram. Note that this state consists of pairs of spin singlets, so that we have $\vec{S}_i \cdot \vec{S}_i = 0$ as enforced by the new term $H_{\mathrm{eff}}^{(1)}$.

Converting this effective Hamiltonian to a path integral and introducing the mean fields 
\begin{equation}
    G_{\ell}^{rasb} = \frac{1}{M} \sum_{i \in I_{\ell}} \vec{\sigma}_{ira} \cdot \vec{\sigma}_{isb}
\end{equation}
and corresponding Lagrange multipliers $F_{\ell}^{rasb}$, we find the following action (assuming the saddle points are time-independent):
\begin{align}
    I[F_{\ell}^{rasb},G_{\ell}^{rasb}] &= - \sum_{\ell} \sum_{i \in I_{\ell}} \ln \left( \left[(-1)^{x_i}\right]^k \bra{x_i \overline{x}_i \cdots} \exp \left[ - \sum_{ra < sb} F_{\ell}^{rasb} \vec{\sigma}_{ira} \cdot \vec{\sigma}_{isb} T \right] \ket{01 \cdots} \right) \nonumber \\
    &- \frac{J M^2}{2N} \sum_{\ell m} G_{\ell}^{1L1R} G_{m}^{1L1R} T - \frac{J A^2 M^2}{2N} \sum_{\ell} \left( G_{\ell}^{2L2R} \right)^2 T \nonumber \\
    &+ \frac{J A M^2}{2N} \sum_{\ell} \left[ \left( G_{\ell}^{1L2L} \right)^2 + \left( G_{\ell}^{1R2R} \right)^2 - \left( G_{\ell}^{1L2R} \right)^2 - \left( G_{\ell}^{2L1R} \right)^2 \right] T \nonumber \\
    &+ \mu M \sum_{\ell} \sum_{ra < sb} G_{\ell}^{rasb} T - M \sum_{\ell} \sum_{ra < sb} F_{\ell}^{rasb} G_{\ell}^{rasb} T \nonumber \\
    &+ \frac{JT}{2N} \left[ 9 n^2 + 9 A^2 K M^2 \right] + 6 n \mu T
\end{align}
The saddle-point equation of motion $\partial I / \partial G_{\ell}^{rasb} = 0$ for the mean fields gives
\begin{align}
    F_{\ell}^{1L1R} &= \mu - \frac{J M}{n} \sum_{m} G_m^{1L1R} \nonumber \\
    F_{\ell}^{2L2R} &= \mu - \frac{J A^2 M}{n} G_{\ell}^{2L2R} \nonumber \\
    F_{\ell}^{rasb} &= \mu + (-1)^{a+b} \frac{J A M}{n} G_{\ell}^{rasb}, \quad \quad \quad \quad rasb = 1L2L, 1R2R, 1L2R, 2L1R \nonumber \\
    &\approx \mu
    \label{eq:Fellrasbstrong1design}
\end{align}
where in the last line we have dropped the doubly-suppressed term because $J \ll \mu$ and $AM \ll n$. Plugging these equations of motion back into the action, we find:
\begin{align}
    I &= - \sum_{\ell} \sum_{i \in I_{\ell}} \ln \left( \left[(-1)^{x_i}\right]^k \bra{x_i \overline{x}_i \cdots} \exp \left[ - \sum_{ra < sb} F_{\ell}^{rasb} \vec{\sigma}_{ira} \cdot \vec{\sigma}_{isb} T \right] \ket{01 \cdots} \right) \nonumber \\
    &+ \frac{J M^2}{2N} \sum_{\ell m} G_{\ell}^{1L1R} G_{m}^{1L1R} T + \frac{J A^2 M^2}{2N} \sum_{\ell} \left( G_{\ell}^{2L2R} \right)^2 T \nonumber \\
    &- \frac{J A M^2}{2N} \sum_{\ell} \left[ \left( G_{\ell}^{1L2L} \right)^2 + \left( G_{\ell}^{1R2R} \right)^2 - \left( G_{\ell}^{1L2R} \right)^2 - \left( G_{\ell}^{2L1R} \right)^2 \right] T \nonumber \\
    &+ \frac{JT}{2N} \left[ 9 n^2 + 9 A^2 K M^2 \right] + 6 n \mu T
\end{align}

Our next task is to solve for the spectrum of the Hamiltonian
\begin{equation}
    H = \sum_{ra < sb} F_{\ell}^{rasb} \vec{\sigma}_{ra} \cdot \vec{\sigma}_{sb}
\end{equation}
where the coefficients $F_{\ell}^{rasb}$ are given by Eq.~\eqref{eq:Fellrasbstrong1design} and we have dropped the index $i$ for the moment. Fortunately, this Hamiltonian is exactly solvable. To see this, take $a_1 = - \frac{J M}{n} \sum_{m} G_m^{1L1R}$ and $a_2 = - \frac{J A^2 M}{n} G_{\ell}^{2L2R}$ so that we may write
\begin{equation}
    H = \mu \sum_{ra < sb} \vec{\sigma}_{ra} \cdot \vec{\sigma}_{sb} + \sum_r a_r \ \vec{\sigma}_{rL} \cdot \vec{\sigma}_{rR}.
    \label{eq:musinglesiteham}
\end{equation}
Then define the spins
\begin{align}
    \vec{L}_r &= \frac{1}{2} \sum_a \vec{\sigma}_{ra} \nonumber \\
    \vec{S} &= \sum_r \vec{L}_r = \frac{1}{2} \sum_{ra} \vec{\sigma}_{ra} 
\end{align}
and notice that the Hamiltonian can be written entirely in terms of the squares of these spins:
\begin{equation}
    H = 2 \mu \vec{S} \cdot \vec{S} - 3 k \mu + \sum_r a_r \left( 2 \vec{L}_r \cdot \vec{L}_r - 3 \right).
    \label{eq:hamsquarespins}
\end{equation}
These squares all mutually commute:
\begin{align}
    [\vec{L}_r \cdot \vec{L}_r, \vec{L}_s \cdot \vec{L}_s] &= 0 \nonumber \\
    [\vec{L}_r \cdot \vec{L}_r, \vec{S} \cdot \vec{S}] &= 0
\end{align}
so the Hamiltonian spectrum is completely solved. When the single-qubit couplings are much stronger than the two-qubit couplings $\mu \gg J$, the lowest eigenstates of this Hamiltonian have $\vec{S} \cdot \vec{S} = 0$. There are two ways this can happen: either we have $\vec{L}_1 \cdot \vec{L}_1 = \vec{L}_2 \cdot \vec{L}_2 = 0$ (i.e. spin-0 singlets for each replica $r = 1,2$) or $\vec{L}_1 \cdot \vec{L}_1 = \vec{L}_2 \cdot \vec{L}_2 = 2$ (i.e. a pair of spin-1 particles that combine into a total spin-0). These two possibilities give the following energy eigenvalues:
\begin{align}
    E_0 &= -6 \mu - 3 a_1 - 3 a_2 \nonumber \\
    E_1 &= -6 \mu + a_1 + a_2
\end{align}
We therefore have the low-energy spectrum of $H$ as
\begin{equation}
    H = E_0 \ket{\psi_0} \bra{\psi_0} + E_1 \ket{\psi_1} \bra{\psi_1} + \cdots
\end{equation}
where
\begin{align}
    \ket{\psi_0} &= \frac{1}{2} \left( \ket{0101} + \ket{1010} - \ket{0110} - \ket{1001} \right) \nonumber \\
    \ket{\psi_1} &= \frac{1}{2 \sqrt{3}} \left( 2 \ket{0011} + 2 \ket{1100} - \ket{0101} - \ket{1010} - \ket{0110} - \ket{1001} \right).
\end{align}
The fact that the $1R1L$ replicas are entangled with the $2R2L$ replicas in $\ket{\psi_1}$ implies that the distribution $\mathbb{E}[q_U(\mathbf{x}) A_U(\mathbf{x})] $ is not separable, as was the case earlier.

We now plug this result back into the action and find
\begin{align}
    I &= - M \sum_{\ell} \ln \left( \frac{1}{4} e^{-E_{0,\ell} T} + \frac{1}{12} e^{-E_{1,\ell} T} + \cdots \right) \nonumber \\
    &+ \frac{J M^2}{2N} \sum_{\ell m} G_{\ell}^{1L1R} G_{m}^{1L1R} T + \frac{J A^2 M^2}{2N} \sum_{\ell} \left( G_{\ell}^{2L2R} \right)^2 T \nonumber \\
    &+ \frac{JT}{2N} \left[ 9 n^2 + 9 A^2 K M^2 \right] + 6 n \mu T
\end{align}
where the equations of motion set $G^{1L2L} = G^{1R2R} = G^{1L2R} = G^{2L1R} = 0$, and we have used the overlaps
\begin{align}
    \bracket{0101}{\psi_0} &= \bracket{1010}{\psi_0} = \frac{1}{2} \nonumber \\
    \bracket{0101}{\psi_1} &= \bracket{1010}{\psi_1} = -\frac{1}{2 \sqrt{3}} 
\end{align}
and the energies are
\begin{align}
    E_{0,\ell} &= -6 \mu + 3 \frac{J M}{n} \sum_{m} G_m^{1L1R} + 3 \frac{J A^2 M}{n} G_{\ell}^{2L2R} \nonumber \\
    E_{1,\ell} &= -6 \mu - \frac{J M}{n} \sum_{m} G_m^{1L1R} - \frac{J A^2 M}{n} G_{\ell}^{2L2R}.
\end{align}
Taking derivatives $\partial I / \partial G_{\ell}^{rLrR}$, we find the equations of motion
\begin{equation}
    G_{\ell}^{1L1R} = G_{\ell}^{2L2R} = -3 + \cdots
\end{equation}
where $\cdots$ indicates time-dependent terms falling off as $e^{-(\#) J T}$. Plugging these solutions back into the action and taking $A = \sqrt{K}$, we finally find
\begin{align}
    \mathbb{E}[q_U(\mathbf{x}) A_U(\mathbf{x})] = e^{- I^*} &= \frac{1}{2^{2N}} \left( 1 + \frac{1}{3} e^{-24 J T} + \cdots \right)^n \nonumber \\
    &\approx \frac{1}{2^{2N}} \left( 1 + \frac{1}{3} n e^{-24 J T} + \cdots \right)
    \label{eq:pq1designexpval}
\end{align}
Notice that this result is independent of the bitstring $\mathbf{x}$, which makes sense, since we are strongly scrambling each qubit. We therefore find an \textsf{XEB} score of
\begin{equation}
    \mathrm{XEB} \approx \frac{1}{3} n e^{-24 JT}
\end{equation}

\subsubsection{Variance of algorithm of \cite{gao2024limitations} with Strong 1-Design Property}

We may calculate the variance $\mathbb{E}[q_U^2(\mathbf{x}) A_U^2(\mathbf{x})]$ using similar techniques. Since the derivation is not much more difficult for arbitrary moments, we shall calculate the moment $\mathbb{E}[q_U^c(\mathbf{x}) A_U^c(\mathbf{x})]$ for any integer $c \geq 1$. Anticipating that the $\mathcal{O}(AM)$ terms will again be negligible, we take our effective Hamiltonian to be, starting from Eq.~\eqref{eq:harvardalgeffhamarbc},
\begin{equation}
    H_{\mathrm{eff}} = H_0^{(c)} + \sum_{\ell} H_{\ell}^{(c)} + H_{\mathrm{eff}}^{(1)}
\end{equation}
Converting this into a path integral and introducing the fields $F_{\ell}^{rasb},G_{\ell}^{rasb}$, we find the action
\begin{align}
    I[F_{\ell}^{rasb},G_{\ell}^{rasb}] &= - \sum_{\ell} \sum_{i \in I_{\ell}} \ln \left( \bra{x_i \overline{x}_i \cdots} \exp \left[ - \sum_{ra < sb} F_{\ell}^{rasb} \vec{\sigma}_{ira} \cdot \vec{\sigma}_{isb} T \right] \ket{01 \cdots} \right) \nonumber \\
    &+ \frac{J M^2}{2N} \sum_{\ell m} \sum_{ra < sb}^{r,s \leq c} (-1)^{a+b} G_{\ell}^{rasb} G_{m}^{rasb} T + \frac{J A^2 M^2}{2N} \sum_{\ell} \sum_{ra < sb}^{c < r,s \leq 2c} (-1)^{a+b} \left( G_{\ell}^{rasb} \right)^2 T \nonumber \\
    &+ \mu M \sum_{\ell} \sum_{ra < sb} G_{\ell}^{rasb} T - M \sum_{\ell} \sum_{ra < sb} F_{\ell}^{rasb} G_{\ell}^{rasb} T \nonumber \\
    &+ \frac{JT}{2N} \left[ 9 n^2 c + 9 A^2 K M^2 c \right] + 6 n \mu T c 
    \label{eq:harvardvar1designaction}
\end{align}
The saddle-point equations of motion $\partial I / \partial G_{\ell}^{rasb} = 0$ for the mean fields gives
\begin{equation}
    F_{\ell}^{rasb} = \begin{cases}
        \mu + (-1)^{a+b} \frac{J M}{n} \sum_{m} G_m^{rasb} & r,s \leq c \\
        \mu + (-1)^{a+b} \frac{J A^2 M}{n} G_{\ell}^{rasb} & c < r,s \leq 2c \\
        \mu & \mathrm{otherwise}
    \end{cases}
\end{equation}
Plugging these solutions back into the action we obtain
\begin{align}
    I &= - \sum_{\ell} \sum_{i \in I_{\ell}} \ln \left( \bra{x_i \overline{x}_i \cdots} \exp \left[ - \sum_{ra < sb} F_{\ell}^{rasb} \vec{\sigma}_{ira} \cdot \vec{\sigma}_{isb} T \right] \ket{01 \cdots} \right) \nonumber \\
    &- \frac{J M^2}{2N} \sum_{\ell m} \sum_{ra < sb}^{r,s \leq c} (-1)^{a+b} G_{\ell}^{rasb} G_{m}^{rasb} T - \frac{J A^2 M^2}{2N} \sum_{\ell} \sum_{ra < sb}^{c < r,s \leq 2c} (-1)^{a+b} \left( G_{\ell}^{rasb} \right)^2 T \nonumber \\
    &+ \frac{JT}{2N} \left[ 9 n^2 c + 9 A^2 K M^2 c \right] + 6 n \mu T c
\end{align}
Again, inspired by the expected ground states, we make the ansatz that $G_{\ell}^{rasb} = 0$ unless $r = s$ and $ab = LR$, which should correspond to the familiar `ladder' diagram. Making this ansatz, the action reduces to
\begin{align}
    I &= - \sum_{\ell} \sum_{i \in I_{\ell}} \ln \left( \bra{x_i \overline{x}_i \cdots} \exp \left[ - \sum_{ra < sb} F_{\ell}^{rasb} \vec{\sigma}_{ira} \cdot \vec{\sigma}_{isb} T \right] \ket{01 \cdots} \right) \nonumber \\
    &+ \frac{J M^2}{2N} \sum_{\ell m} \sum_{r \leq c} G_{\ell}^{rLrR} G_{m}^{rLrR} T + \frac{J A^2 M^2}{2N} \sum_{\ell} \sum_{c < r \leq 2c} \left( G_{\ell}^{rLrR} \right)^2 T \nonumber \\
    &+ \frac{JT}{2N} \left[ 9 n^2 c + 9 A^2 K M^2 c \right] + 6 n \mu T c
\end{align}
with Lagrange multipliers
\begin{equation}
    F_{\ell}^{rLrR} = \begin{cases}
        \mu - \frac{J M}{n} \sum_{m} G_m^{rLrR} & r \leq c \\
        \mu - \frac{J A^2 M}{n} G_{\ell}^{rLrR} & c < r \leq 2c \\
    \end{cases}
\end{equation}
with all others $F_{\ell}^{rasb} = \mu$. We first show that this ansatz leads to a self-consistent saddle-point before returning to discuss other possible saddle-points.

In this case, we must again solve for the spectrum of Eq.~\eqref{eq:musinglesiteham}, but now $r$ ranges from $r = 1,\ldots,2c$, and
\begin{equation}
    a_r = \begin{cases} 
      - \frac{J M}{n} \sum_{m} G_m^{rLrR} & r \leq c \\
      - \frac{J A^2 M}{n} G_{\ell}^{rLrR} & c < r \leq 2c
   \end{cases}
\end{equation}
As above, we can rewrite this Hamiltonian using total spins $\vec{L}_r, \vec{S}$ to obtain Eq.~\eqref{eq:hamsquarespins}, where $k = 2c$. When $\mu \gg J$, the lowest energy eigenstate $\ket{\psi_0}$ is a collection of spin singlets, where $\vec{L}_r \cdot \vec{L}_r = \vec{S} \cdot \vec{S} = 0$ and
\begin{equation}
    E_0 = -6 c \mu - 3 \sum_r a_r
\end{equation}
The overlap with the boundary states is
\begin{equation}
    \bracket{0101 \cdots}{\psi_0} = \bracket{1010 \cdots}{\psi_0} = \frac{1}{2^c}.
\end{equation}
The next excited states have pairs of spin-1 particles on replicas $u,v$: $\vec{L}_u \cdot \vec{L}_u = \vec{L}_v \cdot \vec{L}_v = 2$, and singlets on all other replicas $\vec{L}_r \cdot \vec{L}_r = 0 \ \forall \ r \neq u,v$. There are a total of $c(2c - 1)$ of these excited states $\ket{\psi_{u,v}}$, with energies
\begin{equation}
    E_{u,v} = -6 c \mu + a_u + a_v - 3 \sum_{r \neq u,v} a_r
\end{equation}
for $1 \leq u < v \leq 2c$. The overlap with the boundary states is
\begin{equation}
    \bracket{0101 \cdots}{\psi_{u,v}} = \bracket{1010 \cdots}{\psi_{u,v}} = \frac{1}{2^{c-1}} \cdot \frac{1}{2 \sqrt{3}}.
\end{equation}
We therefore have the low-energy spectrum of $H$ as
\begin{equation}
    H = E_0 \ket{\psi_0} \bra{\psi_0} + \sum_{u,v} E_{u,v} \ket{\psi_{u,v}} \bra{\psi_{u,v}} + \cdots
\end{equation}

We now plug this result back into the action to find
\begin{align}
    I &= - M \sum_{\ell} \ln \left( \frac{1}{2^{2c}} e^{-E_{0,\ell} T} + \sum_{u,v} \frac{1}{2^{2c}}\frac{1}{3} e^{-E_{u,v,\ell} T} + \cdots \right) \nonumber \\
    &+\frac{J M^2}{2N} \sum_{\ell m} \sum_{r \leq c} G_{\ell}^{rLrR} G_{m}^{rLrR} T + \frac{J A^2 M^2}{2N} \sum_{\ell} \sum_{c < r \leq 2c} \left( G_{\ell}^{rLrR} \right)^2 T \nonumber \\
    &+ \frac{J T}{2N} \left[ 9 n^2 c + 9 A^2 K M^2 c \right] + 6 n \mu T c
\end{align}
Taking derivatives $\partial I / \partial G_{\ell}^{rLrR}$ we find
\begin{equation}
    G_{\ell}^{rLrR} = -3 + \cdots \quad \quad \forall \ r
\end{equation}
where $\cdots$ represents time-dependent terms falling off as $e^{-(\#) J T}$. Plugging these solutions back into the action and taking $A = \sqrt{K}$, we find
\begin{align}
    I^* &= - n \ln \left( \frac{1}{2^{2c}} e^{6 c \mu T + 18 c J T} + \frac{1}{2^{2c}} \frac{c(2c-1)}{3} e^{6 c \mu T + 18 (c-1) J T - 6 J T} \right) \nonumber \\
    &+ \frac{J T}{n} \left[ 9 n^2 c + 9 A^2 K M^2 c \right] + 6 n \mu T c
\end{align}
and hence
\begin{equation}
    e^{-I^*} = \frac{1}{2^{2cN}} \left( 1 + \frac{c(2c-1)}{3} e^{-24 J T} + \cdots \right)^n
\end{equation}
which reproduces Eq.~\eqref{eq:pq1designexpval} for $c = 1$.

We are not yet finished, because we must consider other possible saddle points. Starting from the `ladder' saddle point, we can find other saddle points by permuting the $R$ replicas amongst themselves. For example, choosing $G_{\ell}^{rasb} = 0$ unless $s = \sigma(r)$ and $ab = LR$, where $\sigma$ is any permutation among the replicas $s = 1,\ldots,c$ leads to an action with the same form as the `ladder' action. Similarly, choosing $G_{\ell}^{rasb} = 0$ unless $s = \sigma_{\ell}(r)$ and $ab = LR$, where the $\sigma_{\ell}$ for each $\ell = 1,\ldots,K$ are independent permutations among the replicas $s = c+1,\ldots,2c$ also leads to the same action as the `ladder' action. To see this explicitly, let us choose a particular set of permutations $\sigma,\sigma_{\ell}$ and make the ansatz just described. In this case, the action reduces to
\begin{align}
    I &= - \sum_{\ell} \sum_{i \in I_{\ell}} \ln \left( \bra{x_i \overline{x}_i \cdots} \exp \left[ - \sum_{ra < sb} F_{\ell}^{rasb} \vec{\sigma}_{ira} \cdot \vec{\sigma}_{isb} T \right] \ket{01 \cdots} \right) \nonumber \\
    &+ \frac{J M^2}{2N} \sum_{\ell m} \sum_{r \leq c} G_{\ell}^{rL\sigma(r)R} G_{m}^{rL\sigma(r)R} T + \frac{J A^2 M^2}{2N} \sum_{\ell} \sum_{c < r \leq 2c} \left( G_{\ell}^{rL\sigma_{\ell}(r)R} \right)^2 T \nonumber \\
    &+ \frac{JT}{2N} \left[ 9 n^2 c + 9 A^2 K M^2 c \right] + 6 n \mu T c
\end{align}
with Lagrange multipliers
\begin{align}
    F_{\ell}^{rL\sigma(r)R} &=
        \mu - \frac{J M}{n} \sum_{m} G_m^{rL\sigma(r)R} \quad \quad \quad r \leq c \nonumber \\
    F_{\ell}^{rL\sigma_{\ell}(r)R} &=
        \mu - \frac{J A^2 M}{n} G_{\ell}^{rL\sigma_{\ell}(r)R} \quad \quad c < r \leq 2c \\
\end{align}
with all others $F_{\ell}^{rasb} = \mu$. Then we must solve for the spectrum of the Hamiltonian
\begin{equation}
    H_{\ell} = \mu \sum_{ra < sb} \vec{\sigma}_{ra} \cdot \vec{\sigma}_{sb} + \sum_{r \leq c} a_r \ \vec{\sigma}_{rL} \cdot \vec{\sigma}_{\sigma(r)R} + \sum_{c < r \leq 2c} a_r \vec{\sigma}_{rL} \cdot \vec{\sigma}_{\sigma_{\ell}(r)R}.
\end{equation}
for each $\ell$. Although the form of this Hamiltonian apears to be slightly more complicated (because we had to separate out the terms $r \leq c$ and $c < r \leq 2c$), it has exactly the same spectrum as the previous Hamiltonian. Moreover, the resulting action has the same form as previously, with $G_{\ell}^{rLrR} \rightarrow G_{\ell}^{rL\sigma(r)R}$ for $r \leq c$ and $G_{\ell}^{rLrR} \rightarrow G_{\ell}^{rL\sigma_{\ell}(r)R}$ for $c < r \leq 2c$. So we will get exactly the same action as previously.
Because the permutations $\sigma,\sigma_{\ell}$ are all independent of each other, we end up with $(c!)^{1+K}$ total saddle points, each with the same action. We therefore conclude
\begin{equation}
    \mathbb{E}[q_U^c(\mathbf{x}) A_U^c(\mathbf{x})] = (c!)^{1+K} e^{- I^*} = \frac{(c!)^{1+K}}{2^{2cN}} \left( 1 + \frac{c(2c-1)}{3} e^{-24 J T} + \cdots \right)^n
\end{equation}

\end{document}